\numberwithin{equation}{section}
\let\c@table\c@figure
\numberwithin{figure}{section}
\numberwithin{table}{section}
\newcommand{\refpart}[1]{{\it (#1)}}  
\newcommand{\s}{\;\,}
\newcommand{\ff}{\sf}
\newtheorem{theorem}{Theorem}[section]
\newtheorem{lemma}[theorem]{Lemma}
\newtheorem{definition}[theorem]{Definition} 
\newtheorem{remark}[theorem]{Remark} 
\newcommand{\ZZ}{\mathbb{Z}}
\newcommand{\QQ}{\mathbb{Q}}
\title{\bf Conway's game ``Life" perturbed}
\author{Raimundas Vidunas, Arnas Vaicekauskas\\
\em Vilnius Univeristy, Institute of Applide Mathematics}
\newcommand{\expp}[1]{\mbox{\bf e}^{#1}}
\newcommand{\epsi}{\varepsilon}
\newcommand{\lifefont}{\sf}
\begin{document}

\date{\empty}

\maketitle

\begin{abstract}
A stochastic modification of Conway's cellular automaton ``Life'' is introduced here. 
Any cell could be perturbed spontaneously to the opposite (dead or alive) state at any iteration with a very low probability. 
This probability is assumed to be so low that perturbations affect most sensibly large patterns only in single cells
after they settle into stable, oscillating or moving configurations. This defines a Markov process on the set of stabilised patterns,
with unboundedly growing or overly large patterns represented by a general unspecified state of ``being huge''.
This stochastic model should approximate emergence of complexity
and live processes yet more interestingly than the original Conway's game.
This paper illustrates 
the proposed Markovian dynamics on the infinite ``Life'' grid with a limited set of most frequent patterns.
Concrete results are presented for this new game on small square toruses, of size up to $10\times 10$ cells. 
\end{abstract}
 
\section{Introduction}

John Conway's famed game ``Life'' is by far the best known cellular automaton.
It was first published in Martin Gardner's Mathematical Games column in the October 1970 issue of Scientific American \cite{sciam70}.
Its simple rules often create unforeseeably complicated evolution
of live and dead cells on a rectangular grid where each cell has eight neighbours.
In each generation a dead cell becomes alive if it has precisely three alive neighbouring cells, 
and a living cell remains alive if and only if it has two or three alive neighbours.

As typical for discrete deterministic dynamical systems, most small-scale initial configurations of live cells ``Life'' evolve
to a stable pattern (called {\em still-life} 
\cite[{\sf Still\_life}]{Wiki}) or a periodically repeating pattern (an {\em oscillator} \cite[{\sf Oscillator}]{Wiki}).
Some of the most common stabilised patterns are shown in Figure \ref{fig1}. 
More intriguing are moving patterns --- called {\em spaceships} \cite[{\sf Spaceship}]{Wiki}, including Glider in Figure \ref{fig1} --- 
and infinitely growing patterns: {\em glider guns, puffer trains, rakes, breeders}. 
The enticing appeal of Conway's cellular automaton lies in the complexity of behaviour that arises from its simple rules.
Often the evolution in ``Life" is unpredictable and chaotic for many iterations \cite{Adamatzky10}, \cite{preiter98}, \cite{socritic94}. 
Conway's game can model universal computation on Turing's machine \cite{rendell02}, \cite[Ch.~26]{Adamatzky10}.
In particular, ``Life'' can simulate itself on a larger scale \cite[{\sf OTCA\_metapixel}]{Wiki}. 
Avid research of Conway's game continues. It was proved recently that oscillators of any period exist \cite{periodic23}.
Since 2010 several spaceships travelling in oblique directions
--- including following the move of chess knight --- were found  \cite[{\sf Oblique\_spaceship}]{Wiki}.
Stochastic \cite{stochlife23} and continuous \cite{lenia20} versions of ``Life'' have been proposed.

\begin{figure}[tpb]
\[
\begin{picture}(382,200)(0,-10)
\multiput(-3,103)(10,0){5}{\line(0,1){54}}  
\multiput(-10,110)(0,10){5}{\line(1,0){54}} 
\multiput(58,102)(10,0){6}{\line(0,1){56}} 
\multiput(55,105)(0,10){6}{\line(1,0){56}} 
\multiput(129,102)(10,0){6}{\line(0,1){56}} 
\multiput(126,105)(0,10){6}{\line(1,0){56}} 
\multiput(197,102)(10,0){6}{\line(0,1){56}} 
\multiput(194,105)(0,10){6}{\line(1,0){56}} 
\multiput(265,102)(10,0){7}{\line(0,1){56}} 
\multiput(262,105)(0,10){6}{\line(1,0){66}} 
\multiput(343,102)(10,0){6}{\line(0,1){56}} 
\multiput(340,105)(0,10){6}{\line(1,0){56}} 
\multiput(4,2)(10,0){7}{\line(0,1){66}}  
\multiput(1,5)(0,10){7}{\line(1,0){66}} 
\multiput(86,2)(10,0){7}{\line(0,1){66}}  
\multiput(83,5)(0,10){7}{\line(1,0){66}} 
\multiput(168,8)(10,0){7}{\line(0,1){54}}  
\multiput(165,15)(0,10){5}{\line(1,0){66}} 
\multiput(336,3)(10,0){6}{\line(0,1){64}}  
\multiput(329,10)(0,10){6}{\line(1,0){64}} 
\multiput(250,2)(10,0){7}{\line(0,1){66}}  
\multiput(247,5)(0,10){7}{\line(1,0){66}} 
 \thicklines
\put(12,125){\circle{8}} \put(12,135){\circle{8}} 
\put(22,135){\circle{8}} \put(22,125){\circle{8}} 
\put(5,90){\lifefont Block}  
\put(73,120){\circle{8}} \put(83,120){\circle{8}} 
\put(73,130){\circle{8}} \put(83,140){\circle{8}} 
\put(93,130){\circle{8}} \put(74,90){\lifefont Boat}  
\put(144,120){\circle{8}} \put(154,120){\circle{8}} 
\put(144,130){\circle{8}} \put(154,140){\circle{8}} 
\put(164,130){\circle{8}} \put(164,140){\circle{8}} 
\put(145,90){\lifefont Ship}  \put(279,90){\lifefont Beehive} 
 \put(351,90){\lifefont Blinker}
\multiput(358,130)(10,0){3}{\circle{8}} 
\put(212,130){\circle{8}} \put(222,120){\circle{8}} 
\put(232,130){\circle{8}} \put(222,140){\circle{8}} 
\put(290,120){\circle{8}} \put(300,120){\circle{8}} 
\put(280,130){\circle{8}} \put(290,140){\circle{8}} 
\put(310,130){\circle{8}} \put(300,140){\circle{8}} 
\put(213,90){\lifefont Tub}   \put(104,-10){\lifefont Pond}
\put(19,30){\circle{8}}  \put(19,40){\circle{8}}  
\put(29,20){\circle{8}}  \put(29,50){\circle{8}} 
\put(39,20){\circle{8}}  \put(39,40){\circle{8}} 
\put(49,30){\circle{8}}  \put(24,-10){\lifefont Loaf}  
\put(101,30){\circle{8}}  \put(101,40){\circle{8}} 
\put(111,20){\circle{8}}  \put(111,50){\circle{8}} 
\put(121,20){\circle{8}}  \put(121,50){\circle{8}} 
\put(131,30){\circle{8}}  \put(131,40){\circle{8}} 
\put(183,30){\circle{8}} \put(183,40){\circle{8}} 
\put(193,30){\circle{8}} \put(203,40){\circle{8}} 
\put(213,30){\circle{8}} \put(213,40){\circle{8}} 
\put(185,-5){\lifefont Snake}
\put(265,20){\circle{8}} \put(265,30){\circle{8}} 
\put(275,20){\circle{8}} \put(275,40){\circle{8}} 
\put(285,40){\circle{8}} \put(295,40){\circle{8}} 
\put(295,50){\circle{8}} \put(269,-10){\lifefont Eater}
\multiput(351,25)(10,0){3}{\circle{8}} \put(348,-10){\lifefont Glider}
\put(371,35){\circle{8}}  \put(361,45){\circle{8}}  
\end{picture}
\]
\caption{Some of the most common stabilised patterns in ``Life", including Blinker (an oscillator of period 2) and Glider (a moving spaceship).} 
\label{fig1}
\end{figure}
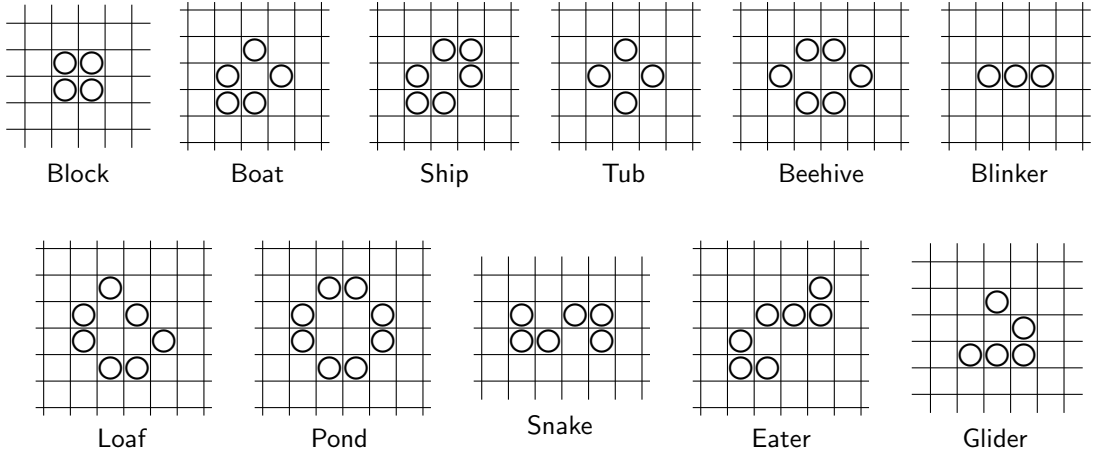

This paper proposes an unorthodox stochastic modification of Conway's game ``Life". 
Each cell is still either dead or alive at each moment,
but it could be perturbed spontaneously to the opposite state at any iteration with a very low probability $\epsi$. 
This probability is assumed to be so low that it occurs actually only when most patterns are stabilised
into a still life or an oscillating or moving configuration --- and just one cell is perturbed. 
Infinitely growing patterns would be perturbed in the process of their growth,
but it could be assumed that they would not would never return to a delimited configuration.
Just as the empty space state would revert back to itself by the assumed regime of perturbations,
vastly grown patterns would remain in the unspecific state of {\em Being Huge}.
We may include any unbounded ``soup'' of live cells and (more or less) familiar patterns into this state.

The assumed regime of perturbations defines a Markov process on the set of stabilised patterns
--- of limited size, in practice --- with the empty space and Being Huge as two absorbing states.
We choose to identify the patterns up to the translation and dihedral $D_4$ symmetries of the rectangular grid of cells.
To keep the size of the Markov chain manageable, we may fix a finite set of $N_0$ patterns 
(including the empty space) as not Being Huge.

We choose a continuous rather than discrete model of the Markov process, where the number of cells
whose perturbation leads to any other stabilised pattern (or an absorbing state) translates directly 
to the decay rate to that other stabilised pattern or state. 
The Markov process is represented by a linear matrix differential equation 
\begin{equation} \label{eq:diffs}
\frac{d}{dt}\vec{y}(t) = M\,\vec{y}(t),
\end{equation}
where $\vec{y}(t)$ is the evolving probability distribution vector over the possible patterns and states,
and $M$ is the transition matrix of decay rates. Its diagonal entries are negative numbers that balance
the sum of decay rates to the other states, so that the column sums are zero.
The size of $M$ is $(N_0+1)\times(N_0+1)$. %
Most often $M$ has distinct eigenvalues. Then \cite[\S 3.2.2]{Matrix85} the general solution is given by
\begin{equation} \label{eq:m0gensol}
\vec{y}(t)=\sum_{k=0}^{N_0} C_k\,\vec{y}_k\,\expp{\lambda_k t},
\end{equation}
where $\lambda_k$ are the eigenvalues, $y_k$ are the corresponding eigenvectors,
and $C_k$ are free constants. (We may assign the number $k=0$ to one of the absorbing states.)
The two absorbing states give two elementary eigenvectors with the eigenvalue $0$.
Other eigenvalues would have negative real part, 
but there might be other independent eigenvectors with the eigenvalue $0$.
That would mean that there is an {\em immortal} set of non-empty patterns that decay only to each other.
In particular, they do not die out to empty space, nor blow up to Being Huge.
A fascinating instance would be a single non-empty pattern that does not decay to anything else.
Existence of immortal patterns or sets is the foremost challenge of the proposed modification of Conway's ``Life''.

We will not go far with perturbing 
``Life'' on the infinite grid. Section \ref{sec:infinite} investigates the outlined model 
with perturbations of merely $N_0=10$ patterns 
as a principled demonstration.The perturbed game can be analysed more thoroughly
after making the grid space finite. Most customarily, a finite rectangular region is wrapped to a {\em torus}
in the standard topological way. Section \ref{sec:small} investigates the perturbations of ``Life'' 
on the $5\times 5$, $6\times 6$ and $7\times 7$ toruses, 
while Section \ref{sec:large} examines the $8\times 8$, $9\times 9$ and $10\times 10$ toruses. 
We count the patterns obtainable by consecutive perturbations from the simple Block pattern,
and compute the leading eigenvalues and eigenvectors of the Markovian processes.
As the number of patterns on toruses is finite, there is no Being Huge state. 
The empty state is still an absorbing state, but no other immortal patterns or sets were found.
Nevertheless, the non-zero complex eigenvalue closest to $0$ is significant.
Its eigenvector defines the asymptotic probability distribution of non-empty patterns
while the irreversible decay to the empty state did not take place. 

Section \ref{sec:compute} described our algorithmic methods and encountered computational issues.
The appendix section provides supplementary information about the perturbed ``Life'' on the considered toruses,
mostly in graphical form or in tables. 

Section \ref{sec:interpret} offers philosophical considerations for interpreting the stochastic 
model and arising of complexity in general. We suggest that stochastic modelling is more relevant to 
organic processes than assumed deterministic laws, as the character of organic and complex phenomena 
is defined by statistical tendencies and resilience rather than particular underlying dynamics.
We argue that emergence of particular forms of complexity 
from promising dynamical rules must be a G\"odelian or (practically) uncomputable problem in general, 
meaning that it is systematically undecidable without stumbling upon affirmative instances 
or fairly exhaustive (infinite or combinatorially exploding) search. We compare the leading eigenvectors 
(still dominated by the zero eigenvalue of the empty space) of our models 
to Friston's {\em Free Energy Principle} \cite{Friston10}, \cite{SchrodingerFE}
that seeks to explain cognitive and organic processes in terms of minimising experiential surprise.


\section{Perturbed ``Life" on the infinite space}
\label{sec:infinite}

Analysis on an infinite board will always be limited, because complexity of stabilisation of perturbed patterns 
is out of control even after perturbation of the most simple and frequent patterns.
But we venture two simple models, not so much to offer substantial results for the main version of Conway's ``Life'',
but to introduce our routine and key observations. 


To set a grounded ambience, let us assume that the very small possibility is $\epsi=10^{-120}$ briefly. 
If the size of a cell is of the Planck length $l_P\approx 1.6\cdot 10^{-35}$ meters
along the sides, and a generation lasts $t_P\approx 5.4\cdot 10^{-44}$ seconds,  
there would be a perturbation every $l_P^2t_P/\epsi\approx 1.4\cdot 10^7$ seconds (about 160 days) per square meter.
The diameter of Milky way is $d_{MW}\approx 8.27\cdot 10^{20}$ meters.
Hence a perturbation would happen in Milky Way every $4l_P^2/(\pi d_{MW}^2)\approx 477\cdot 10^{6}$ 
generations of ``Life''. On the other hand, we will encounter probabilities of the order $10^{-131}$ 
for occurrence of some patterns in our models; see part \refpart{c}, the last column in summary Table \ref{tb:summary}.

\subsection{A starting model}
\label{sec:start}

The first eight patterns in Figure \ref{fig1} and Glider are the only local patterns of the game ``Life''
that occur with the relative frequency better than 1 in 100; 
see \cite[\ff List\s{}of\s{}common\s{}still\s{}lifes, Frequency\s{}class]{Wiki}.
These most frequent configurations comprise 99\% of connected objects in {\em ashes} 
of random patterns \cite[\ff Soup\#Ash, Natural]{Wiki}.
Let us consider the minimal perturbations of these nine 
patterns. Stabilized outcomes of these perturbations are graphically depicted in Figure \ref{fg:inflife}.
The resulting stabilized patterns are shown 
in the corresponding cells by single letter labels. 
The transitions between the same nine patterns are depicted additionally by arrows and transition frequencies. 
The other transitions are explained in the figure caption; they mostly refer to Appendix Figure \ref{fg:pipatterns}.

\begin{figure}
\centering
\begin{picture}(272,430)(50,-5)
\input{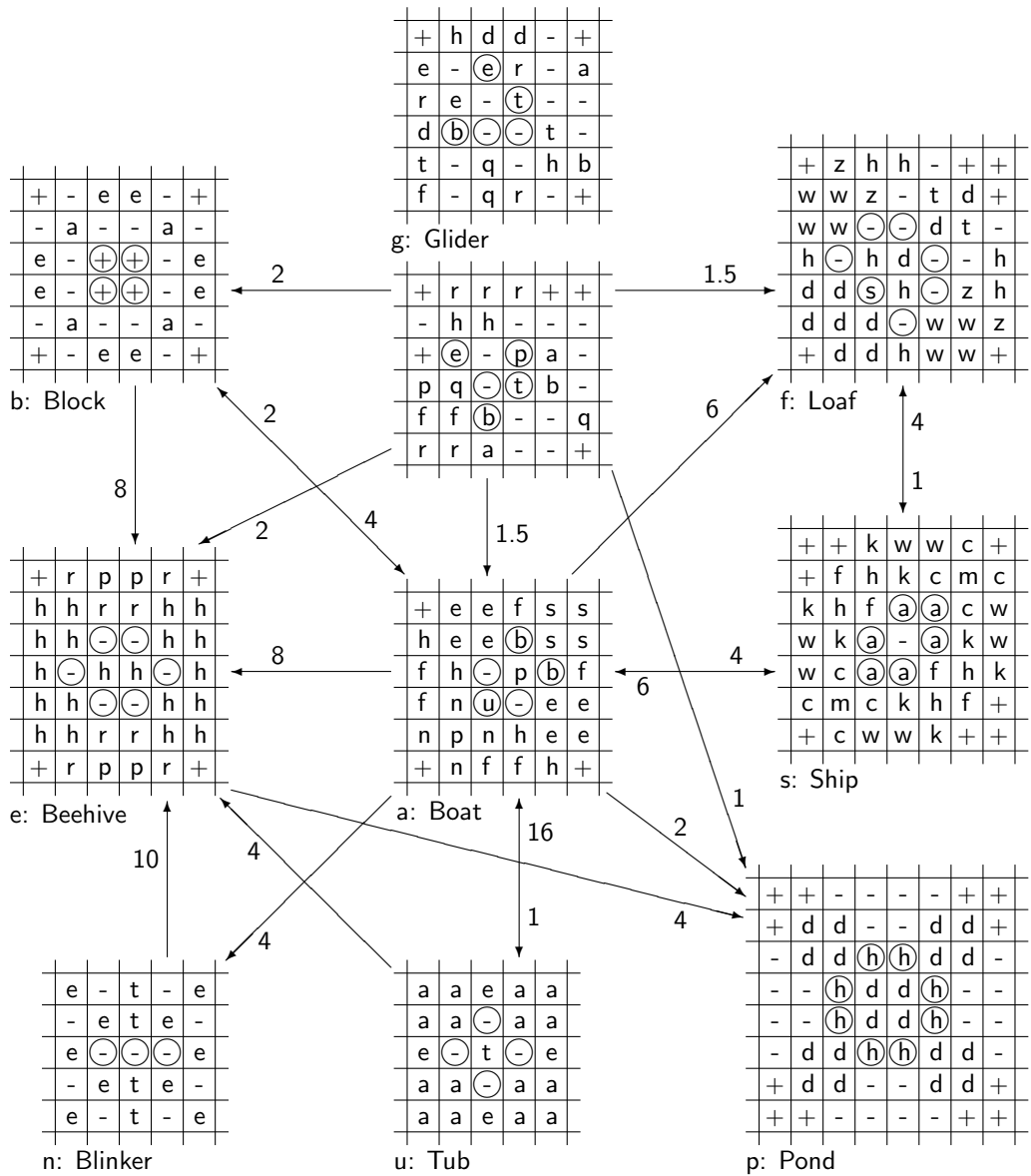}
\end{picture}
\caption{Transitions between the most common stabilized patterns. 
The transitions {\sf +, --} are to itself or the empty space, respectively.
The transitions {\sf c, d, h, k, m, q, t, z} result in the patterns of Appendix Figure \ref{fg:pipatterns}. 
Also, {\sf r} denotes the evolution sequence of R-pentamino of Figure \ref{fig1}\refpart{c}; 
{\sf w} denotes the transition to a block and a moving glider (of the Wing sequence \cite[\sf Wing]{Wiki});
and {\sf x} denotes the transition to the empty space.}
\label{fg:inflife}
\end{figure}

\begin{figure} \centering
\begin{picture}(306,404)(0,-24)
\put(48,78){\circle*{5}}  \put(48,84){\circle*{5}}
\put(48,90){\circle*{5}}  \put(60,66){\circle*{5}}
\put(66,66){\circle*{5}}  \put(72,66){\circle*{5}}
\put(78,0){\circle*{5}}  \put(78,6){\circle*{5}}
\put(78,150){\circle*{5}}  \put(78,156){\circle*{5}}
\put(84,-6){\circle*{5}}  \put(84,12){\circle*{5}}
\put(84,150){\circle*{5}}  \put(84,156){\circle*{5}}
\put(90,-6){\circle*{5}}  \put(90,12){\circle*{5}}
\put(96,0){\circle*{5}}  \put(96,6){\circle*{5}}
\put(102,138){\circle*{5}}  \put(102,144){\circle*{5}}
\put(108,138){\circle*{5}}  \put(108,144){\circle*{5}}
\put(150,0){\circle*{5}}  \put(150,6){\circle*{5}}
\put(156,0){\circle*{5}}  \put(156,6){\circle*{5}}
\put(174,168){\circle*{5}}  \put(180,168){\circle*{5}}
\put(186,168){\circle*{5}}  \put(204,0){\circle*{5}}
\put(204,6){\circle*{5}}  \put(210,0){\circle*{5}}
\put(210,6){\circle*{5}}  \put(252,138){\circle*{5}}
\put(252,144){\circle*{5}}  \put(258,138){\circle*{5}}
\put(258,144){\circle*{5}}  \put(264,0){\circle*{5}}
\put(264,6){\circle*{5}}  \put(270,-6){\circle*{5}}
\put(270,12){\circle*{5}}  \put(276,-6){\circle*{5}}
\put(276,12){\circle*{5}}  \put(276,150){\circle*{5}}
\put(276,156){\circle*{5}}  \put(282,0){\circle*{5}}
\put(282,6){\circle*{5}}  \put(282,150){\circle*{5}}
\put(282,156){\circle*{5}}  \put(288,66){\circle*{5}}
\put(294,66){\circle*{5}}  \put(300,66){\circle*{5}}
\put(312,78){\circle*{5}}  \put(312,84){\circle*{5}}
\put(312,90){\circle*{5}}  \multiput(45,-12)(30,0){10}{\line(0,1){186}}
\multiput(42,-9)(0,30){7}{\line(1,0){276}}
\put(43,-21){\lifefont q: $\pi$-heptamino ash}
\put(-23,51){\circle*{5}}  \put(-17,45){\circle*{5}}
\put(-17,57){\circle*{5}}  \put(-11,45){\circle*{5}}
\put(-11,57){\circle*{5}}  \put(-5,51){\circle*{5}}
\put(13,15){\circle*{5}}  \put(13,21){\circle*{5}}
\put(19,9){\circle*{5}}  \put(19,27){\circle*{5}}
\put(25,15){\circle*{5}}  \put(25,21){\circle*{5}}
\multiput(-26,3)(30,0){3}{\line(0,1){66}}
\multiput(-29,6)(0,30){3}{\line(1,0){66}}
\put(-30,-6){\lifefont d:\,teardrop bees}
\put(-23,128){\circle*{5}}  \put(-23,134){\circle*{5}}
\put(-17,122){\circle*{5}}  \put(-17,134){\circle*{5}}
\put(-11,122){\circle*{5}}  \put(-11,128){\circle*{5}}
\put(-5,110){\circle*{5}}  \put(-5,116){\circle*{5}}
\put(1,104){\circle*{5}}  \put(1,116){\circle*{5}}
\put(1,152){\circle*{5}}  \put(1,158){\circle*{5}}
\put(7,104){\circle*{5}}  \put(7,110){\circle*{5}}
\put(7,146){\circle*{5}}  \put(7,158){\circle*{5}}
\put(13,146){\circle*{5}}  \put(13,152){\circle*{5}}
\put(19,134){\circle*{5}}  \put(19,140){\circle*{5}}
\put(25,128){\circle*{5}}  \put(25,140){\circle*{5}}
\put(31,128){\circle*{5}}  \put(31,134){\circle*{5}}
\multiput(-26,98)(30,0){3}{\line(0,1){66}}
\multiput(-29,101)(0,30){3}{\line(1,0){66}}
\put(-29,89){\lifefont m: Fleet}
\put(-11,247){\circle*{5}}  \put(-11,253){\circle*{5}}
\put(-5,223){\circle*{5}}  \put(-5,229){\circle*{5}}
\put(-5,247){\circle*{5}}  \put(-5,253){\circle*{5}}
\put(1,223){\circle*{5}}  \put(1,229){\circle*{5}}
\put(109,223){\circle*{5}}  \put(109,229){\circle*{5}}
\put(115,199){\circle*{5}}  \put(115,205){\circle*{5}}
\put(115,223){\circle*{5}}  \put(115,229){\circle*{5}}
\put(121,199){\circle*{5}}  \put(121,205){\circle*{5}}
\multiput(-20,193)(30,0){6}{\line(0,1){66}}
\multiput(-23,196)(0,30){3}{\line(1,0){156}}
\put(-23,184){\lifefont k: Blockade}
\put(163,205){\circle*{5}}  \put(163,211){\circle*{5}}
\put(169,205){\circle*{5}}  \put(169,211){\circle*{5}}
\put(211,247){\circle*{5}}  \put(217,241){\circle*{5}}
\put(217,253){\circle*{5}}  \put(223,241){\circle*{5}}
\put(223,253){\circle*{5}}  \put(229,247){\circle*{5}}
\put(241,247){\circle*{5}}  \put(247,241){\circle*{5}}
\put(247,253){\circle*{5}}  \put(253,241){\circle*{5}}
\put(253,253){\circle*{5}}  \put(259,247){\circle*{5}}
\put(301,205){\circle*{5}}  \put(301,211){\circle*{5}}
\put(307,205){\circle*{5}}  \put(307,211){\circle*{5}}
\multiput(160,193)(30,0){6}{\line(0,1){66}}
\multiput(157,196)(0,30){3}{\line(1,0){156}}
\put(157,184){\lifefont z: bees under guard}
\put(205,368){\circle*{5}}  \put(211,368){\circle*{5}}
\put(205,362){\circle*{5}}  \put(211,362){\circle*{5}}
\put(205,332){\circle*{5}}  \put(211,332){\circle*{5}}
\put(205,326){\circle*{5}}  \put(211,326){\circle*{5}}
\put(205,302){\circle*{5}}  \put(211,302){\circle*{5}}
\put(205,296){\circle*{5}}  \put(211,296){\circle*{5}}
\put(301,296){\circle*{5}}  \put(301,290){\circle*{5}}
\put(301,284){\circle*{5}}  \multiput(196,278)(30,0){5}{\line(0,1){96}}
\multiput(193,281)(0,30){4}{\line(1,0){126}}
\put(193,269){\lifefont c: century ash}
\put(19,290){\circle*{5}}  \put(13,296){\circle*{5}}
\put(25,296){\circle*{5}}  \put(13,302){\circle*{5}}
\put(25,302){\circle*{5}}  \put(19,308){\circle*{5}}
\put(-11,320){\circle*{5}}  \put(-5,320){\circle*{5}}
\put(43,320){\circle*{5}}  \put(49,320){\circle*{5}}
\put(-17,326){\circle*{5}}  \put(1,326){\circle*{5}}
\put(37,326){\circle*{5}}  \put(55,326){\circle*{5}}
\put(-11,332){\circle*{5}}  \put(-5,332){\circle*{5}}
\put(43,332){\circle*{5}}  \put(49,332){\circle*{5}}
\put(19,344){\circle*{5}}  \put(13,350){\circle*{5}}
\put(25,350){\circle*{5}}  \put(13,356){\circle*{5}}
\put(25,356){\circle*{5}}  \put(19,362){\circle*{5}}
\multiput(-26,278)(30,0){4}{\line(0,1){96}}
\multiput(-29,281)(0,30){4}{\line(1,0){96}}
\put(-29,269){\lifefont h: Honey farm}
\put(121,305){\circle*{5}}  \put(127,305){\circle*{5}}
\put(133,305){\circle*{5}}  \put(109,317){\circle*{5}}
\put(145,317){\circle*{5}}  \put(109,323){\circle*{5}}
\put(145,323){\circle*{5}}  \put(109,329){\circle*{5}}
\put(145,329){\circle*{5}}  \put(121,341){\circle*{5}}
\put(127,341){\circle*{5}}  \put(133,341){\circle*{5}}
\multiput(100,293)(30,0){3}{\line(0,1){66}}
\multiput(97,296)(0,30){3}{\line(1,0){66}}
\put(97,284){\lifefont t: Traffic light}
\end{picture}
\caption{More expansive patterns resulting from perturbation of the basic patterns in Figure \ref{fg:inflife}. 
The displayed grids are of size $5\times5$ in cells.
We choose to capitalize their accepted names \cite{Wiki},
while the patterns without direct analogues in Conway's game are named provisionally and without capitalization.}
\label{fg:pipatterns}
\vspace{22pt} \small
\!\begin{tabular}{@{}ccrrrrr@{}}
\hline
id& \multicolumn{1}{c}{Name}  & \multicolumn{5}{l@{}}{Eigenvectors\,\dotfill} \\
&& \multicolumn{1}{r}{the 1st} & $\lambda\!=\!-25$ & \multicolumn{1}{r}{the 3rd} 
& $\lambda\!=\!-32$ & $\lambda\!=\!-38$
\\ \hline
\ff g & Glider & 0 & 0 & 0 & $90560$ & 0 \\  
\ff b & Block & 5.8042242497 & 0 & 47.6574170153 & $-36656$ & 0 \\ 
\ff a & Boat & 15.1179236341 & 0 & 17.5583327381 & $-17248$ & 0 \\ 
\ff u & Tub & 6.8429092056 & 0 & $-7.7583815523$ & $2464$ & 0 \\ 
\ff s & Ship & 3.6066196521 & 0 & 5.7056779406 & $-7648$ & 0 \\ 
\ff f & Loaf &  5.2022637278 & 0 & 8.1447467127 & $160$ & 0 \\ 
\ff n & Blinker & 27.3716368225 & 8775 & $-31.0335262092$ & $9856$ & 0 \\ 
\ff e & Beehive & 30.8012669084 & 6750 & $16.7979532896$ & $-23616$ & 133 \\ 
\ff p & Pond & 5.2531557998 & 1000 & 4.1358723037 & $-1920$ & 38 \\ 
\hline ... & bang & $-64.1048731311$ & $-10084$ & $-35.0762994498$ & $-929$ & $-126$ \\ 
\ff x & \!empty space\! & $-35.8951268689$ & $-6441$ & $-26.1317927887$ & $-15023$ & $-45$ \\  
\hline
\end{tabular} 
\captionof{table}{Eigenvectors of $M_0$.} 
\label{tb:infm0}
\end{figure}

Let us consider the continuous time dynamics of the transitions between the nine patterns.
Let $\vec{y}$ denote the 9-dimensional vector, representing the probabilities of the nine configurations
in the labelled order {\ff g}, {\ff b}, {\ff a}, {\ff u}, {\ff s}, {\ff f}, {\ff n}, {\ff e}, {\ff p}. 
Let us consider the continuous time model (\ref{eq:diffs}) with $M=M_0$ being
the transition matrix between the nine patterns:
\begin{equation}
M_0 = \left( \begin{array}{ccccccccc}
-32 & 0 & 0 & 0 & 0 & 0 & 0 & 0 & 0  \\
2 & -28 & 2 & 0 & 0 & 0 & 0 & 0 & 0  \\
3/2 & 4 & -33 & 16 & 6 & 0 & 0 & 0 & 0  \\
0 & 0 & 1 & -25 & 0 & 0 & 0 & 0 & 0  \\
0 & 0 & 4 & 0 & -41 & 1 & 0 & 0 & 0 \\
3/2 & 0 & 6 & 0 & 4 & -43 & 0 & 0 & 0 \\
0 & 0 & 4 & 0 & 0 & 0 & -25 & 0 & 0  \\
2 & 8 & 8 & 4 & 0 & 0 & 10 & -38 & 0 \\
1 & 0 & 2 & 0 & 0 & 0 & 0 & 4 & -52 
\end{array} \right),
\end{equation}
following Figure \ref{fg:inflife}. 
The diagonal numbers represent the total decay rates away from each configuration. The column sums are negative, 
representing the fact that decay to other patterns or the empty space are always possible.
Perturbations of Glider are assumed to happen with the same probability in any phase,
hence its decay rates are arithmetic averages over the two actually different phases for each decay product.
The same averaging principle will apply to any oscillator or moving pattern.
Blinker is essentially the same pattern in both oscillation phases, hence averaging its decay rates is a nominal triviality.

The differential system (\ref{eq:diffs}) with the constant matrix $M=M_0$ is straightforward to solve.
The solution is similar to (\ref{eq:m0gensol}), as multiple eigenvalues are not encountered.
A peripheral distinction is that the absorbing states (of the empty space and Being Huge) are not included,
and all eigenvalues are negative as $M_0$ is strictly diagonally dominant \cite[Corollary 6.2.27\refpart{a}]{Matrix85}. 
The eigenvalues are, approximately,
\begin{align} \label{eq:m0evs}
& -22.7907169042,\ -25,\ -27.2631437523,\ -32,\ -32.9850657973,  \quad\nonumber \\
& -38,\  -43.1843156322,\ -43.7767579139,\ -52. 
\end{align}
The non-integer eigenvalues are algebraic numbers of degree 5. They can be expressed as
$\lambda=\xi-34$, where $\xi$ satisfies
\[
\xi^5 - 176\xi^3 - 2\xi^2 + 6963\xi - 6882=0.
\]
The integer eigenvalues reflect presence of irreversible transitions in Figure \ref{fg:inflife},
which result in invariant subspaces of $M_0$ or $M_0^T$. In particular:
\begin{itemize}
\item The transitions to Pond are permanently irreversible in this model. 
This gives the eigenvalue $-52$ (i.e., the self-decay rate of Pond), 
and an invariant 1-dimensional subspace generated 
by the corresponding eigenvector $(0,\ldots,0,1)^T$.
\item Transitions to Beehive are irreversible as well, as it decays (within Figure \ref{fg:inflife}) 
only to the absorbing Pond. This gives the eigenvalue $-38$, 
equal to the self-decay rate of Beehive.
\item Transitions to Blinker are irreversible as well, as it decays only to the just considered Beehive.
This gives the eigenvalue $-25$, 
equal to the self-decay rate of Blinker.
\item Nothing in Figure \ref{fg:inflife} decays to Glider. This gives the eigenvalue $-32$,
as the transposed matrix $M_0^T$ has the corresponding eigenvector $(1,0,\ldots,0)$.
\end{itemize}
Presence of 
irreversible transitions means the transition matrix $M_0$ has a block-lower-triangular structure:
there are four trivial blocks (of size $1\times 1$) giving the integer eigenvalues,
and a block of size $5\times 5$ in the middle.

Some of the eigenvectors are given in Table \ref{tb:infm0}; ignore the last two rows for now.
The most important eigenvector is the one corresponding to the largest eigenvalue $-22.7907\ldots$ in (\ref{eq:m0evs}).
It determines the term in (\ref{eq:m0gensol}) with the lowest decay rate, 
and gives the asymptotic probability distribution among the nine patterns 
while the decay to other patterns or the empty space is delayed. 
The dominant vector displayed in the third column 
shows directly the percentages.


\begin{table} \small
\centering
\begin{tabular}{@{}ccrrrr@{}}
\hline
id& \multicolumn{1}{c}{Name}  & \multicolumn{4}{l@{}}{Eigenvectors\,\dotfill} \\
&& \multicolumn{1}{r}{the 1st} & $\lambda\!=\!-25$ & \multicolumn{1}{r}{the 3rd} 
& $\lambda\!=\!-52$ 
\\ \hline
\ff g & Glider & 27.3962182674 & 15336000 & 24.1715471046 & 0 \\ 
\ff b & Block & 16.8212683665 & 13896000 & 57.1172063342 & 0 \\ 
\ff a & Boat & 11.9054362841 & 0 & 8.8655336409 & 0 \\ 
\ff u & Tub & 2.2273920990 & $-4942125$ & $-13.4848830450$ & 0 \\
\ff s & Ship & 2.4767515229 & 81000 & 2.6879457153 & 0 \\
\ff f & Loaf &  5.2445444901 & 1296000 & 5.7778274448 & 0 \\ 
\ff n & Blinker & 8.9095683961 & $-6520950$ & $-53.9395321800$ & 0 \\ 
\ff e & Beehive & 20.8564233953 & 4374000 & $-1.3878679364$ & 0 \\ 
\ff p & Pond & 4.1623971784 & 1216000 & 1.3799397602 & $13$  \\ 
... & bang & $-64.4991670278$ & $-13083043$ & $-10.5110561062$ & $-7$ \\ 
\ff x & empty space & $-52.7079978383$ & $-15946962$ & $-26.6519028443$ & $-6$ \\  
\hline
\end{tabular}
\caption{Eigenvectors of $M_2$.}  \label{tb:infm1}
\end{table}

\subsection{Enhanced modeling}

A proper Markov model is obtained by amending
the transitions to empty space, and 
the transitions to the other configurations (as too complex to analyze further).
The placeholder state for complex patterns could be labelled as the bang state,
which is a modest name within our small model in comparison to the Being Huge state 
described in the introduction, 
or the cosmological Big Bang.
This update leads to augmenting the matrix $M_0$ by:
\begin{itemize} 
\item two rows counting these two kinds of transitions for each of the nine patterns of Figure \ref{fg:inflife},
namely 
\begin{equation}
 \left( \begin{array}{ccccccccc}
11 & 0 & 4 & 1 & 30 & 32 & 4 & 28 & 28 \\
13 & 16 & 2 & 4 & 1 & 10 & 11 & 6 & 24 
\end{array} \right).
\end{equation}
\item two zero columns, as the empty space and the bang state would be two {\em absorbing states}; 
presumably there would be no further ``noticeable" decay.
\end{itemize}
The column sums of the augmented matrix would equal 0, as fitting for a continuous time Markov process.
Then $\lambda=0$ would be a double eigenvalue, with the corresponding eigenvectors supported 
(i.e., having non-zero components) only for the two newly introduced states.

Apart from the new eigenspace for $\lambda=0$,  
the eigenvectors of the initial model are adjusted by two new components; see the bottom two rows in Table \ref{tb:infm0}.
The pairs of negative components in these rows sum up to $-100$ by the lemma below,
giving the asymptotic probability distributions between the bang and the empty space.
\begin{lemma} \label{th:evortho}
Components of these eigenvectors sum up to zero, because of the orthogonality to the eigenvector 
$(1,1,\ldots,1)$ of the transposed matrix . 
The corresponding eigenvector $\vec{w}_0$ of the transposed matrix has all its entries equal to 1,
as the column sums of $M_5$ are zero. As well-known, the eigenvectors other than $\vec{v}_0$ must be orthogonal.
\end{lemma}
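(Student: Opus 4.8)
\emph{Proof proposal.} Everything rests on the single structural fact recorded just above the lemma: the augmented matrix $M_5$ — that is, $M_0$ with the two transition rows and the two zero columns adjoined — has all of its column sums equal to $0$. The plan is to read this as a statement about a \emph{left} eigenvector and then invoke the elementary orthogonality between left and right eigenvectors belonging to distinct eigenvalues.

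First I would observe that ``all column sums vanish'' is precisely the identity $\vec w_0^{\,T}M_5=\vec 0^{\,T}$ for $\vec w_0=(1,1,\dots,1)^T$; equivalently, $\vec w_0$ is an eigenvector of $M_5^{T}$ with eigenvalue $0$. Next I would take an arbitrary eigenvector $\vec v$ of $M_5$ with eigenvalue $\lambda\neq 0$; these are exactly the ``adjusted'' eigenvectors listed in the bottom rows of Table~\ref{tb:infm0}, because each eigenvector of $M_0$ (all of whose eigenvalues are nonzero, $M_0$ being strictly diagonally dominant) extends to a unique eigenvector of $M_5$ — the two new coordinates being forced to equal $\lambda^{-1}$ times the product of the two new rows with that vector, which is solvable precisely because $\lambda\neq 0$. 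Then I would evaluate $\vec w_0^{\,T}M_5\vec v$ two ways: $0=(\vec w_0^{\,T}M_5)\vec v=\vec w_0^{\,T}(M_5\vec v)=\lambda\,\vec w_0^{\,T}\vec v$, so $\vec w_0^{\,T}\vec v=0$ since $\lambda\neq 0$; and $\vec w_0^{\,T}\vec v=\sum_i v_i$ is exactly the sum of the components of $\vec v$, which proves the lemma. For the corollary used in the text I would then specialise to the eigenvectors of Table~\ref{tb:infm0}, whose only coordinates beyond the original nine are the ``bang'' and ``empty space'' entries: their sum must equal minus the sum of the nine pattern entries, and in the normalisation where those nine sum to $100$ (the leading eigenvector displayed as percentages) the two new entries sum to $-100$, giving the asymptotic split between the bang state and the empty space.

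There is no genuinely hard step: the heart of the matter is a two-line linear-algebra computation. The only things that need a word of care are (i) the restriction to $\lambda\neq 0$ — the two honestly new eigenvectors of $M_5$ are supported on the absorbing coordinates alone and have component sum $1$, so the statement is (rightly) not asserted for them, which is also why $\lambda=0$ picks up extra multiplicity without contradicting the lemma; and (ii) that the pairing in play is the unconjugated bilinear form $\vec w^{\,T}\vec v$, which equals $\sum_i v_i$ when $\vec w=\vec w_0$, so the argument goes through verbatim even for complex $\lambda$ and $\vec v$.
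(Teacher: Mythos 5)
Your proposal is correct and follows essentially the same route as the paper: the zero column sums of the augmented matrix make $\vec w_0=(1,\ldots,1)$ a left eigenvector with eigenvalue $0$, and the two-line computation $0=\vec w_0^{\,T}M_5\vec v=\lambda\,\vec w_0^{\,T}\vec v$ forces the component sum of any eigenvector with $\lambda\neq 0$ to vanish, exactly as in the paper's proof. Your added remarks on extending the $M_0$-eigenvectors to the augmented matrix and on the unconjugated pairing are harmless elaborations, not a different argument.
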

\begin{proof}
It is well-known that \cite[Theorem 1.4.7\refpart{a}]{Matrix85} that 
this vector is orthogonal to $\vec{v}_1,\ldots,\vec{v}_9$.
Indeed, $0=\lambda_0\vec{w}_0\cdot \vec{v}_k=\vec{w}_0\cdot M_5\vec{v}_k =\lambda_k\,\vec{w}_0\cdot \vec{v}_k$.
\end{proof}

We immediately offer the following additional modifications to the stochastic model:
\begin{itemize}
\item As the {\ff w}-decay \cite[\ff Wing]{Wiki} 
generates a block and a glider that is moving away, we consider this decay as generation of two separate patterns. 
This is consistent without our assumption of the low probability of decays, so that the glider will move undeterminably far away 
until either it or the block would decay.
\item Similarly, the {\ff r}-decay \cite[\ff R-pentomino]{Wiki} generates an extensive ash, a glider in one direction, 
and two packs of (two or three) gliders in other diagonal directions. We consider this process as 
generating two separate patterns (a bang and a glider) as well, 
while ignoring the two packs of gliders conservatively. 
\end{itemize}
These modifications will make some of matrix columns sums positive, offering a potential of positive growth.
The transition matrix after all these modifications becomes
\begin{equation*}
M_2 = \left( \begin{array}{ccccccccccc}
\bf{-28} & 0 & 0 & 0 & \bf{8} & \bf{8} & 0 & \bf{8} & 0 & 0 & 0  \\
2 & -28 & 2 & 0 & \bf{8} & \bf{8} & 0 & 0 & 0 & 0 & 0  \\
3/2 & 4 & -33 & 16 & 6 & 0 & 0 & 0 & 0 & 0 & 0  \\
0 & 0 & 1 & -25 & 0 & 0 & 0 & 0 & 0 & 0 & 0  \\
0 & 0 & 4 & 0 & -41 & 1 & 0 & 0 & 0 & 0 & 0 \\
3/2 & 0 & 6 & 0 & 4 & -43 & 0 & 0 & 0 & 0 & 0 \\
0 & 0 & 4 & 0 & 0 & 0 & -25 & 0 & 0 & 0 & 0  \\
2 & 8 & 8 & 4 & 0 & 0 & 10 & -38 & 0 & 0 & 0 \\
1 & 0 & 2 & 0 & 0 & 0 & 0 & 4 & -52 & 0 & 0 \\
11 & 0 & 4 & 1 & \bf{22} & \bf{24} & 4 & 28 & 28 & 0 & 0 \\
13 & 16 & 2 & 4 & 1 & 10 & 11 & 6 & 24 & 0 & 0
\end{array} \right).
\end{equation*}
The {\ff w}- and {\ff r}-decays brought modifications to the first two rows (and the penultimate bang row). 
The eigenvalues are: {
\begin{align}  \label{eq:m1evs} 
& 0,\ 0,\ -19.6549885450,\ -25,\ -25.6574423828,\ -30.5755536586,  \\
& \!-37.2841176353\pm1.1753984861i,  -42.4590793638, 7792, -52.   \nonumber
\end{align}
It is instructive to compare this sequence to (\ref{eq:m0evs}).
The leading eigenvector for the largest non-zero eigenvalue is given in the third column of Table \ref{tb:infm1}.
Only Pond is now an intermediate absorbing state, giving the same eigenvalue $-52$. 
The other remaining integer eigenvalue $-25$ gives a puzzling eigenvector; see the fourth column. 
The other eigenvectors are algebraic numbers of degree 7.

The leadin eigenvector $\vec{v}_1$ of the eigenvalue $\lambda_1$ plays an analogous role to the ordering of webpages
in Google's famous PageRank algorithm (which basically models the links between webpages as Markov's process, and considers
the dominant eigenvector for the page ordering). It should be thus natural and informative to order the stable/cyclic patterns
by their numerical components in the eigenvector $\vec{v}_1$.

\section{``Life'' on small toruses}
\label{sec:small}

The standard method to avoid unbounded ``Life'' evolution on the infinite space is to take a rectangular $m\times n$ region of cells
and wrap it to a torus by the classical topological construction in Figure \ref{fig:torus}. There is no unbounded growth on a finite torus,
and moving patterns return to their initial position eventually. As a consequence, any stabilised configuration is either a still-life or an oscillator. 
For example, the Glider 
in Figure \ref{fig:torus} moves around diagonally with the period $8n$, fading into one corner and then emerging from the opposite corner. 
We consider only square toruses for the following related reasons:
\begin{itemize}
\item Gliders would travel around ergodically around a rectangular torus with (nearly) coprime $m,n$,
with disproportionally large period 4lcm$(m,n)$. Some other moving patterns would behave similarly.
\item As Gliders would move ergodically, they would hit stationary patterns at (nearly) any position. 
There would be no stable combinations of Gliders with stationary patterns, or the diversity of such combinations would be greatly reduced.
\end{itemize}
We identify stabilised patterns after decay perturbations up to the torus symmetries.
For a square $n\times n$ torus, these symmetries are generated by the horizontal and vertical shifts on the torus (permuting the columns or rows cyclically)
and the dihedral group $D_4$ of the symmetries of the square. The group of symmetries has $8n^2$ elements. 

\begin{figure} \centering
\begin{picture}(164,164)
\put(-2,-10){\includegraphics[height=180pt]{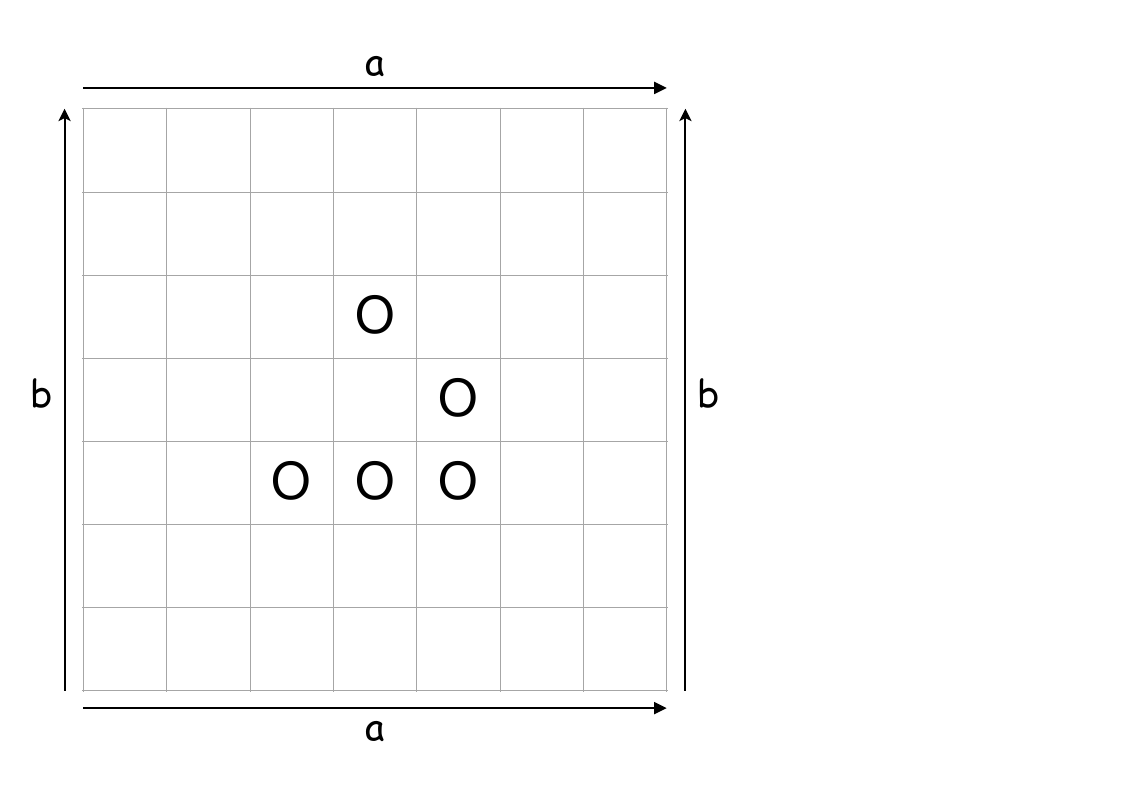}}
\end{picture} 
\caption{Wrapping a square region to a torus. The bottom and top edges are joined in a parallel way following the vector $\vec{a}$,
while the side edges are joined following the vector $\vec{b}$.  }
\label{fig:torus}
\end{figure}

Apart from Appendix Section \ref{sec:stable5}, we consider only the stabilised patterns that can be obtained consequently by decay perturbations
starting from the Block pattern (with 4 live cells) on a considered torus. Excluding the empty space pattern,
we refer to these patterns as {\em ground patterns}. 

Subsections \ref{sec:5x5}, \ref{sec:6x6} describe the ground patterns and transition matrices 
for the $n\times n$ toruses with $n\in\{5,6,7\}$. 
Let us first note down our main conventions and definitions. 

\subsection{Conventions and definitions}

As in Section \ref{sec:start}, we make the following choices:
\begin{itemize}
\item We ignore transitions to the empty space, hence our transition matrices are {\em pseudo-Markovian} rather than Markovian,
as in Section \ref{sec:start}.
\item We choose the left action of the transition matrix on probability vectors. This entails that column sums (rather than row sums)
of the transition matrix are less than or equal $0$.
\item We adopt the continuous time (rather than discrete time) model. The non-diagonal entries $P_{j,k}$ of the transition matrix
are the transition rates (that is, the number of perturbations, averaged over phases) from the ground state $\#k$ to the ground state $\#j$.
We have $P_{k,k}\le 0$, and the column sums are equal to the number of perturbations to the empty space multiplied by $-1$.
The matrix entries $p_{j,k}$ for the discrete time model would be the transition probabilities:
\begin{equation} \label{eq:probable}
p_{j,k}=\frac{P_{j,k}}{n^2} \quad \mbox{if $j\neq k$},  \qquad 
p_{k,k}=1+\frac{P_{k,k}}{n^2}.
\end{equation}
Let us refer to the transitions from a ground state $\#k$ to itself as {\em inert transitions}.
\end{itemize} 
From now on we choose the opposite ordering of ground patterns than in Section \ref{sec:infinite}:
from the most frequent (that is, the most {\em entropic} as products) to the least frequent. 
As a consequence, the transition matrices will be denser in the upper-triangular part (see Figure \ref{fig:matrices89}). 

Let $T_n$ denote the torus of size $n\times n$, and let $M_n$ (for $n>2$) denote the transition matrix 
for the perturbative (pseudo)-Markovian process on $T_n$ in the chosen (mostly Google's PageRank) ordering of ground patterns.

If there are ground patterns which cannot decay ultimately back to Block,
the oriented graph representing the transitions is not {\em strongly connected} \cite[pg.~400--403]{Matrix85},
the transition matrix is reducible, and there are invariant linear subspaces under the action of $M_n$. 
Then the ground patterns can be permuted so that $M_n$ would have a block-upper-triangular structure.
The irreducible invariant subspaces correspond to {\em strongly connected subgraphs};
we refer to the corresponding sets of ground patterns as {\em downstream blocks}.
Presence of these {\em downstream patterns} and blocks is well demonstrated in our starting model in Section \ref{sec:start},
and in the set of all stabilised patterns on $T_5$ and $T_4$ in Appendix Section \ref{sec:stable5}.


Given an eigenvalue $\lambda$, 
its eigenvectors $\vec{v}$ are determined by the vector equation
\begin{equation} \label{eq:basicla}
(M_{n}-\lambda I)\,\vec{v}=\vec{0}.
\end{equation}
of basic linear algebra \cite[\S 1.4]{Matrix85}. 
The equivalent linear system of equations for the components $v_k$ is overdetermined, as non-zero solutions must exist.
We refer to these linear equations as {\em eigenvector equations}. 
For example, the eigenvalue component for a pattern \#$k$ with a single predecessor \#$j$ is determined by the two-term equation
\begin{equation} \label{eq:neg2}
(P_{k,k}-\lambda^{})\,v^{}_k+P_{k,j}\,v^{}_j=0.
\end{equation}
Note that $P_{k,k}=-n^2(1-p_{k,k})$ by (\ref{eq:probable}), and $\lambda$ is negative in the considered cases.
We will not encounter multiple eigenvalues. 
The largest real eigenvalue of $M_n$ must be a real number, as its eigenvector describes 
the asymptotic probabilities for the ground patterns (under the condition that the empty space is not reached). 
We call them {\em the leading eigenvalue} and {\em the leading eigenvector},
recognising that the asymptotically dominant eigenvector for the full Markov process 
would represent the absorbing state of empty space.


\subsection{The 5x5 torus} 
\label{sec:5x5}

There are 9 ground patterns on $T_5$. 
They are labeled and listed in the first three columns of Table \ref{tb:5x5},
along with an additional pattern \#10.
The Markov process between them is depicted graphically in Figure \ref{fg:lifet5}. 
Unsurprisingly, we recognize several well known stable or cyclic patterns 
of the infinite version of Conway's ``Life" transplanted onto the torus. 
The cells of the displayed patterns are labeled to indicate the emerging pattern 
after the perturbation at that cell. The inert transitions 
are labeled by +.
\begin{figure}
\begin{center}
\begin{picture}(272,500)(50,-3)
\input{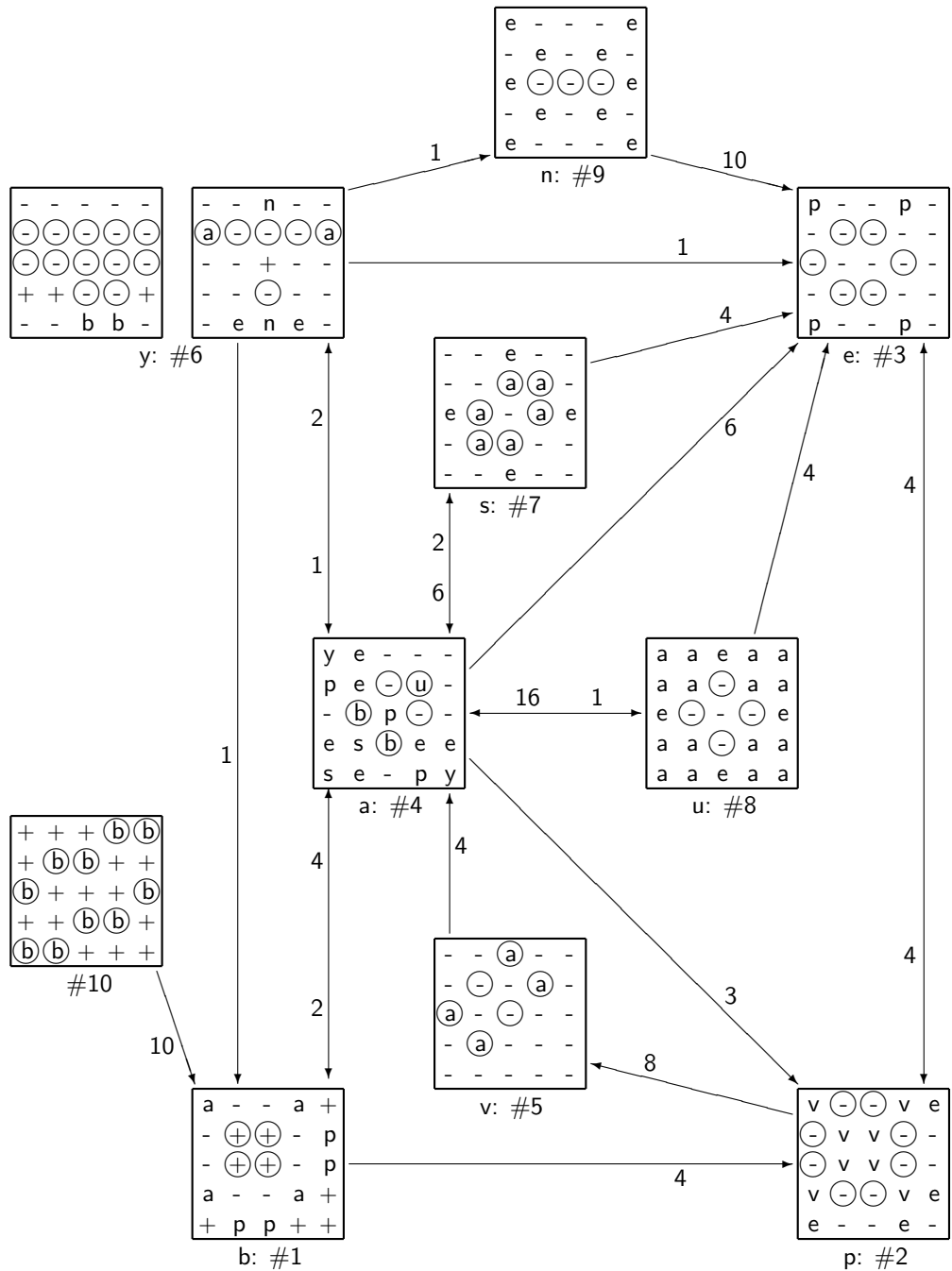}
\end{picture}
\end{center}
\caption{The Markov process between ground patterns on the $5\times 5$ torus. 
The transitions {\sf +, --} are to itself or the empty space, respectively.
The pattern \#10 does not belong to the set of ground patterns,
but has an exceptionally low decay rate $-10$.}
\label{fg:lifet5}
\end{figure}

Including the empty space pattern \#0 (but without the extra pattern \#10), 
the transition matrix of the Markov process is
\begin{equation} \label{eq:m5}
M_5 = \left( \begin{array}{cccccccccc}
0 & 8 & 13 & 21 & 9 & 21 & 19 & 15 & 5 & 15 \\
0 & -16 & 0 & 0 & 2 & 0 & 1 & 0 & 0 & 0 \\
0 & 4 & -25 & 4 & 3 & 0 & 0 & 0 & 0 & 0 \\
0 & 0 & 4 & -25 & 6 & 0 & 1 & 4 & 4 & 10 \\
0 & 4 & 0 & 0 & -25 & 4 & 1 & 6 & 16 & 0 \\
0 & 0 & 8 & 0 & 0 & -25 & 0 & 0 & 0 & 0 \\
0 & 0 & 0 & 0 & 2 & 0 & -23 & 0 & 0 & 0 \\
0 & 0 & 0 & 0 & 2 & 0 & 0 & -25 & 0 & 0 \\
0 & 0 & 0 & 0 & 1 & 0 & 0 & 0 & -25 & 0 \\
0 & 0 & 0 & 0 & 0 & 0 & 1 & 0 & 0 & -25
\end{array} \right).
\end{equation}
The matrix is diagonizable, and the eigenvalues $\lambda_k$ (for $k\in\{0,1,\ldots,9\}$) are
\begin{align}
& 0, -14.1315655736,\ -18.5040532266,\ -23.8982690980\pm 2.68268839792i, \quad\nonumber \\
& -24.5739220123,\ -25,\ -25,\ -28.0793972866,\ -30.9145237049. 
\end{align}
\begin{table} \small
\centering
\begin{tabular}{@{}rccrrrrrr@{}}
\hline
\# & \!Label\! &  \multicolumn{1}{c}{Name} & 
\hspace{-8pt}Cycle & \multicolumn{5}{l@{}}{Eigenvectors\,\dotfill} \\
&&&& \multicolumn{1}{c}{the 1st} & \multicolumn{1}{c}{the 2nd} 
& \multicolumn{2}{c}{$\lambda=-25$} & $\lambda=-10$
\\ \hline
0 & \sf -- & empty space & 
1 & \multicolumn{1}{l}{$-100.$} & $\!\!-85.2526462458$ & $-5$ & $-2$ & $\!\!-110861336$ \\ 
1 & \sf b & Block & 
1 & 20.6547809997 & $\!\!-14.7473537542$ & 0 & 0 & 37388785 \\ 
2 & \sf p & Pond & 
1 & 19.4485069846 & 17.3121325304 & 0 & 0 & 15673200 \\ 
3 & \sf e & Beehive & 
1 & 19.1833837155 & 31.5336529738 & 0 & 0 & 10811840 \\ 
4 & \sf a & Boat & 
1 & 17.3407213313 & 15.1044981909 & 0 & 0 & 14098500 \\ 
5 & \sf v & Barge & 
1 & 14.3155904312 & 21.3205349544 & 5 & 5 & 8359040 \\ 
6 & \sf y & \!\!pyramide\,5/5/2\!\!\! & 
4 & 3.9106612278 & 6.7191623710 & 0 & 0 & 2169000 \\ 
7 & \sf s & Ship & 
1 & 3.1910246961 & 4.6504377939 & 2 & $\!-6$ & 1879800 \\ 
8 & \sf u & Tub & 
1 & 1.5955123480 & 2.3252188969 & $\!-2$ & 1 & 939900 \\ 
9 & \sf n & Blinker & 
2 & 0.3598182658 & 1.0343622886 & 0 & 2 & 144600 \\ 
10 & & slanted bands & 
1 & --- & --- & --- & --- & 19396671 
\\ \hline
\end{tabular}
\caption{The ground patterns and some Markov process eigenvectors 
on the $5\times 5$ torus. As in Appendix Figure \ref{fg:pipatterns}, accepted names of familiar patterns
are capitalized, while patterns without direct analogues in Conway's  
``Life'' on the the infinite plane are named provisionally.} 
\label{tb:5x5}
\end{table}
The non-integer eigenvalues 
are algebraic numbers of degree 7.
They can be expressed as $\lambda=\xi-23$, where $\xi$ satisfies
\begin{equation}
\xi^7+3\xi^6-84\xi^5-326\xi^4+520\xi^3+4016\xi^2+17120\xi+20192=0.
\end{equation}
The stochastic evolution of the probability vector $\vec{v}$ is given by a similar linear expression as in (\ref{eq:m0gensol}):
\begin{equation}
\vec{v}=\sum_{k=0}^N c_k\expp{-\lambda_kt}\vec{v}_k,
\end{equation}
where $\vec{v}_k$ are the corresponding eigenvectors, 
the coefficients $c_k$ are determined by the initial distribution, and $N=9$.
The eigenvalue $\lambda_0=0$ dominates as \mbox{$t\to\infty$.} 
It represents the attraction and stability of the empty space; its eigenvector is non-zero only at the \#0 component.
Contribution of the other eigenvectors decreases exponentially, 
but the eigenvector $\vec{v}_1$ of the next eigenvalue $\lambda_1\approx-14.13$ 
dominates the asymptotic probability distribution under the condition that the \#0 state has not been reached.

Table \ref{tb:5x5} orders the ground patterns \#1 to \#9 by the leading eigenvector, which is given in the fourth column.
As the \#0 entry is normalized to $-100$, the other components are positive, 
and give the asymptotic percentages for the ground patterns by Lemma \ref{th:evortho}.
The eigenvectors can be expressed symbolically with some neatness from the observation
that the bottom five rows of 
$M_5-\lambda I$ have only two non-zero entries each. 
Starting from the eigenvector equations implied for these rows, the following expression 
can be deduced:
\begin{align} \label{eq:ev5}
\left( -v_0,\frac{8\,(\xi+1)}{\xi-7},(\xi+2)\,\eta,\frac{(\xi+2)^2\,\eta}4-3\xi-\frac{8\,(\xi+1)}{\xi-7},4\xi,8\eta,8,\frac{8\xi}{\xi+2},\frac{4\xi}{\xi+2},
\frac{8}{\xi+2} \right),
\end{align}
where 
\begin{align*}
\eta = \frac{\xi^2+2\xi-2}{8}-\frac{7\xi}{2\,(\xi+2)}-\frac{\xi+1}{\xi-7}, \qquad 
v_0  = \frac{(\xi^2+8\xi+44)\,\eta}{4}+\frac{\xi^2+22\xi+24}{\xi+2},
\end{align*}
as computed eventually from the most dense rows.

The whole set of stabilised patterns on the $T_5$ tours is described in Section \ref{sec:stable5}.

\subsection{The $6\times6$ and $7\times7$ toruses}
\label{sec:6x6}

The ground patterns on $T_6$ and $T_7$ 
are listed in Table \ref{tb:6x6} and Appendix Table \ref{tb:7x7a}.
Most of them are depicted in Figures \ref{fg:6x6} and \ref{fg:7x7}. 
In particular, moving pyramid patterns with a full row (or column) of living cells appear occasionally.
They either move cyclically through the whole orthogonal dimension, or overturn cyclically (like \#6 on $T_5$). 

\begin{figure}
\begin{center}
\begin{picture}(440,198)
\put(0,-101){\includegraphics[width=440pt]{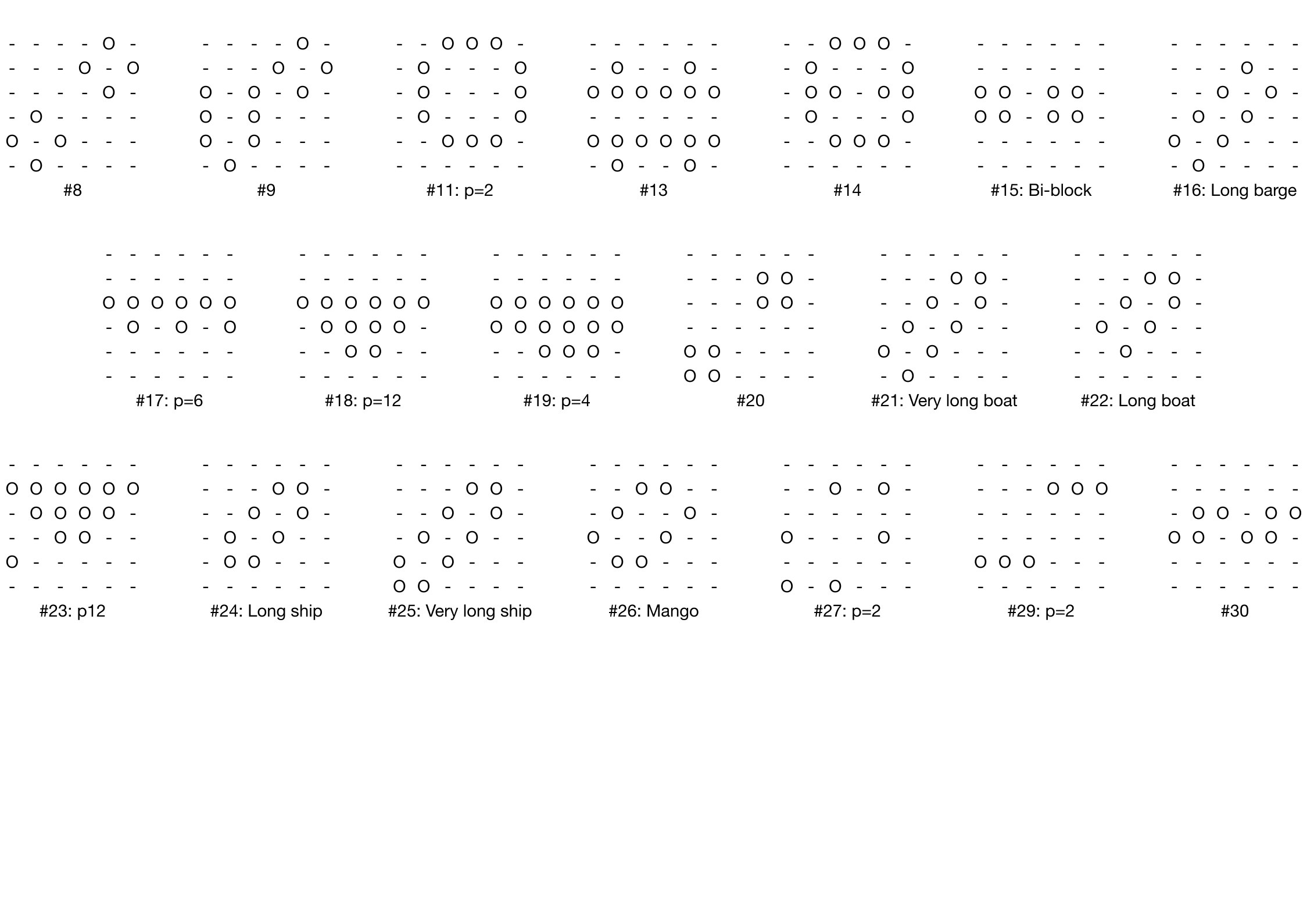}}
\end{picture}
\end{center}
\caption{Novel ground patterns on the $6\times 6$ torus.} 
\label{fg:6x6}
\vspace{25pt}
\small \centering 
\begin{tabular}{@{}rcrlccl@{}}
\hline
\# & \multicolumn{1}{c}{Name} & \!\!\!\!Cycle & The leading  & \multicolumn{3}{@{}l@{}}{\!Transition rates to\dotfill} \\
 & & & eigenvector & \!itself & \!\!\#0 & the others  (\#:\,rate)
\\ \hline
1 & Block & 1 & 55829.\,2876928 & 8 & 16 & 2:\,8;\, 3:\,4 \\ 
2 & Beehive & 1 & 33374.\,5099614 & 2 & 18 & 1:\,16 \\ 
3 & Boat & 1 & 18958.\,4865190 & 3 & 6 & 1:6; 2:8; 4:6; 5:1; 6:4; 7:2 \\ 
4 & Loaf & 1 & 6740.\,56822004 & 1 & 23 & 1,3:\,2;\, 6:\,1; 7:\,2; 8:\,3; 10:2 \\ 
5 & Tub & 1 & 4363.\,85255859 & 11 & 4 & 2:\,4;\, 3:\,16;\, 11:\,1 \\ 
6 & Ship & 1 & 4219.\,43666015 & 1 & 21 & 1:\,4;\, 3:\,6;\, 4:\,4 \\ 
7 & Pond & 1 & 2589.\,15754090 & 0 & 16 & 1:\,8;\, 8:\,12 \\ 
8 & chess tubs & 1 & 2532.\,56743661 & 0 & 20 & 5,9:\,8 \\ 
9 & tub+3+1+3 & 1 & 1034.\,90235122 & 1 & 28 & 5:\,2; 12:\,1; 13:\,2; 15,17:\,1 \\ 
10 & Glider & 24 & 815.\,148252458 & 4 & 19 & 1:\,4;\, 2:\,2;\, 3,4:\,1.5;  7,8:\,1;\, 11:\,2 \\  
11 & blinker hoop & 2 & 491.\,022937595 & 4 & 25.5 & 5:\,0.5;\, 7,14,16:\,2 \\ 
12 & Blinker & 2 & 179.\,522423355 & 11 & 11 & 2:\,10;\, 11:\,4 \\ 
13 & double L-band & 1 & 100.\,587053737 & 0 & 28 & 18:\,8 \\ 
14 & narrow hoop & 1 & 55.\,8702876463 & 3 & 30 & 11:\,3 \\ 
15 & Bi-block & 1 & 52.\,0322389837 & 0 & 20 & 2:\,4;\, 11:\,12 \\ 
16 & Long barge & 1 & 51.\,2300206006 & 0 & 13 & 5:\,2;\, 12:\,8;\, 20:\,2; 21:\,6;\, 22:\,4;\, 25:\,1 \\ 
17 & pyramide\,6+(1,1,1) & 6 & 50.\,2935268684 & 0 & 19.5 & 2:\,3;\, 11:\,9;\, 12:\,1.5;\, 19:\,3 \\ 
18 & pyramide\,6/4/2 & 12 & 41.\,1544950478 & 0.5 & 19.5 & 1:0.5;\;2:2;\;3,4,7:1;\,11:3;\,12,19:2;\,23:2.5;\,29,30:0.5 \\ 
19 & pyramide\,6/6/3 & 4 & 17.\,8167817849 & 6 & 20.5 & 2:\,2.5;\, 4,11:\,2;\, 12:\,3 \\ 
20 & chess blocks & 1 & 17.\,1047609258 & 12 & 0 & 2:\,16;\, 5:\,8 \\ 
21 & Very long boat & 1 & 15.\,8889547850 & 0 & 19 & 1:\,3;\, 3:\,1;\, 10,12,15:\,2; 16:\,4;\, 24:\,2;\, 25:\,1  \\ 
22 & Long boat & 1 & 11.\,9590003498 & 0 & 18 & 1:\,2;\, 5:\,1;\, 6,7:\,2; 24,26:\,2;\, 28:\,3 \\ 
23 & pyramide\,6/4/2+1 & 12 & 5.\,39313847507 & 1.5 & 17 & 1:1;\;2:2;\;4,5,7:0.5;\;11:2;\;12:3;\;18:4;\,19:2.5;\,29:1.5 \\ 
24 & Long ship & 1 & 4.\,28556011147 & 2 & 10 & 2:\,8;\, 3,16:\,2;\, 20:\,4; 22:\,6;\, 27:\,2 \\ 
25 & Very long ship & 1 & 3.\,26180531714 & 0 & 10 & 6,8:\,2;\, 11,19:\,4;\, 20:\,5;\, 21:\,6;\, 27:\,3 \\ 
26 & Mango & 1 & 2.\,70011501622 & 0 & 18 & 1:\,2;\, 2:\,4;\, 3,5:\,2;\, 7,20:\,4 \\ 
27 & isolated sparks & 2 & 2.\,14014299147 & 12 & 24 \\ 
28 & Barge & 1 & 1.\,93123340814 & 2 & 0 & 3:4; 4:16; 7:2; 22:8; 26:4 \\ 
29 & parallel blinkers & 2 & 1.\,54312177838 & 2 & 24 & 12:\,10 \\ 
30 & snake band & 1 & 1.\,0 & 0 & 24 & 2:\,8;\, 15:\,4 
\\ \hline
\end{tabular}
\captionof{table}{The ground patterns, the leading eigenvalue, and the transition data for the $6\times 6$ torus.} 
\label{tb:6x6}
\end{figure}

\begin{figure}
\centering
\begin{picture}(440,254)
\put(0,-48){\includegraphics[width=440pt]{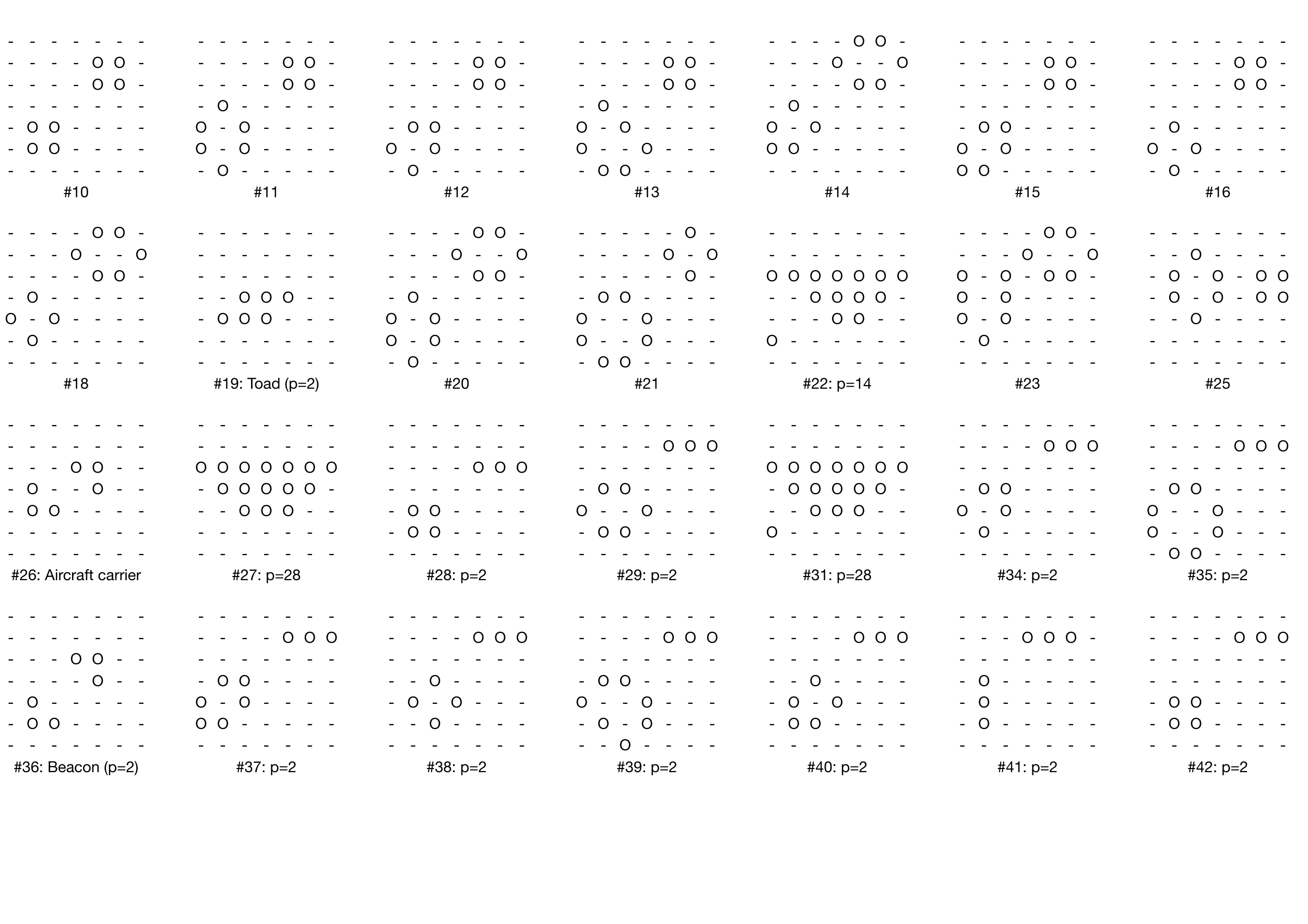}}
\end{picture}
\caption{Novel ground patterns on the $7\times 7$ torus.} 
\label{fg:7x7}
\end{figure}

The leading eigenvalues are:
\begin{align*}
\lambda_6=-15.4227524761 && \mbox{of algebraic degree 28, for $T_6$;} \\
\lambda_7=-22.6111083077 && \mbox{of algebraic degree 36, for $T_7$.}
\end{align*}
Tables \ref{tb:6x6}  and \ref{tb:7x7a} list the corresponding leading eigenvectors in the fourth column.
The sparse transition matrix $M_6$ of the pseudo-Markovian process on $T_6$ 
has the size $30\times30$. It is described by the 5th and 7th columns of Table \ref{tb:6x6}. 
The diagonal entries are equal to $\varrho_i-36$, 
where $\varrho_i$ is the self-transition frequency in the 5th column.
The transition frequencies from each pattern sum up to 36, and their appear in the columns of $M_6$. 
The curious pattern \#27 decays only to itself and the empty space \#0. 
The live cells of this oscillating pattern (identifiable as homotopically looping Barberpole \cite[{\sf Barberpole}]{Wiki}) 
are isolated: they never have a live neighbour!
The most significant eigenvalues following $\lambda_6$ are: 
\begin{equation*}
-22.2782002055, -23.9600500434\pm 0.8554013360i, -24, -28.2463940448, \ldots.
\end{equation*}
The eigenvalue for the integer eigenvector $-24$ is supported only on the pattern \#27, 
which is irreversibly an intermediate pattern towards the decay to \#0. 
There is other integer eigenvalue $-36$. 


The transition matrix $M_7$ of the pseudo-Markovian process on $T_7$ 
has the size $42\times42$. 
It is described by Appendix Table \ref{tb:7x7a}. 
The diagonal entries are equal to $\varrho_i-49$, where $\varrho_i$ is the self-transition frequency in the 5th column.
The patterns \#1, \#3, \#6, \#8 decay only to each other and to \#0. 
Consequently, these four patterns comprise a block-diagonal component of $M_7$
that represents an irreversible intermediate stage of the eventual decay to \#0. 
The transition matrix on these four patterns is 
\begin{equation}
\left( \begin{array}{cccc}
-38 & 10 & 0 & 6 \\
0 & -25 & 0 & 4 \\
4 & 0 & -44 & 0 \\
0 & 0 & 8 & -38 
\end{array} \right),
\end{equation}
and the dominant eigenvalue in this block is  $\lambda^*_7\approx -24.6102656350$.
The dominant eigenvector (in percentage) of this downstream block is 
\begin{equation*}
(38.5420268338, 48.7564410315, 7.95101698832,  4.75051514635).
\end{equation*}
The whole transition matrix $M_7$ has two integer eigenvalues: $-41$ and $-49$. 
The most significant eigenvalues 
following $\lambda_7,\lambda^*_7$ are:
\begin{equation*}
-26.8283684934,\ -31.2767375457,\ -31.9999811357,\ -34.0916602249,\ \ldots.
\end{equation*}
The eigenvalues for the $n\times n$ torus tend to concentrate around the point $\lambda=n^2$,
as evident in Appendix Figure \ref{fg:egvs}. This makes iterative computation of the most negative eigenvalue 
(and particularly of the corresponding eigenvector) very slow and problematic.
The eigenvalue closest to $0$ seems to be fairly isolated for larger toruses,
makes its iterative computation more feasible after an appropriate diagonal shift
--- although we will remark that it is much more effective to iterate the transposed matrix.

\section{Larger toruses}
\label{sec:large}

We computed the ground patterns and the transition matrices $M_{n}$ 
for $T_8$, $T_9$ and $T_{10}$ 
as well. The summary statistics is presented in Table \ref{tb:summary}.
The number $N$ of ground patterns is starting to grow tremendously.
The quadratic exponential growth 
\begin{equation}
N=O\Big(B^{\,n^2}\Big)
\end{equation} 
is expectable, but its rate $B$ does not appear settled; see the 3rd row. 
The number of ground patterns for $T_{11}$ 
could be $\approx 1.15^{121}$, which is about 22 million.

\begin{table} \small 
\begin{tabular}{@{}lrrrrrr@{}}
\hline
Torus size ($n\times n$) & $5\times 5$ & $6\times6$ & $7\times 7$ & $8\times8$ & $9\times9$ & $10\times10$
\\ \hline 
Ground patterns ($N$) & 9 & 30 & 42 & 305 & 7362 & 513875  \\ 
Downstream blocks & --- & \{\#27\}\! & \!\{\#1,3,6,8\}\! & --- & --- & \{\#14342\}\! \\
$B=\exp((\ln N)/n^2)$ & 1.091867 & 1.099085 & 1.079264 & 1.093496 & 1.116197 & 1.140535  \\
1st eigenvalue 
& $-14.13156$ & $-15.42275$ & $-22.61111$ & $-9.789896$ & $-7.021839$ & $-8.721470$ \\
2nd eigenvalue 
& $-18.50405$ & $-22.27820$ & $-24.61026$ & $-26.01347$ & $-26.267$ & $-25.222$  \\
Last eigenvalue 
& $-30.91452$ & $-40.20555$ & $-51.20720$ & $-69.18476$ & $-96.963$ &  $-121.18$ \\
\hline 
\refpart{a} Entropy rate (\ref{eq:entrate}) & 1.430507 & 1.886287 & 1.637994 & 1.927611 & 1.650791 & 1.555746 \\
Median entropy rate & 1.335431 & 1.928736 & 2.310640 & 3.074516 & 3.663965 & 4.217012 \\[1pt]
\refpart{b} Shannon entropy \!of $\vec{v}^{\,(1)}\!$ & 2.692098 & 2.374400 & 2.986337 & 2.719185 & 2.586408 & 3.067091 \\
Top 10 percentage & 100\hspace{21pt} & 99.15879 & 99.59111 & 97.25507 & 99.74079 & 95.55507 \\[1pt]
\refpart{c} $\log_{10} \, v^{(1)}_{1}\big/\,v^{(1)}_{N}$ & 1.758937 & 4.746862 & 11.01298 & 15.53864 & 65.01111 & 131.0476 \\[2pt]
Median $v^{(1)}_{k}\big/v^{(1)}_{k+1}$ 
& 1.218418 & 1.313861 & 1.403064 & 1.058780 & 1.008688 & 1.000150 \\[1pt] 
\refpart{d} Boltzmann entropy rate & $-0.129045$ & $-0.108289$ & 0.721732 & $-2.643578$ & $-1.680071$ & $-2.376045$ \\
Median B. entropy rate & 0.333032 & 13.51961 & 68.72683 & 113.7597 & 899.4877 & 1280.550 \\[1pt]
\refpart{e} Decays to \#0, mean & 14\hspace{26pt} & 17.4\hspace{19pt} & 20.17857 & 19.83326 & 19.71595 & 21.72508 \\
Inert decays, mean & 1.22222 & 2.9\hspace{19pt} &  7.46429 & 8.58942 & 8.79680 & 7.21781\\[1pt]
\refpart{f} Entropic decays, mean & 6.88889 & 11.86667 & 18.65476 & 30.94083 & 48.10735 & 65.20174 \\
Negentropic decays, mean & 2.88889 & 3.83333 & 2.70238 & 4.63649 & 4.37990 & 5.85537 \\
Entropic decay products, mean\! & 1.44444 & 2.93333 & 4.97619 & 9.67869 & 17.38549 & 28.31303 \\
Negentropic products, mean & 0.88889 & 1.26667 & 1.11905 & 2.36066 & 2.41361 & 3.25232 \\
\refpart{g} Negentropy reciprocated,\,\%\! 
& 62.5\hspace{19pt} & 36.84210 & 40.42553 & 47.5\hspace{19pt} & 76.50402 & 73.11398 \\
Negentropy compensated,\;\% & 37.5\hspace{19pt} & 26.31579 & 36.17021 & 34.58333 & 49.94654 & 46.82840 \\
Negentropy balanced,\;\% & 12.5\hspace{19pt} & 0\hspace{26pt} & 8.51064 & 8.33333 & 11.13737 & 7.24622 \\
\refpart{h} With entropic decays only & 5 
& 12 
& 17 
& 81 & 1197 & 46186 \\
Max negentropic \#-rank jump & 4 
& 12  
& 7 
& 117 
& 868  
& 59614 \\ 
Max negentropic decay & 10.86843 
& 41.15450  
& 44.13044  
& 3313.867  
& 147.9563  
& 2853.7694 \\ 
More negentropic than (\ref{eq:neg1}) & 0 & 2 & 2 & 78 & 82 & 5338 \\ 
\refpart{i} 
With 1 decay predecessor\!\! & 5 & 7 & 11 & 33 & 214 & 3625 \\ 
With 2 decay predecessors & 1 & 8 & 6 & 36 & 462 & 13807 \\ 
With 3 decay predecessors & 1 & 3 & 9 & 38 & 768 & 30234 \\ 
\refpart{j}  Mean live cell number & 5.66667 & 8.25\hspace{14pt} & 8.35714 & 11.48728 & 15.13386 & 18.86846 \\ 
Expected live cell number & 5.87708 & 5.14679 & 5.46696 & 5.32580 & 4.84413 & 5.21633 \\
 \hline
Still-lives & 7 & 21 & 25 & 242 & 4957 & 307062 \\
Oscillators of period $p=2$ & \{\#9\}\! & 4 & 13 & 46 & 2389 & 206738 \\
Oscillators of period $p=3$ & --- & --- & --- & --- & \{\#2830\}\! & 6 \\
Oscillators of period $p=4$ & \{\#6\}\! &\{\#19\}\! & --- & \{\#114\}\! & \{\#267\}\! & 5 \\
Oscillators of period $p=5$ & --- & --- & --- & --- & \{\#498\}\! & \{\#43990\}\! \\
Oscillators of period $p=6$ & --- & \{\#17\} & --- & \{\#11,28\}\! & --- & --- \\
Oscillators of period $p=8$ & --- & --- & --- & 7 & --- & \{\#12673\}\! \\
Oscillators of period $p=9$ & --- & --- & --- & \{\#59\}\! & --- & --- \\
Oscillators of period $p=10$ & --- & --- & --- & \{\#117\}\! & --- & 6 \\
Oscillators of period $p=2n$ & --- & \{\#18,23\}\! & \{\#22\}\! & 2 & 5 & 8 \\
Oscillators of period $p=4n$ & --- & \{\#10\}\! & 3 & \{\#14\} & 8 & 40 \\
Other oscillators, $p$ and $\{\#\}$  & --- & --- & --- & 48 \{\#10\},\! & --- & 12,\,14,\,18,\\
&&&&132 \{\#9\} && 26,\,38,\,60, \\ 
&&&& && 100,\,220\, \\ \hline
\end{tabular}
\caption{The summary table. The middle part is discussed in the commentary \refpart{a}\,--\refpart{f},
and is illustrated in Figure \ref{fig:entropy} on the example of the $9\times 9$ torus.} 
\label{tb:summary}
\end{table}

The 
intriguing hallmark 
for the larger toruses is the sudden jump of the leading eigenvalue $\lambda^{(1)}$ of the pseudo-Markovian process 
to values $\lambda^{(1)}>-10$; see the 5th row. 
The second eigenvalue does not shift significantly at all; see the subsequent row. 

The middle part of Table \ref{tb:summary} describes chiefly the leading eigenvector,
the induced asymptotic probability distribution for the ground patterns,
and associated entropic measures. This is explained in 
Subsection \ref{sec:entropic}. 

Figure \ref{fig:matrices89} displays the distribution of non-zero entries in the transition matrices 
$M_8$ and $M_9$, 
overstressing their presence somewhat.
Pattern generation on $T_8$ 
is dominated by high period oscillators at \#9 and \#10 (see Appendix Figure \ref{fg:8x8}), 
while there are no oscillators with many distinct phases on $T_9$ 
(see Figure \ref{fg:9x9}). The shape of the transition matrices 
is commented in parts \refpart{b},\refpart{f},\refpart{g} of Subsection \ref{sec:entropic},
and in Appendix Section \ref{sec:regimes}.

The lower part of Table \ref{tb:summary} counts separately the number of still-lives and various oscillators. 
If there is one or two oscillators of a given period, the \#-rankings of those oscillating patterns is displayed.
The ground patterns are presented concisely in Section \ref{sec:ground}, with references to Appendix.

Subsection \ref{sec:equal} presents the technical issue that the leading eigenvalue components of some ground patterns 
on $T_9$ and $T_{10}$ 
are equal. Therefore Google's PageRank algorithm is not sufficient  to order the ground patterns on these toruses. 
Subsection \ref{sec:acriterion} formulates an additional criterion for ordering the ground patterns.

\subsection{Ground patterns}
\label{sec:ground}

Appendix Table \ref{tb:patterns} displays \#-ranking of some recognisable 
``Life'' patterns 
on various toruses, together with top 10 asymptotic percentages 
on each torus. Evidently, the asymptotic probability distribution tends to be very uneven, 
and the same 8 simple patterns dominate the distribution.

\begin{figure}
\begin{center}
\begin{picture}(400,200)
\put(-14,-4){\includegraphics[width=200pt]{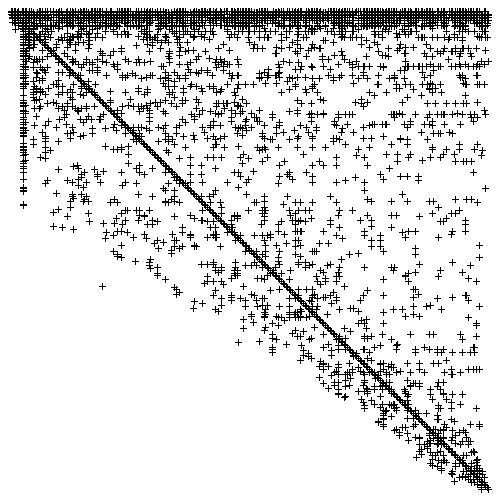}}
\put(216,-4){\includegraphics[width=200pt]{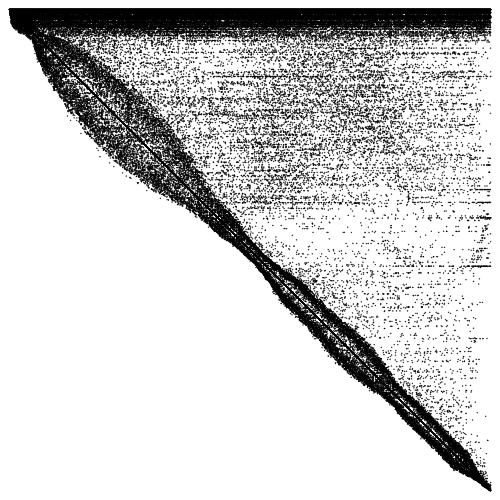}}
\put(-12,0){\small{}\refpart{a}}  \put(216,0){\small{}\refpart{b}} 
\end{picture}
\end{center}
\caption{\small{}\refpart{a} The distribution of non-zero entries of the transition matrix for the $8\times8$ torus.
\refpart{b} The distribution of non-zero entries of the transition matrix for the $9\times9$ torus.} 
\label{fig:matrices89}
\begin{center}
\begin{picture}(440,127)
\put(32,-136){\includegraphics[width=380pt]{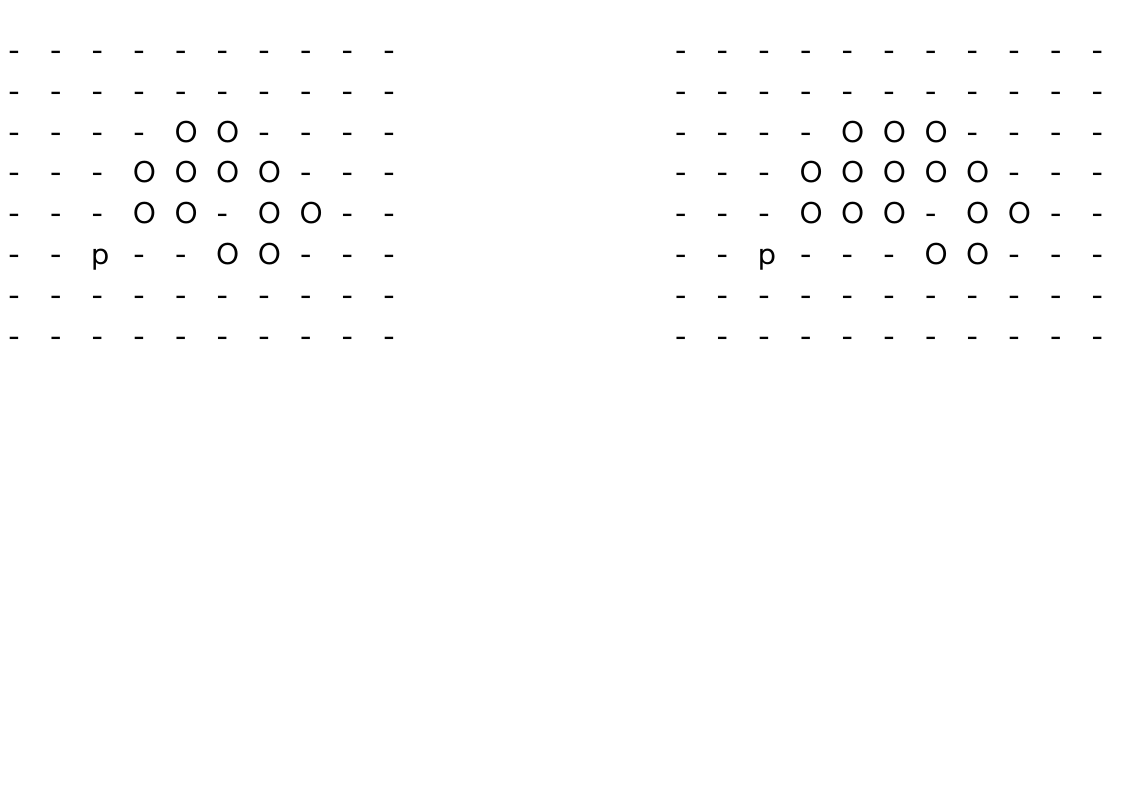}}
\put(14,0){\small{}\refpart{a}}  \put(240,0){\small{}\refpart{b}} 
\end{picture}
\end{center}
\caption{\small{} \refpart{a} Perturbation (marked by ``p'') of Light weight spaceship to Middle weight spaceship.
\refpart{b} Perturbation of Middle weight spaceship to Heavy weight spaceship.} 
\label{fig:spaceshipp}
\end{figure}

Most ground patterns on larger toruses consists of arrangements  of a few islands,
predominantly of the most familiar top 8 patterns in Table \ref{tb:patterns}. 
This is discussed in Appendix Section \ref{sec:islands}.
The well-known pattern Traffic light (see in Figure \ref{fg:pipatterns}) appears at \#10 
on $T_{10}$. 
More peculiar patterns on $T_{10}$ 
are displayed in Figure \ref{fg:10x10}.
The period 220 oscillator \#31900 moves diagonally the minimal distance every 22 generations. 
Pairs of gliders start to appear  on $T_9$ 
(\#1919, \#2370, \#3582, \#6627). 
There are 38 pairs of gliders on $T_{10}$, 
including:
\begin{itemize}
\item \#123607 with the period 20 rather than 40, as the two gliders switch positions in synchrony. 
See Appendix Figure \ref{fig:gliderpairs10}.
\item \#4641, \#14994, \#20134, \#61046, with two gliders moving in orthogonally intersecting directions.
\end{itemize}
The only ground pattern where a glider or spaceship combines with a stationary object is \#7297 on $T_{10}$; 
see Figure \ref{fig:gliderpairs10}.
Interestingly, there are (unique over two phases) perturbations of Light weight spaceship to Middle weight spaceship, 
and then to Heavy weight spaceship. These perturbations are indicated in Figure \ref{fig:spaceshipp}.

\subsection{Entropic measures}
\label{sec:entropic}

The middle part of Table \ref{tb:summary} displays a few entropy measures of the Markovian processes,
and reflects the asymptotic probability distribution for the ground patterns 
defined by the leading eigenvector 
\begin{equation}
\vec{v}^{\,(1)}=\big(\, v^{(1)}_{k} \,\big)_{k=1}^N.
\end{equation}
Figure \ref{fig:entropy} and Appendix Section \ref{sec:regimes}
supplement some parts of the table and these corresponding explanations:
\begin{figure}
\begin{center}
\begin{picture}(380,526)
\put(8,365){\includegraphics[width=360pt]{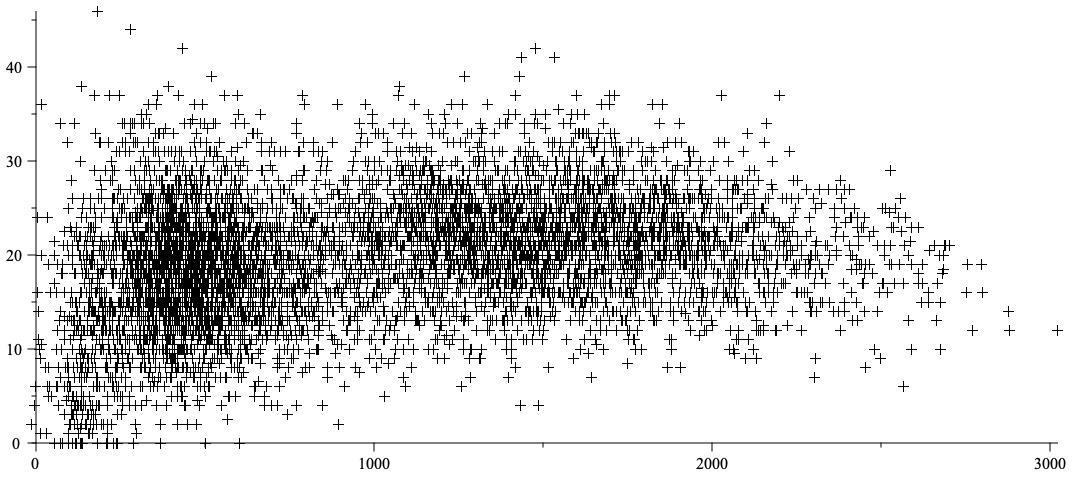}}
\put(8,185){\includegraphics[width=360pt]{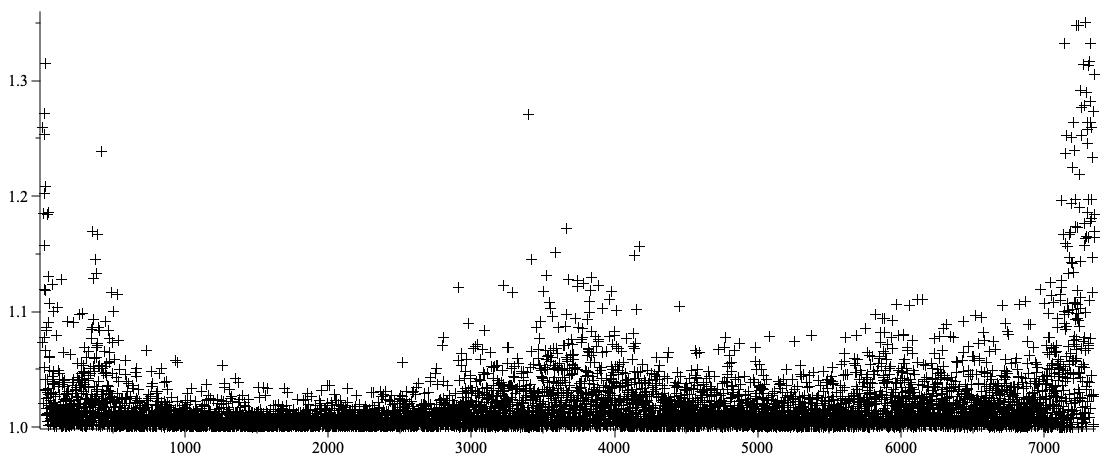}}
\put(8,-1){\includegraphics[width=360pt]{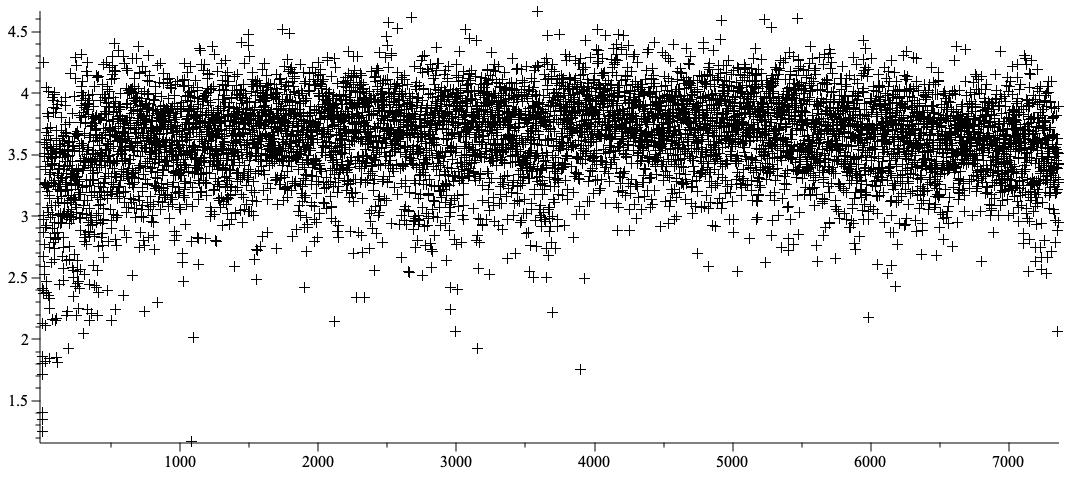}}
\put(0,365){\small{}\refpart{e}}  \put(0,183){\small{}\refpart{c}}  \put(0,-2){\small{}\refpart{a}}
\end{picture}
\end{center}
\caption{Entropic distributions on the $9\times 9$ torus. 
With reference to the commentary \refpart{a}\,--\refpart{f} to the middle part of Table \ref{tb:summary}: 
\refpart{a} Shannon entropy
of each distribution of the transition probabilities from all states $\#k\in\{1,2,\ldots,7362\}$, numbered horizontally.
\refpart{c} The relative jumps of consecutive components of the leading eigenvector. Out of range jumps appear near the top 
(up to $v^{(1)}_{12}/v^{(1)}_{11}\approx 2.52343$) or near the tail (up to $40.13927$) of the ranked sequence. 
\refpart{d} Entropy growth rate (\ref{eq:dbentropy}) from each state (horizontally) vs the number of transitions to the empty space \#0 (vertically).} 
\label{fig:entropy}.
\end{figure}
\begin{enumerate}
\item[\refpart{a}] Stochasticity of a Markovian process with $K$ states is often measured 
by the {\em entropy rate}
\begin{equation}
\sum_{k=1}^{K} u^{(1)}_{k} \; \sum_{j=1}^{K} -p_{j,k}\log_2 p_{j,k}.  
\end{equation}
Here the inner sum gives the Shannon entropy 
of the distribution of the transition probabilities from each state $k$, 
and the outer sum averages the Shannon entropies according to the asymptotically dominant distribution $\big(u^{(1)}_{k}\big)_{k=1}^K$.
As our considered Markov processes are dominated by the attractive state \#0 of empty space, 
the entropy rate is trivially zero for the examined toruses.
Nevertheless, the Shannon entropies 
from each state are informative; see Figure \ref{fig:entropy}\refpart{a}. 
We choose to average these Shannon entropies using the leading eigenvector $\vec{v}^{\,(1)}$ 
as the most relevant asymptotic probability distribution,  normalized to sum up to 1:
\begin{equation} \label{eq:entrate}
\sum_{k=1}^N v^{(1)}_{k} \; \sum_{j=0}^N -p_{j,k}\log_2 p_{j,k}.  
\end{equation}
Here $p_{j,k}$ are the transition probabilities (\ref{eq:probable}).
Note that the inner sum includes the transitions to the absorbing \#0 state, 
while the outer sum ignores that state (whose Shannon entropy is zero anyway).
As shown in the \refpart{a}-row 
of Table \ref{tb:summary}, this averaged entropy rate is suspiciously low: 
less than 2 bits. On the other hand, the median Shannon entropy (shown in the subsequent row) of the transitions 
from each state grows significantly for larger toruses, as can be expected.
The disparity of the entropy rate (\ref{eq:entrate}) points to very uneven distribution given 
by the dominant vector $\vec{v}^{\,(1)}$. 
\item[\refpart{b}] The third row in the middle part shows the Shannon entropy of the dominant eigenvector 
$\vec{v}^{\,(1)}$. 
It is indeed low --- at best, barely reaching 3 bits for $T_{10}$ 
--- and not clearly growing. This means that the asymptotic distributions (until the irreversible transition to \#0) 
are highly concentrated on the top 8\,--\,10 most frequent ground patterns, as $2^3=8$. 
This is evident in the percentage of the 10 most frequent patterns in the subsequent row,
and in the density of the top rows in Figure \ref{fig:matrices89}.
\item[\refpart{c}] Unevenness of the probability distribution $\vec{v}^{\,(1)}$ is strikingly reflected 
by the growing scale of magnitude of the probabilities $v^{(1)}_{k}$. As the next row of Table \ref{tb:summary} shows,
the probabilities differ maximally by the factors $\approx 10^{65}$ or $10^{131}$ for $T_9$ and $T_{10}$, 
respectively. We can consider the logarithm of each probability as the {\em Boltzmann entropy} of the corresponding 
ground pattern. This entropy is lesser (more negative) for rarer ground patterns.
This Boltzmannian entropy was considered in the context of the Ehrenfest urn model  \cite{Klein56},
which is a classical Markov process whose dominant eigenvector gives the binomial distribution. 
For comparison, the famous Boltzmann formula $S=k\log V$ expresses gas entropy $S$ in terms of the phase space 
volume $V$ of a macrostate, and the probability of the macrostate is assumed to be proportional to $V$. 

To appreciate differences 
or quotients 
between neighbouring eigenvector components, we consider 
the geometrically averaged relative difference
\begin{equation} \label{eq:jump}
\left( \frac{v^{(1)}_{1}}{v^{(1)}_{N}} \right)^{1/(N-1)}, 
\end{equation}
given in the subsequent row of the table. 
Figure \ref{fig:entropy}\refpart{c} displays the jumps $v^{(1)}_{k}/v^{(1)}_{k+1}$ of consecutive eigenvalue components for $T_9$. 
\item[\refpart{d}] Within the context of Boltzmann's H-theorem \cite{Moran60}, it is interesting to consider 
the changes in the Boltzmannian state entropy under the perturbative transitions between ``Life" patterns. 
The expected entropy increase rate from the $k$-th pattern is
\begin{equation} \label{eq:dbentropy}
\sum_{j=1}^{N} \, P_{j,k} \, \log_{10} \,\frac{v^{(1)}_{j}}{\,v^{(1)}_{k}\,}\,.
\end{equation}
Recall that $P_{j,k}$ is the transition rate from the state \#$k$ to the state \#$j$, as in (\ref{eq:probable}). 
The set of these entropy change rates (for $T_9$) 
is depicted along the horizontal axis of Figure \ref{fig:entropy}\refpart{d}.
The row ``Boltzmann entropy rate" of Table \ref{tb:summary} shows 
the $\vec{v}^{\,(1)\,}$-weighted average of these entropy change rates.
Even if there are just 2--3 patterns with (only slightly!) negative entropy change rate,  
the $\vec{v}^{\,(1)}$-average is negative for the most considered toruses!
This shows that the most frequent patterns are so dominant that the pseudo-Markovian 
process finds itself predominantly at the entropy maximum.
For comparison, this averaged entropy change rate is zero for the Ehrenfest urn model, 
as the entropy changes of any pair of opposite transitions cancel each other out.
As for the H-theorem itself, the expected ensemble entropy increases for Markov processes  \cite{Morimoto63}. 
in the evolution (\ref{eq:m0gensol}) that exponentially approaches the leading eigenvector.
\item[\refpart{e}] The vertical axis in Figure \ref{fig:entropy}\refpart{d} shows the number 
of decays to the empty space \#0 for each pattern on $T_9$. 
These irreversible decays represent entropy rise ``of another level", so to speak.
The $\vec{v}^{\,(1)\,}$-weighted average of the number of $\#0$-decays is nothing else 
but the leading eigenvalue multiplied by $-1$.
\item[\refpart{f}] It is sensible to distinguish entropic decays (to patterns with higher Boltzman entropy) 
and negentropic decays (to patterns with lower entropy). 
The first two rows of the \refpart{f}-part 
of Table \ref{tb:summary} 
compare the aggregate decay rates of entropic and negentropic transitions.
The latter two rows expose 
the overall density difference of non-zero entries in the transition matrix above or below the main diagonal. 
Figure \ref{fig:matrices89}\refpart{b} rather reveals narrow negentropic deviation from the main diagonal.
\item[\refpart{g}] The visual symmetry around the main diagonal in Figure \ref{fig:matrices89}\refpart{b}
indicates that negentropic transitions are often reciprocated by the reverse entropic transitions.
As the \refpart{g}-part 
of Table \ref{tb:summary} shows, the rate of the reverse transition is frequently not lower than 
the negentropic transition, and sometimes those pairs of transitions are balanced exactly.
\item[\refpart{h}] Continuing the theme of entropic vs negentropic transitions, we count the patterns which decay 
only entropically (or inertly), and the maximal negentropic transitions in terms of the \#-rank jump 
or the quotient of $v^{(1)}$-probabilities. 
A good benchmark for negentropic transitions is the quotient
\begin{equation} \label{eq:neg1}
n^2+\lambda^{(1)}
\end{equation}
that is characteristic for the transitions to the patterns which:
\begin{itemize}
\item have a single decay predecessor, with the transition rate 1;
\item have no inert decays. 
\end{itemize}
This follows from (\ref{eq:neg2}).
Appendix Section \ref{sec:regimes} and Figure \ref{fig:entropy2} investigate the negentropic part 
of the transition matrices in greater detail. 
\item[\refpart{i}] Here we count the patterns with just one, two or three predecessors. 
The corresponding eigenvector equations have merely a few terms; see (\ref{eq:neg2}) and 
Sections \ref{sec:ladirect}, \ref{sec:simplified10}.  
\item[\refpart{j}] Here we average the number of live cells in the patterns over the whole set of $N$ patterns 
(with the oscillators represented by the simple average over the phases) or with the $v^{(1)}$-weighting. 
The average density of live cells can be computed after division by $n^2$. 
The discrepancy of the two averages shows again unevenness of the $v^{(1)}$-distribution.
\end{enumerate}

\subsection{Equal components of eigenvectors}
\label{sec:equal}

Ordering the stabilized ``Life'' patterns according to the leading eigenvector
encounters a technical issue, namely possible equality of the eigenvector components.
As an example, consider Figure \ref{fg:parallel} with patterns on $T_9$. 
The delineated two sets of four patterns enjoy ``parrallel'' equalities 
\begin{equation} \label{eq:}
v_{4753}=v_{4754}, \quad v_{4968}=v_{4969}, \quad v_{5065}=v_{5066}, \quad v_{5280}=v_{5281}
\end{equation}
of the eigenvalue $(v_j)_{j=1}^N$ components, because:
\begin{itemize}
\item The transitions among them have the same corresponding frequencies (1, 2 or 4)
within each set of four patterns, as shown by the arrows in the delineated areas.
The arrows are placed at the cells to perturb for the relevant decays, 
and they point (roughly diagonally) toward the 
resulting pattern.
The relevant transitions are not affected by the phase of the Blinkers.
\item These patterns do not perturb into themselves, hence their ``self-decay'' rate equals $-81$.
\item The ``feeding'' patterns in the middle column decay to the delineated corresponding patterns
with the same frequency 1. 
\end{itemize}
For a linear algebra  calculation, let us copy the $v$-components of the delineated 
patterns into two 4-dimensional vectors: 
\begin{equation} \label{eq:equalvc9}
\vec{w}_j=\left( v_{4753+j}, v_{4968+j}, v_{5065+j}, v_{5280+j} \right)^T, \qquad j\in\{0,1\}.
\end{equation}
According to the indicated transitions,  both these vectors satisfy the following linear system:
\begin{align}
\left(\begin{array}{cccc} \lambda+81 & -1 & -4 & 0 \\  -2 & \lambda+81 & 0  & -2 \\  
-1 & 0 & \lambda+81 & -1 \\  0 & -1 & -4 & \lambda+81 \end{array}\right) \vec{w}_j =
\left(\begin{array}{c} v_{4437} \\ \!\!v_{4983}+v_{5165}\!\! \\ 0 \\ v_{5477}  \end{array}\right)\!.
\end{align}
The right-hand side reveals the common ``feeding'' 
patterns. The matrix is invertible when $\lambda\not\in\{-81,-81\pm2\sqrt3\}$.
Then $\vec{w}_0=\vec{w}_1$, particularly for the components of the leading eigenvector $\vec{v}^{\,(1)}$.
This leads to (\ref{eq:equalvc9}). 

In total, there are 49 pairs of patterns on $T_9$ 
with equal eigenvector components. An ordering within these pairs is defined in Section \ref{sec:acriterion}.
Let us refer to such a pair by the notation @$k$, where $k$ is the earlier \#-rank within the pair.
We group these pairs into {\em clusters} by extrapolating the presence of transitions among the patterns 
in these pairs into an equivalence relation. For example, Figure \ref{fg:parallel} displays  
the considered cluster of four pairs. 
We count 16 clusters of paired patterns; they are listed in Appendix Table \ref{tb:parallel}.

This parallelism of patterns becomes abundant 
on $T_{10}$. 
Remarkably,  there are 57 quartets (rather than pairs!) of patterns with the same eigenvector component.
As well, there are 3 triples and 3465 pairs of equal eigenvalue components. 
Let us refer to these pairs, triples or quartets as {\em bunches}.
We extend the notation @$k$ to all bunches, 
with $k$ equal the earliest participating 
\#-rank.  Except for 28 pairs and a triple, the patterns in each bunch are composed of the same islands.
Most frequent island configurations in the bunches are given in Appendix Table \ref{tb:parallel10}.
The exceptional bunches with differing islands in their patterns are these:
\begin{itemize}
\item The pairs @215467, @247360, @393345 consist of (pairwise) similar single islands 
that appear only once in the whole list of ground patterns. See the middle row of Appendix Figure \ref{fg:10x10}.
\item The pair @3440 and the triple @80541 include either still lives or period 2 oscillators.
These patterns consist of similar configurations of two islands. 
\item The 24 pairs
\begin{align} \label{eq:deltan}
& \hspace{-9pt} 
@295524+4325\!+\!13772\!+\!30193\!+\!4726\!+\!5328\!+\!38037\!+\!779\!+\!3335\!+\!13210\!+\!34827\!+\!5808,
 \qquad\nonumber \\ 
 & \hspace{-9pt} 
@451381+ 1760+437+821+2114+37060+2638+7392+1467+5949+401+59.
\end{align}
Here the summands after the first @-ids are equal to the gap differences of subsequent @-ranks 
of the involved patterns. 
\end{itemize}
The bunches are assembled into 232 clusters 
by the immediate transitions among themselves, 
as summarised in Appendix Table \ref{tb:parallel10}.
The size of a cluster is the number of bunches 
with the same eigenvector component. The largest cluster counts 634 pairs of patterns.

\begin{figure}[tbp]
\begin{center}
\begin{picture}(430,278)
\put(0,-17){\includegraphics[width=430pt]{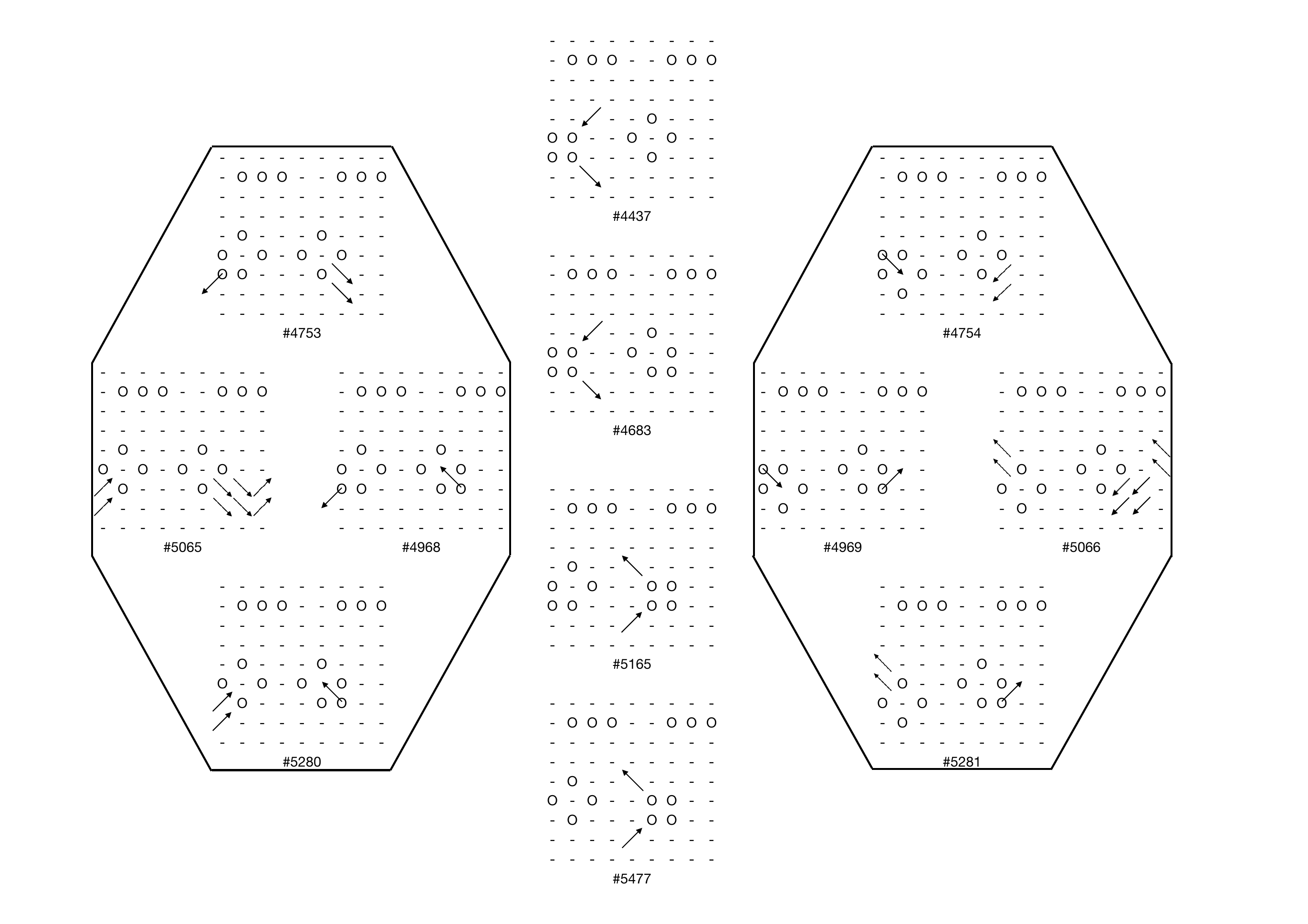}}
\end{picture}
\end{center}
\caption{The cluster of parallel pairs @4753, @4968, @5065, @5280 of patterns on the $9\times9$ torus,
together with their common predecessors \#4437, \#4683, \#5165, \#5477. 
The paired patterns have correspondingly equal eigenvector $\vec{v}^{\,(1)}$ components.
The arrows indicate the relevant transitions, as described in the text.}
\label{fg:parallel}
\end{figure}

Some equalities of eigenvalue components for $T_{10}$ 
follow less trivially than  by strict parallelism of the number of transitions into them. 
For shorthand, let $Eq(k)$ denote the equality $v_k=v_{k+1}$ of the components of the leading eigenvector $\vec{v}^{\,(1)}$,
and let $\Delta v_k$ denote the difference $v_k-v_{k+1}$. Then:
\begin{enumerate}
\item We have $Eq(299849)$ and $Eq(409229)$ from
\begin{equation}
\left(\begin{array}{cc} \lambda+90 & -1 \\  0 & \lambda+89   \end{array}\right) { v_{k} \choose v_{k+1} } =
{ \,v_{\ell} \choose \,v_{\ell} }.
\end{equation}
for $(k,\ell)\in\{(299849,276194),(409229,398202)\}$. 
By subtracting the two equations we get $(\lambda+90)\Delta v_k=0$.
These two pairs are in the list (\ref{eq:deltan}).
\item We have $Eq(511019)$, $Eq(511420)$, $Eq(511479)$ as
\begin{align*} \hspace{-7pt}
\left(\begin{array}{cccccc} \lambda+100 & -2 & -1 & 0 & 0 & 0 \\  -1 & \lambda+99 & 0  & -1 & 0 & 0 \\  
-2 & 0 & \lambda+100 & -1 & -1 & 0 \\  0 & -2 & -1 & \lambda+100 & 0 & -1 \\
0 & 0 & -1 & 0 & \lambda+100 & -1 \\  0 & 0 & 0 & -1 & -1 & \lambda+100 \end{array}\right) \!
\left(\begin{array}{c} v_{511420} \\ v_{511421} \\ v_{511020} \\ v_{511019} \\  v_{511479} \\ v_{511480} \end{array}\right)
\end{align*}
equals the vector $(u_1,u_1,u_2,u_2,u_3,u_3)^T$ with $u_1=v_{510849}+v_{511802}$, 
\mbox{$u_2=v_{510353}+v_{511429}$,} and $u_3=v_{510921}+v_{511834}$. 
It is instructive to consider the matrix sliced into $2\times2$ blocks.
By subtracting pairs of equations, we get a more transparent linear system for the differences:
\begin{align} \hspace{-6pt}
\left(\begin{array}{ccc} \lambda+101 & -1 & 0 \\  
-2 & \lambda+101 & -1 \\  0 & -1 & \lambda+101 \end{array}\right) \!
\left(\begin{array}{c} \Delta v_{511420}  \\ \!-\Delta v_{511019}\! \\  \Delta v_{511479} \end{array}\right)
= \left(\begin{array}{c} 0 \\ 0 \\ 0 \end{array}\right)\!.
\end{align}
\item Two similar eigenvector equations 
imply that $(\lambda+\frac{151}2)v_{j+413133}+3v_{418250-j}$
for \mbox{$j\in\{0,1\}$} evaluate to two sums of 16 pairwise identical terms and 35 equivalent terms, 
including the two terms $\frac32v_{j+420038}+v_{420039-j}$ in both sums
while $Eq(420038)$ holds. Likewise, the expressions
$4v_{j+413133}+(\lambda+80)v_{418250-j}$ for $j\in\{0,1\}$
evaluate to two sums of 7 pairwise identical terms and 16 pairwise equivalent terms. 
This implies $Eq(413133)$ and $Eq(418249)$. 
These two pairs merge two would be largest clusters of sizes 335 and 297 into the monster cluster of size 634.
\item We have $Eq(433952)$ and additionally $\frac12v_{440240}=Eq(441186)$ from
\begin{align*} 
& \textstyle (\lambda+96)v_{441186}=(\lambda+96)v_{441187}=\frac12v_{433953},
\quad (\lambda+96)v_{440240}=v_{433952}, \\
& (\lambda+96)v_{433953}-v_{441186}-v_{441187}=(\lambda+96)v_{433952}-v_{440240}=u_4,
\end{align*}
where $u_4=v_{j+426786}+v_{j+432015}+v_{j+433627}+2v_{j+438521}$ for either $j\in\{0,1\}$. 
This is a part of the third largest cluster of size 168. 
\item Most non-trivially, the following vector is zero:
\begin{align}
\big( & \Delta v_{389017}, \Delta v_{398238}, \Delta v_{405305}, \Delta v_{406492}, 
\Delta v_{415270}, \Delta v_{416253}, \Delta v_{423954}, \qquad\nonumber \\
&\; v_{396519}-v_{396521}-v_{415344}, \; \Delta v_{389286}-v_{423906} \big).
\end{align}
The reason is that this vector is sent to zero by the matrix
\begin{align}
\left( \begin{array}{ccccccccc} 
\!\lambda\!+\!91\! & -1 \\  -4 & \lambda+99 & -1 \\
& -2 & \!\!\lambda\!+\!95\! && -2 &&& -2 \\
&&& \!\!\lambda\!+\!100\!\! & -2 & -4 && -2 \\
&& -1 && \!\!\lambda\!+\!100\! & -6 & -16 \\
&&& -1 & -4 & \!\!\lambda\!+\!100\! \\ &&&& -1 && \!\lambda\!+\!100\! \\
&& -2 &&&&& \!\!\lambda\!+\!99\! & -8 \\ 
&&&&&&& -1 & \!\!\lambda\!+\!95\!
\end{array} \right) \;
\end{align}
according to the eigenvector equations and the easier 
equalities $Eq(372341)$, $Eq(372493)$, $Eq(414804)=Eq(414806)$.  
The matrix with $\lambda=\lambda_1$ is invertible as 
it is strictly diagonally dominant \cite[Corollary 6.2.27\refpart{a}]{Matrix85}. 
These patterns form a part of a cluster of size 22.
\end{enumerate}
The cases of very similar eigenvector equations giving nearly equal eigenvalue components are interesting as well;
see Appendix Sections \ref{sec:aequal} and \ref{sec:simplified10}.

\subsection{Additional criterion for ordering the patterns}
\label{sec:acriterion}

The considered equalities of eigenvector components complicate definition of a unique ranking order of the ground patterns.
We choose the following algorithmic definition for ordering patterns in the {\em bunches}, that is, 
in the pairs, triples or quartets of patterns with the same eigenvalue component.
\begin{definition}
If two patterns $A$, $B$ have equal components of the leading eigenvector: 
\begin{itemize}
\item[\refpart{C1}] Consider the patterns \underline{to which} $A$ and $B$ decay.
The patterns within any bunch 
are considered temporarily as having the same \#-rank,
and the decay rates to them from either $A$ or $B$ are added up. 
\item[\refpart{C2}] Choose the pattern (or the bunch) with the \underline{lowest} \#-rank
to which the (aggregate) decay rates from $A$ and $B$ differ.
\item[\refpart{C3}] The pattern $A$ will precede $B$ in the \#-ranking if the (aggregate) decay rate to the chosen pattern
(or bunch) from $B$ is greater than from $A$. By symmetry, $B$ will precede $A$ if the (aggregate) decay rate 
to the chosen pattern(s) from $A$ is greater.
\item[\refpart{C4}] If $A$ and $B$ decay to the same patterns and bunches with the same decay rates,
the ranking is still unresolved. This does not happen for ground patterns on $T_9$ and $T_{10}$. 
\end{itemize}
\end{definition}
\noindent Section \ref{sec:data} gives references to the final ranking on $T_9$ and $T_{10}$. 
Here is some practical commentary to the algorithmic criterion:
\begin{itemize} 
\item[\refpart{D1}] Since $A,B$ themselves form a pair (or a part of a larger bunch),
their self-decay rates are compared in the context of the same \#-rank. 
If $A,B$ decay to each other, those decay rates are added to the (negative) self-decay rates correspondingly.
For an instructive example, consider $Eq(409229)$ of the case \refpart{i}.
\item[\refpart{D2}] Most often, a pattern (or a bunch) will be chosen
to which only one of the patterns $A,B$ decays. 
Predominantly, we are comparing the decay products of $A,B$ of the lowest rank.
\item[\refpart{D3}] The rationale for the lesser rank of that $A$ or $B$ with the lower residual bottommost ranked decay
is to soften the most negentropic (that is, least entropic) residual decay.
As most patterns decay to the top 10 configurations frequently, 
comparison of the highest ranked decay products is not compelling.
\item[\refpart{D4}] Basically, the algorithmic procedure compares the columns (corresponding to $A$, $B$) 
of the transition matrix from the bottom. 
The rows of $A$, $B$ are the same after the bunch identification in \refpart{C1},
except in the rare cases \refpart{iv}, \refpart{v} of the previous section. 
The same entropic rationale would imply comparing the rows 
from the left, but that would not resolve any additional cases (for larger toruses) probably.
\end{itemize}

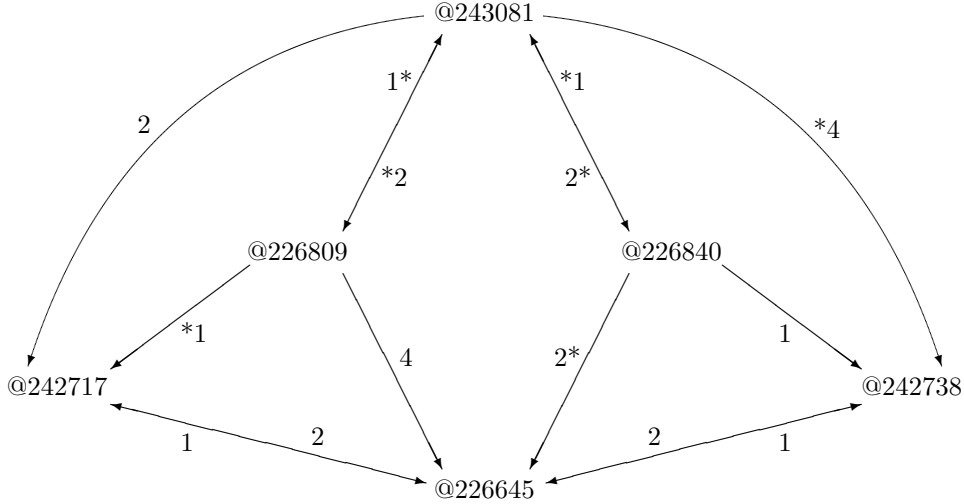
\begin{figure}
\begin{center}
\begin{picture}(350,200)(0,-5)
\put(0,39){@242717}  \put(160,0){@226645}  \put(320,39){@242738}
\put(90,90){@226809}   \put(230,90){@226840}  \put(160,180){@243081}
\put(126,85){\vector(1,-2){37}}  \put(233,85){\vector(-1,-2){37}} \put(147,50){4}  \put(205,50){2*}
\put(126,101){\vector(1,2){37}}  \put(233,101){\vector(-1,2){37}} \put(142,154){1*}  \put(207,154){*1}
\put(126,101){\vector(-1,-2){0}}  \put(233,101){\vector(1,-2){0}} \put(140,118){*2}  \put(209,118){2*}
\put(91,88){\vector(-4,-3){52}} \put(268,88){\vector(4,-3){52}} \put(65,59){*1}  \put(289,59){1}
\put(157,6){\vector(-4,1){118}}  \put(202,6){\vector(4,1){118}} \put(65,18){1}  \put(289,18){1}
\put(157,6){\vector(4,-1){0}}  \put(202,6){\vector(-4,-1){0}} \put(114,20){2}  \put(240,20){2}
\qbezier(156,182)(46,170)(8,50)  \qbezier(201,182)(312,170)(350,50)
\put(8,50){\vector(-1,-3){0}}  \put(350,50){\vector(1,-3){0}}  \put(49,138){2}  \put(302,136){*4}
\end{picture}
\end{center}
\caption{Some parallel transitions between 6 pairs from a cluster of 15 pairs. The transitions labelled by * transform 
the first pattern of a source pair to the second pattern of a pair of decay products  (in the established ordering), 
and the second source pattern is mapped to the first decay product. Parity of the number of these askew *-transitions along 
the triangles or the central quadrangle is preserved under change of ordering within the pairs. Unwelcome *-transitions
are unavoidable in the lower triangles and the central quadrangle.}
\label{fg:obstructed}
\end{figure}

The decay transitions between patterns within a cluster often link the patterns from bunches
neatly into parallel non-crossing orbits, like in the example of Figure \ref{fg:parallel}.
Then it is possible to coordinate the order of patterns within the bunches so that the decay transitions
would conserve the relative orderings in the bunches. 
But establishing the coordinated ordering in large clusters would be computationally heavy.
And this coordination is not always possible generally due to blending of 
the developing (expectedly parallel) orbits. 
Concrete obstructions are discerned in cyclical paths where the implied equivalence of involved patterns 
by decay transitions (ignoring their direction) does not allow separated parallel orbits within bunches. 
This is analogous to one-sidedness of the M\"obius strip \cite[Lect.~14]{fuchs07}. 
Figure \ref{fg:obstructed} demonstrates an example of this impossibility in a convoluted portion a cluster of 15 pairs,
based on parity invariance of askew transitions under attempts to change the ordering in some pairs.
Other instances of obstructions  are simpler:
\begin{itemize}
\item[\refpart{i}] Another triangle of pairs @256423, @262172, @277276 in the same cluster of size 15.
\item[\refpart{ii}] Two quadrangles 
@309315, @323510, @313771, @323878, and 
@309315, @331182, @321194, @331363 (with the common pair @309315) in the cluster of size 51.
\item[\refpart{iii}] Similar two quadrangles @275286, @289181, @279528, @289545, and 
@275286, @296786, @286798, @296959 in the cluster of size 75.
\item[\refpart{iv}] In the cluster of size 113, the path from @434340 to @448869 via the sequence
@177777, @407302, @121526 is not compatible with the shorter paths via @415386 or @421809 
for neat parallel separation.
\item[\refpart{v}] In the cluster of size 168, the two paths from @229500 to @261895 
via @243306 or @243962 are not compatible with the three paths
via @185136, @202375 or @207635. 
\item[\refpart{vi}] In the cluster of size 634, the path from @411418 to @430921 via @401749
is not compatible with the three paths via @413133, @419147 or @419170.
\end{itemize}

\section{Computations}
\label{sec:compute}

Encountered computational issues and employed algorithms are worth attention.
Section \ref{sec:arnas} here describes briefly our algorithmic methods to generate and recognise ground patterns.
Section \ref{sec:ladirect} presents possibilities of direct computation of the eigenvalues and eigenvectors.
Section \ref{sec:laiterate} describes our iterative methods for finding the leading eigenvalue and eigenvector.
Section \ref{sec:data} introduces the generated data that is available on Github \cite{github}. 

\subsection{Generating the ground patterns}
\label{sec:arnas}


The task of finding ground patterns can be computationally intensive especially for larger tori as their number $N$ grows as a quadratic exponential.
Certain optimisations are necessary for the calculation to run in reasonable amount of time. 
We encoded and stored patterns on the torus grid as 128-bit integers where each bit signifies the sate of an individual cell. 
Basic arithmetic operations (modulo the torus side length $n$) were used to relate the 2-dimensional coordinates $(x,y)$ to the 128-bit integer $r$:
\begin{align}
    \text{Coordinates}_{n}(i) & = { r \bmod n \choose \lfloor r / n \rfloor}, \\
    \text{Index}_n{x\choose y} & = x + y n.
\end{align}
This approach allows us to leverage bit-wise operations for hefty optimisation. 
For example, to calculate how many alive neighbours a certain cell has, 
we mask all neighbouring bits for that cell and take the pop-count which amounts to very few CPU instructions. 
These {\em neighbour masks} can be calculated once for each cell and reused indefinitely. 
For masks of cells on the edge of the torus grid, active bits wrap around the edges to ensure 
proper implementation of Conway's ``Life'' on the torus, as seen in Figure \ref{fig:masks}\refpart{a}. 

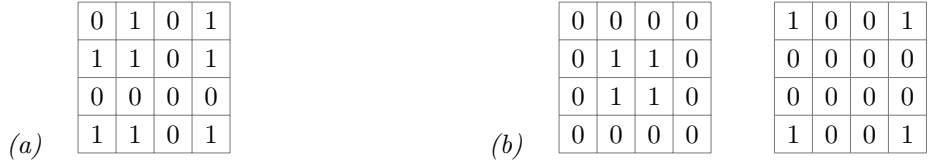
\begin{figure}
\centering \refpart{a} \quad 
\begin{tikzpicture}
\draw[step=0.5cm,color=gray] (-1,-1) grid (1,1);
\node at (-0.75,+0.75) {0};
\node at (-0.25,+0.75) {1};
\node at (+0.25,+0.75) {0};
\node at (+0.75,+0.75) {1};

\node at (-0.75,+0.25) {1};
\node at (-0.25,+0.25) {1};
\node at (+0.25,+0.25) {0};
\node at (+0.75,+0.25) {1};

\node at (-0.75,-0.25) {0};
\node at (-0.25,-0.25) {0};
\node at (+0.25,-0.25) {0};
\node at (+0.75,-0.25) {0};

\node at (-0.75,-0.75) {1};
\node at (-0.25,-0.75) {1};
\node at (+0.25,-0.75) {0};
\node at (+0.75,-0.75) {1};
\end{tikzpicture}
\hspace{90pt}  \refpart{b} \quad 
\begin{tikzpicture}
\draw[step=0.5cm,color=gray] (-1,-1) grid (1,1);
\node at (-0.75,+0.75) {0};
\node at (-0.25,+0.75) {0};
\node at (+0.25,+0.75) {0};
\node at (+0.75,+0.75) {0};

\node at (-0.75,+0.25) {0};
\node at (-0.25,+0.25) {1};
\node at (+0.25,+0.25) {1};
\node at (+0.75,+0.25) {0};

\node at (-0.75,-0.25) {0};
\node at (-0.25,-0.25) {1};
\node at (+0.25,-0.25) {1};
\node at (+0.75,-0.25) {0};

\node at (-0.75,-0.75) {0};
\node at (-0.25,-0.75) {0};
\node at (+0.25,-0.75) {0};
\node at (+0.75,-0.75) {0};
\end{tikzpicture}
\qquad 
\begin{tikzpicture}
\draw[step=0.5cm,color=gray] (-1,-1) grid (1,1);
\node at (-0.75,+0.75) {1};
\node at (-0.25,+0.75) {0};
\node at (+0.25,+0.75) {0};
\node at (+0.75,+0.75) {1};

\node at (-0.75,+0.25) {0};
\node at (-0.25,+0.25) {0};
\node at (+0.25,+0.25) {0};
\node at (+0.75,+0.25) {0};

\node at (-0.75,-0.25) {0};
\node at (-0.25,-0.25) {0};
\node at (+0.25,-0.25) {0};
\node at (+0.75,-0.25) {0};

\node at (-0.75,-0.75) {1};
\node at (-0.25,-0.75) {0};
\node at (+0.25,-0.75) {0};
\node at (+0.75,-0.75) {1};
\end{tikzpicture} \vspace{10pt}
\caption{\refpart{a} The neighbourhood bit-mask for a cell located at the upper left corner. 
The cells with the value 1 are the neighbours. \refpart{b} Two patterns on the torus representing Block still-life.}
\label{fig:masks}
\end{figure}


Any initial configuration on a finite torus settles inevitably into a still-life or an oscillator,  
as mentioned in Section \ref{sec:small}.  
We implemented an algorithm for oscillation detection by storing visited patterns in an unordered set,
and checking whether the current pattern was encountered in the previous steps past of the ongoing ``Life'' bout.


Different perturbations of stabilised patterns may evolve to the same still-life or oscillator. 
If two oscillators have an instance of equivalent (up to the torus symmetries) phase patterns in common, 
the oscillators are considered as the same cycling pattern. But equivalent patterns may be stored as different bit representations;
see Figure \ref{fig:masks}\refpart{b}. 
The torus symmetries are generated by the two-dimensional offset shifts on the torus and the dihedral group $D_4$, 
as mentioned in Section \ref{sec:small}. 
We compute naively all $8n^2$ possible encodings of a found still life or of all phases of a found oscillator
as 128-bit integers, and choose the minimal integer as a canonical representative of that still-live or oscillator. 
Needless this to say, this approach is very inefficient, but it is sufficiently fast for our purposes.

\subsection{Linear algebra: direct methods}
\label{sec:ladirect}

The characteristic polynomial for $M_8$ 
can still be found symbolically in about 35 minutes using {\sf Maple 2023}  on a 2 GHz MacBook Pro laptop. 
The distribution of eigenvectors is depicted in Appendix Figure \ref{fg:egvs}\refpart{b}.
Even a symbolic expression like (\ref{eq:ev5}) of the leading eigenvector might be feasible
thanks to the sparsity in most rows of the transition matrix.
Symbolic factorisation of many eigenvector components can be understood through Cramer's rule \cite[\S 0.8.3]{Matrix85}
and reducibility of sparser maximal minors after deletion of the most sparse row.

The transition matrix for our Markov models becomes more sparse as the size of the torus increases.
The sparsity is rather uniform across the columns; see the thick curve in Figure \ref{fg:colrowdensity}.
There are only 3 columns (\#992, \#7297, \#31900) with more than 70 non-zero entries in $M_{10}$; 
they have at most 153 non-zero entries. 
The distribution of non-zero entries is far less uniform across the rows, as evident in Figure \ref{fig:matrices89}\refpart{b}.
The top 10 rows of $M_{10}$ have the following number of non-zero entries (and density percentages), respectively:
\begin{align} \label{eq:toplength}
483144\ (94.0\%),\quad 413427\ (80.5\%),\quad 506128\ (98.5\%),\quad 509915\ (99.2\%),\quad 512225\ (99.7\%), \nonumber\\
463225\ (90.1\%),\quad 142000\ (27.6\%),\quad 449937\ (87.6\%),\quad 236770\ (46.1\%),\quad 491067\ (95.6\%).
\end{align}
They contain the eight densest rows. 
Nearly 58\% 
of rows have just 4--10 non-zero entries; see the thin curve in Figure \ref{fg:colrowdensity}.
\begin{figure}
\begin{center}
\begin{picture}(380,160)
\put(-4,-2){\includegraphics[width=352pt]{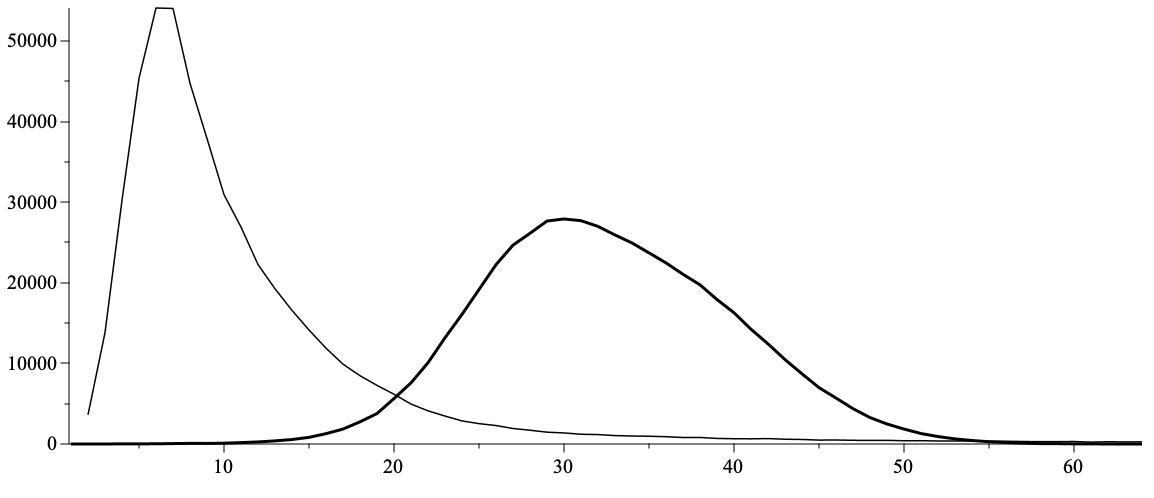}}
\end{picture}
\end{center}
\caption{The distributions of column density (the thick curve) 
and row density (the thin curve) of the transition matrix for the $10\times10$ torus.
The horizontal axis presents the number of non-zero entries in a column or a row;
the vertical axis shows the corresponding number of columns or rows.} 
\label{fg:colrowdensity}
\end{figure}
As most decay transitions follow the direction of increasing Boltzmannian entropy, 
the matrix is particularly sparse under the main diagonal.
The lower-triangular part of $M_{10}$ is up to 8.7 times dense than the upper-triangular part;
compare the bottom two rows in part \refpart{f} in Table \ref{tb:summary}.

There are many pairs of rows of $M_{10}$ that are equal in nearly all entries. 
Differences of the nearly equal rows give simpler equations for computing the eigenvector directly
once a pertinent eigenvalue is known.
The cases of (almost) equal eigenvalue components in Sections \ref{sec:equal} and \ref{sec:aequal} 
provide a bounty of these examples. The number of possible substantial simplifications increases
when the transition matrix is contracted by collapsing the bunches (with equal eigenvalue components)
into single representatives. Section \ref{sec:simplified10} describes the mass of these simplifications.

Of more certain usefulness are those linear relations between eigenvector components that have integer coefficients.
For shorthand, let us call them {\em integer relations}.
They can be used to eliminate some components without proliferating high-precision real numbers 
(or the symbolic eigenvalue $\lambda$) in entries of the adjusted matrix.
Reduced size of the transition matrix allows faster iterations in computation of the leading eigenvalue.
The 
number of integer relations must equal the deficit between the size of the matrix and the algebraic degree of the eigenvalue. 
The integer relations could be dense and involve large coefficients.
Compact integer relations reflect particular relationships between rows of $M_n$ 
and special decay patterns into the involved ``Life'' configurations; 
consider the examples \refpart{iv}, \refpart{v} in Section \ref{sec:equal}. 

Similarly there may be several compact integer relations between components of the leading eigenvector 
$\big(w^{(1)}_k\big)_{k=1}^{N}$ of the transposed matrix $M_{n}^{\,T}$. 
For shorthand, let us refer to this eigenvector as the {\em transposed eigenvector}.
Those compact relations suggest linear combinations of columns (that is, variable changes) toward a block-triangular structure. 
For example, the downstream pattern \#14342 does not decay to other ground patterns, 
yielding $w^{(1)}_{14342}=0$ immediately given that its self-decay rate $-60$ 
is not the leading eigenvalue. The dual implication is that $v^{(1)}_{14342}$ does not influence the other $v^{(1)}$-components,
hence it can be ignored temporarily and computed at the latest stage.

If $M_{n}$ has integer eigenvalues, then the corresponding transposed eigenvectors $\vec{w}$ give integer relations
between the components of $v^{(1)}$, or even between the components of any other eigenvector $v^{(j)}$ not corresponding to those integer eigenvalues.
This follows by the same argument as in Lemma \ref{th:evortho}. Vice versa, the eigenvectors $\vec{v}$ corresponding to integer eigenvalues
give integer relations for $w^{(1)}$ and most other eigenvectors of $M_n{\,T}$. 

We skip the upper indices ${}^{(1)}$ in the rest of this subsection for shorthand.
Here below are observations about integer relations for the considered $M_n$,
beside the bunch equalities of Section \ref{sec:equal}. 
\begin{itemize}
\item There are two independent relations for $M_{5}$ eigenvectors, 
namely $v_7=2v_8$ and $v_6=v_7+2v_9$.
The dual relations for the transposed eigenvector 
are:  $5w_5+2w_7=2w_8$, $8w_7=3w_8+2w_9$.
\item For $M_6$ 
we have two independent relations as well: $v_{13}=2v_{17}$, and
\begin{equation}
7062v_{20} = 1141v_{21} + 367v_{22} + 3941v_{24} + 5810v_{25} - 2354v_{26} + 32728v_{27} - 648v_{28}.
\end{equation}
Instead of the second relation, one could rather utilise the downstream block $w_{27}=0$.
\item There are six independent relations for $M_{7}$. 
The simplest relation is
\begin{align} 
v_{26} & = 2v_{28} - 4v_{38} - 2v_{40} - 4v_{42},
\end{align}
ant the next one can be written as 
\begin{equation*}
\textstyle\! \frac{260}9( v_{14}-2v_{16}+2v_{18}-2v_{20}-v_{21}) +v_{28} 
= v_{29} + 3v_{34} - \frac{529}{12}v_{35} + \frac{34}3v_{38} 
- \frac{80}{9}v_{39} + \frac{989}{27}v_{40} - \frac{82}{9}v_{41} - v_{42}. \nonumber
\end{equation*}
The others  
have coefficients with at least 27 digits. 
As noted before, there is a downstream block of size 4. 
Leaving this block out downsizes the transition matrix most practically.
\item There are no integer relations for $M_{8}$, 
as the characteristic polynomial is irreducible over $\QQ$, which implies the full algebraic degree $N=305$ of the eigenvalues.
The irreducibility follows from incompatible factorisations modulo 17 and 19.
\item A numerical investigation for $M_9$ 
found these simple relations:
$v_{4471}=2v_{4526}$, \mbox{$v_{4772}=2v_{4813}$,} 
$\;v_{2056}\!=\!v_{2059}+v_{2755}$, 
$\;v_{5864}\!=\!v_{5884}+v_{5892}$,
$\;v_{6020}\!=\!v_{6048}+v_{6057}$, 
\mbox{$v_{6176}=v_{6220}+v_{6227}$,}
$v_{6186}=v_{6221}+v_{6252}$, 
$v_{6303}=v_{6328}+v_{6361}$.
No simple relations of these forms were found for the transposed eigenvector components $w_j$. 
\item A numerical investigation for $M_{10}$ 
found the following 11 quotients equal to 2:
\begin{align*}
\!\frac{v_{285655}}{v_{289351}}, \frac{v_{309597}}{v_{313415}}, \frac{v_{375268}}{v_{377475}},
\frac{v_{412539}}{v_{413971}}, \frac{v_{449403}}{v_{450122}}, \frac{v_{453622}}{v_{454304}},
\frac{v_{454246}}{v_{454900}}, \frac{v_{458544}}{v_{459242}}, \frac{v_{504635}}{v_{504862}},
\frac{v_{506208}}{v_{506369}}, \frac{v_{510475}}{v_{510584}}.
\end{align*}
The following equalities  for the transposed eigenvector components $w_j$ 
were found: \mbox{$w_{204}=w_{815}$,} $w_{219}=w_{500}$, $w_{409}=w_{757}$, $w_{441}=w_{1728}$,
$w_{493}=w_{890}$, $w_{616}=w_{618}$.
\end{itemize}

\subsection{Linear algebra: iterative methods}
\label{sec:laiterate}

Direct iterative computation of the leading eigenvalue $\lambda^{(1)}$ and its eigenvector $\vec{v}^{\,(1)}$ 
is problematic for the largest $10\times10$ torus, 
because high precision is required due to the wide range in the $\log_{10}$-orders of magnitude of $\vec{v}^{\,(1)}$-components,
and the components of iterated vectors jump around various magnitudes for many hundreds of iterations.
Apparently, the components do not start to stabilise until their orders of magnitude are settled.
The general iterative solution
\begin{equation} \label{eq:itgensol}
(M_n+\sigma I)^k \, \vec{u}_0=\sum_{i=1}^N C_i\big(\lambda^{(i)}+\sigma\big)^k\,\vec{v}^{\,(i)}
\end{equation}
is analogous to (\ref{eq:m0gensol}); 
here $C_i$ are the coefficients in the expression of an initial vector $\vec{u}_0$ in terms of the eigenvectors $\vec{v}^{\,(i)}$.
The diagonal shift by $\sigma I$ should aim at these conditions: 
\begin{itemize}
\item The shifted leading eigenvalue $\lambda^{(1)}+\sigma$ becomes the largest in the real part.
\item $\sigma\in\ZZ$, so that multiplication by $M_n+\sigma I$ would use simpler arithmetic of multiplication by rational numbers, mostly integers.
\item The eigenvalues $\lambda^{(N)}$ and $\lambda^{(2)}$ with, respectively, the smallest and the next largest real part, appear to be real numbers
and the two candidates for the next largest eigenvalue (in the absolute value) after the shift by $\sigma$. 
We should aim at  $|\lambda^{(2)}+\sigma|\approx |\lambda^{(N)}+\sigma|$, so that the contrast with $\lambda^{(1)}+\sigma$ would be greatest,
giving faster convergence to the direction of $\vec{v}^{\,(1)}$. This requires low-precision estimates of $\lambda^{(2)}$ and $\lambda^{(N)}$;
see the last two rows in the upper part of Table \ref{tb:summary}.
\end{itemize}
We choose $\sigma=73$ for $M_{10}$ accordingly. 
An iteration then diminishes the summands in (\ref{eq:itgensol}) with $k>1$ by at least the factor
\begin{equation}
\frac{\max( |\lambda^{(2)}\!+\sigma|, |\lambda^{(N)}\!+\sigma|)}{\lambda^{(1)}+\sigma} \approx 0.74955
\end{equation}
per iteration, relative to the first term. This should increase the precision by one decimal digit every $8$ iterations,
since $\log_{10} 0.75 \approx 0.125$. 
One iteration takes about 130\,s with Maple 2023 on a Mac Book Pro laptop with 2 GHz processor.
But the coefficients $C_k$ (and components of the eigenvectors) appear to differ wildly by orders of magnitude for generic $\vec{v}_0$,
and various summands in (\ref{eq:itgensol}) likely become dominant through many iterations for most vector components.
This is dramatically exemplified at the end of this section. 
As a contributing detail, Figure \ref{fg:egvs} suggests that many eigenvalues clump close to $\lambda^{(N)}$.

On the other hand, the dominant eigenvector for the transposed matrix of a Markovian process consists of ones only.
Similarly, the dominant transposed eigenvector of 
$M_n^{\,T}$ have components of about the same magnitude. 
For example, the components of this vector for $M_{10}^{\,T}$ 
differ at most by the factor 8.3594 (excepting the zero component \#14342).
The transposed dominant eigenvector is straightforward to compute iteratively with the expected convergence rate. 
This computation gives the leading eigenvalue $\lambda^{(1)}$ to a requisite precision.
Then we can find $\vec{v}^{\,(1)}$ by solving (\ref{eq:basicla}). 
The dimension of this linear system can be reduced (while preserving the diagonally dominant structure) by these measures:
\begin{enumerate}
\item We can utilize known equalities of the $\vec{v}^{\,(1)}$-components, or their simple integer relations; 
see Sections \ref{sec:equal}, \ref{sec:ladirect} and Table \ref{tb:parallel10}.
\item We can throw out the downstream patterns (like \#14342) to which decays are irreversible, 
or by utilizing known simple integer relations for components of the transposed eigenvector $\vec{w}_j$.
\item We can throw out one of the densest rows (not from a downstream block) 
because of the defining linear dependence in (\ref{eq:basicla}). 
The corresponding column can be considered as the non-homogeneous part $\vec{y}$ of the working linear system.
\end{enumerate}
Then one can use either direct solving methods of Section \ref{sec:ladirect}, or an iterative method. 
Orthogonalisation methods do not apply easily, as the most significant eigenvectors do not deviate much
from the leading one by the usual metrics.

The obtained working matrix $\widetilde{M}$ is very sparse under the main diagonal and is close to an upper-triangular matrix, just as $M_{10}$.
We applied the Gauss-Seidel iteration method \cite[\S 11.2]{Golub12} by employing the non-zero entries under the diagonal
with components of the current (older) vector, and combining the entries above the diagonal with freshly (bottom-up) computed
components. Each iteration \mbox{$\vec{u}_k\mapsto \vec{u}_{k+1}$}  is solving 
\begin{equation}
(\widetilde{M}-L)\,\vec{u}_{k+1}=-L\vec{u}_k+\vec{y},
\end{equation}
where $L$ is the submatrix with non-zero entries under the diagonal, and $\widetilde{M}-L$ is the complementary upper-triangular matrix.
The relative discrepancy \mbox{$\max (\vec{u}_{k+1}-\vec{u}_k)/\vec{u}_k$} appears to decrease exponentially as $\approx 0.48355^k$, 
so we gain a binary bit of precision per each iteration. The used provisional ordering of ground patterns was already close to the final ordering.

As an aside, applying this working procedure to the small matrices $M_6,M_7$ (in the final ordering) showed that 
only a few choices in step \refpart{iii} lead to efficient convergence, up to $0.53923^k$ for $M_6$ and $0.37906^k$ for $M_7$.
Choosing the opposite decomposition of $U$ and $\widetilde{M}-U$, where $U$ is the denser part of $\widetilde{M}$ 
above the main diagonal, may give somewhat better convergence rate: 
up to $0.49535^k$ for $M_6$, and $0.34475^k$ for $M_7$.

It is tempting to check the obtained leading eigenvector by iterating the multiplication by $M_{10}+\sigma I$.
Surprisingly, the eigenvector gets gradually perturbed for about 130-140 iterations in the last 20--30 computed decimal digits,
and then gradually converges to the same vector with the presumed accuracy of 140-180 digits in about 200-400 iterations,
apparently depending on the additional accuracy of 20--50 digits of arithmetic operations.
This behaviour was observed several times when the arithmetics precision was increased from various levels
of computed accuracy. Apparently, the linear coefficients for random error terms (in terms of the eigenvectors) 
span similar orders of magnitude, and additional arithmetics accuracy of about 50 digits would be needed for this iterative computation.

\subsection{Produced data}
\label{sec:data}

The computed data is available at the Github repository \cite{github}. It includes the $\vec{v}^{\,(1)}$-ordered lists 
of ground patterns on the considered toruses, together with their island composition,
their components in the leading eigenvector, the decay data for the (sparse) transition matrices,
and additional data for searching and recognising the ground patterns. 
As well, the repository includes files that overview graphically the occurring islands in the ground patterns, 
list the bunches (pairs, triples and quartets) with equal eigenvalue components,
more detailed versions of Figures \ref{fig:matrices89},  \ref{fig:entropy}\refpart{e}, \ref{fig:entropy2}\refpart{a}\,--\refpart{b},
and more complete information regarding Tables \ref{tb:parallel10}, \ref{tb:simplify}.


\section{Interpretation}
\label{sec:interpret}

John Conway gave name ``Life" to his cellular automaton with a clear suggestion of modelling the manifest 
complexity of organic and evolutionary phenomena. The idea that organic complexity ought to arise 
from simple deterministic rules by computational evolution of discrete components 
was taken to a logical pinnacle 
by Stephen Wolfram \cite{Wolfram02}. On the other hand, 
emergence of complex life-like forms is frequently 
modelled by genetic algorithms \cite{genetic17} 
that imitate genetic variation, self-replication and 
selection of phenotypes directly. 
Study of self-replication of explicit programs within computational environments has been revived recently \cite{Arcas24}.

This article offers a stochastic perturbation of Conway's game ``Life'' towards integration with genetic algorithms. 
This minimal introduction of stochasticity to a deterministic cellular automaton appears to be a compelling 
theoretical step, as the dynamical analysis changes substantially, and novel resources for development of complexity come into view.
The presence of stochasticity represents fickleness of the environment, and a sense of adaptation arises.
In contrast to deterministic modelling, a stochastic component evokes statistical resilience of organic and complex phenomena.

Finding the following examples of configurations would amount to 
an outstanding 
prototype for modelling the persistence of organic life:
\begin{itemize}
\item[\refpart{E1}] A non-empty pattern (on an infinite plane grid or a torus) that is immune to any single decay mutation.
If we would allow only perturbations of empty cells to live cells, the pattern \#10 in Figure \ref{fg:lifet5} would already be an example.
\item[\refpart{E2}] An orbit of non-empty patterns that decay only to each other.
\end{itemize}
The leading eigenvalue $\lambda^{(1)}$ of the Markovian process of such an immortal orbit of patterns would equal $0$, 
just as the dominant eigenvalue of the empty space.
Meagre odds for these configurations to arise by the described decays 
are probably similar to 
the chemical odds of abiogenesis. Our perturbed variant of Conway's ``Life'' 
could provide impressive examples of how unlikely exceptional 
arrangements emerge nonetheless.

As an intermediate goal, we may seek orbits of ``Life" patterns whose Markovian processes 
have $\lambda^{(1)}$ 
ever closer to $0$. That would imply very rare decays to the empty space
or to simple overly familiar configurations, or to any other configurations outside the orbit.
A new record $\lambda^{(1)}$ could be estimated from an incomplete knowledge of the whole orbit,
just by assuming that some complex decays to go outside the orbit irreversibly.
The orbits with relatively good $\lambda^{(1)}$ could be leveraged by copying their patterns several times,
say, after doubling the size of the torus. 
As most decay perturbations change patterns locally a little, 
it is expectable that the orbits obtained from such enlarged configurations will often be slightly more stable. 
Similarly, unbounded repetitions and combinations of promising patterns 
may provide a productive ``soup'' for feeding appealing spread of configurations.

\subsection{Life as exceptional chemistry}

The considered ground configurations on small toruses are not convincing as model prototypes for living 
or mere complex artefacts. The leading eigenvalue $\lambda^{(1)}$ should be much closer to $0$ to represent persistence. 
At best, the ground patterns will play a role of background matter or ``quantum'' condensate
of the simplest somewhat stable building blocks. There must probably be a long chain of ingenious 
stabilising arrangements and coincidences before anything like lively, craving and recreating forms
could be recognised. Yet, any conspicuous novelty standing out of the abundance of established interactions
should be appreciated as a potentially pivotal next stage towards a splendid life form.
What is life if not a progression 
of exceptional 
biochemical and physical innovations at each level of organic complexity?

The questions of existence of completely or remarkably stable obits of configurations 
are comparable to Hilbert's tenth problem of solving Diophantine equations.
Determining solvability in general families of algebraic equations in integers 
is known to be undecidable algorithmically \cite{JonesDph80, Poonen08}. 
Consequently, absence of solutions to some Diophantine equations is a paradigmatic example 
of G\"odel's incompleteness. 
Enumeration of ``Life" patterns on an infinite plane which generate sufficiently stable 
decay orbits could analogously be undecidable by the Turing machine, 
and lead to a novel context for G\"odel's incompleteness.
Here ``sufficiently stable'' may mean \refpart{E1}, \refpart{E2}, or that $\lambda^{(1)}$ 
is greater than a predetermined  (negative) bound.
Quite similarly, this enumeration on larger toruses may be an NP-hard problem 
where heuristic considerations would improve the thorough search 
fairly marginally. Extending the analogy further from 
``Life'' to organic chemistry, we may surmise that freshly potent organic structures
emerge as surprise X-factors \cite{xfactor} rather than compactly calculable
or elegantly logical consequences of fundamental physical dynamics. 
They are mostly stumbled upon by evolution, observation, experience, or combinatorial computation.

Where to start searching for exceptional decay orbits of ``Life" patterns? 
Of the considered configurations, the most promising are those in the lower left corner of Figure \ref{fig:entropy}\refpart{d}.
They decay more rarely to the empty space \#0 (see the vertical axis)
and decay most negentropically (see the horizontal axis). 
We would like to have their decay products to stay in the lower left corner as well, but that is much to ask.
Before considering larger toruses, it is worthwhile to consider stable combinations of several top 8 patterns 
that are absent in the list of ground configurations.
For example, Figure \ref{fig:pairsEASN}\refpart{b} indicates a few missing Ship+Blinker pairs by the X-symbol.
Very likely, the missing pairs (or triples, larger sets) decay mostly to ground patterns or to the empty space.
But the configurations with least ``leakage" to ground patterns or the empty space 
would be the best candidates to analyse and leverage further. Conceivably, 
a limited orbit of particular non-ground localizations 
of basic patterns would be stable enough for playing the role of ``genetic'' information to start up
a really wild evolution of the perturbed ``Life'' genre. Other plausible category of stable configurations
for a potent ``soup'' and steady evolution under our perturbations is snaking labyrinths of live cells. 

It would be worthwhile to identify {\em bottleneck} patterns and perturbations 
that expand an orbit of ground or other configurations significantly. 
The orbits where the second eigenvalue $\lambda^{(2)}$ is close to $\lambda^{(1)}$ and $0$
would be interesting as well. The corresponding eigenvector would have significant influence for a long period of time,
similarly to the the apparent convergence instability 
described at the end of Section \ref{sec:laiterate}. 
If this second eigenvalue is real, the eigenvector will have both positive and negative entries likely;
the resulting antagonistic bipolarity may model trophic cascades in ecology \cite[Ch.~6]{Serengeti}.
Otherwise, the complex eigenvector could model cyclicality of organic processes. 

\subsection{A parallel to the Free Energy Principle}

The irreversible attraction to the empty space state under our perturbations is comparable with the 
second law of thermodynamics. In practical thermodynamic computations, maximisation of global entropy is often replaced
by minimisation of {\em free energy} that is defined for a bounded system under certain conditions \cite[\S 6.3]{Sekerka15}.
For example, Gibbs free energy is defined for a system under constant pressure and temperature.

The role of thermodynamics in actual life processes is not quite settled. 
Since mid-2000s, Karl Friston and colleagues developed a theoretical framework \cite{Friston10}, \cite{SchrodingerFE}
for explaining organic and cognitive processes in terms of minimisation 
of certain statistically defined free energy. 
The formulated {\em Free Energy Principle} characterises cognition, homeostasis and other vital function
as (primarily Bayesian) inference about the environment under which surprise, 
encountered uncertainty or representational errors are minimised.
In a sense, the brains and organisms mimic the second of law of thermodynamics in their activity
of maintaining homeostasis, satisfying drives, or matching inferences and actions with reality. 
The tendency of an organism to seize opportunities and satisfy own needs is not unlike 
the entropic drift toward disorder whenever barriers abate. 

Cognitive and organic activities are perhaps not particularly coupled to classical thermodynamic restrictions 
(if only ample energy flows are present), but are rather independent and even competing dynamical processes of convergence. 
The drives of alternative Free Energies can be seen as similar to subdominant vectors of our Markov processes 
of ``Life" perturbations. Eventually the second law of thermodynamics will prevail, just as the torus patterns will decay 
to empty space before long because $\lambda^{(1)}<0$. But lively dynamics could go on quite for some time, 
as convergence complications in our models attest.

The second law of thermodynamics is not principally distinguished from organic vigorous attractions in one meaning: 
it requires certain (though prevalent) conditions as well. For example, Boltzmann's kinetic derivation for ideal gases 
uses the {\em Stosszahlansatz} assumption \cite[Ch.~2,\,17]{Dorfman99} 
of no correlation between the velocity distributions of different molecules:
 $\varphi(\vec{v}_1,\vec{v}_2, t) = \varphi(\vec{v}_1, t) \, \varphi(\vec{v}_2, t)$.
This {\em molecular chaos} is an adequate condition for the fabled entropy increase!
Other derivations of the second law of thermodynamics \cite[\S 3.6]{challenge2nd}, \cite[\S 5]{Uffink07}
are either asymptotic conclusions or are  based on statistics of Gibbs ensembles, 
with no dynamical supervenience coupling to the underlying dynamics. After all, the Loschmidt paradox 
of time reversal symmetry \cite[\S 4.3]{Uffink07} is actual, and fluctuations are ever possible.

\appendix
\section{Appendix}

This appendix contains a series of supplementing figures and tables. Most of them are referenced from the main text.
Besides, the following topical sections follow: 
\begin{itemize}
\item Section \ref{sec:stable5} and Figures \ref{fg:5x5e}, \ref{fg:4x4} describe all still-lives and oscillators 
on the $5\times5$ and $4\times4$ toruses, not just ground patterns. This supplements Section \ref{sec:5x5}.
\item Section \ref{sec:islands} and Tables \ref{tb:islandcomp}\,--\,\ref{tb:patterns10a} 
analyse the islands of connected live cells in the ground patterns, and survey encountered configurations of the islands.
Figures \ref{fig:gliderpairs10}, \ref{fig:pairsEASN} give examples of configurations of 
two islands on $T_{10}$. This extends Section \ref{sec:ground}.
\item Section \ref{sec:regimes} takes a closer look at negentropic decays. 
It discerns a few regimes of negentropic pattern generation in Figure \ref{fig:entropy2} and Tables \ref{tb:deficit9}, \ref{tb:deficit10}. 
This supplements Section \ref{sec:entropic}.
\item Section \ref{sec:aequal} discusses instances of nearly equal eigenvalue components, 
arising typically from a series of very similar eigenvalues equations. This extends Section \ref{sec:equal}.
\item Section \ref{sec:simplified10} and Table \ref{tb:simplify} present more general simplifications of similar eigenvector equations. 
This extends Section \ref{sec:ladirect}.
\end{itemize}

\subsection{The full set of stabilised patterns on the $5\times5$ torus}
\label{sec:stable5}

\begin{figure} 
\begin{center}
\begin{picture}(380,542)(12,14)
\put(-14,-22){\includegraphics[width=432pt]{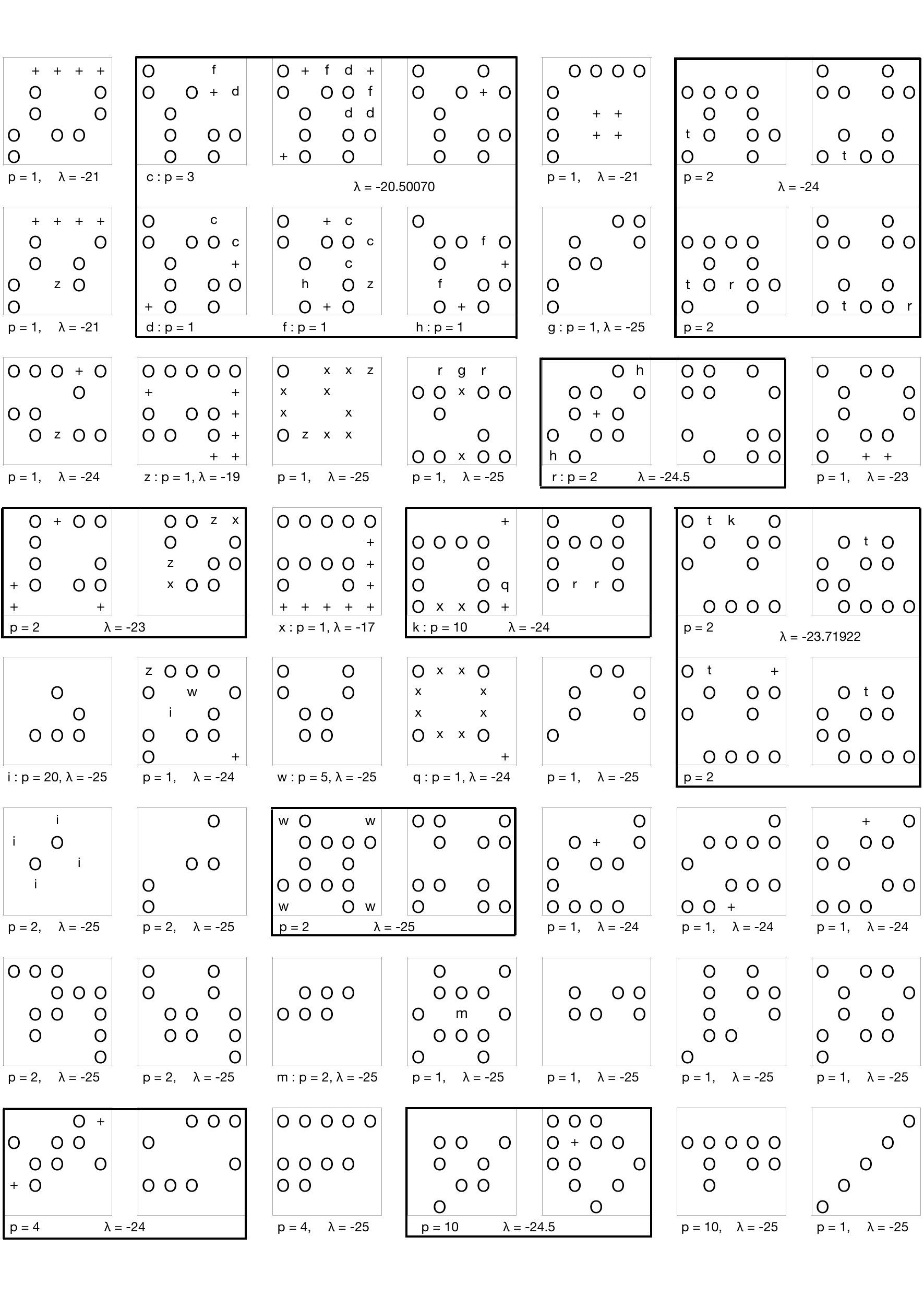}}
\multiput(2.6,555.8)(10.2,0){4}{\circle7}  
\put(86.1,546.2){\circle7}  \put(96.7,546.5){\circle7}  \put(86.1,556.8){\circle7}  
\multiput(265,526)(10.2,0){2}{\circle7}  \multiput(265,536)(10.2,0){2}{\circle7}  
\put(308,526.4){\circle7}  \put(381,516.4){\circle7} 
\multiput(2.6,486)(10.2,0){4}{\circle7}  \put(12.9,456){\circle7}  \put(86.2,486){\circle7} 
 \put(128.6,456.6){\circle7} \put(159.2,456.3){\circle7}  \put(159.2,476){\circle7}   
\put(191.8,456.5){\circle7} \put(212.2,476.5){\circle7} 
\put(23,415.8){\circle7}  \put(12.9,385.9){\circle7}  
\multiput(139,415.9)(10.2,0){3}{\circle7}  \multiput(128.8,385.9)(10.2,0){3}{\circle7} 
\multiput(118.6,405.9)(20.4,0){2}{\circle7}  \multiput(118.6,395.9)(30.6,0){2}{\circle7}  
\multiput(202,405.9)(0,-30){2}{\circle7} 
\multiput(-7.4,315.8)(0,-10){2}{\circle7}  \put(12.8,345.6){\circle7} \put(33.1,305.7){\circle7}   
\multiput(65.8,325.8)(0,-10){2}{\circle7}  \multiput(86.2,345.8)(10.1,0){2}{\circle7}  
\multiput(254.8,315.8)(10.1,0){2}{\circle7} \put(391.2,336){\circle7}  
\put(65.8,256){\circle7} \put(318,275.8){\circle7} \put(348.4,275.6){\circle7} 
\multiput(191.8,275.8)(10.1,0){2}{\circle7}  \multiput(191.8,245.8)(10.1,0){2}{\circle7}  
\multiput(181.6,265.8)(0,-10){2}{\circle7}  \multiput(212.2,265.8)(0,-10){2}{\circle7}  
\multiput(-7.4,195.8)(20.4,10){2}{\circle7}  \multiput(2.6,175.9)(20.4,10){2}{\circle7} 
\put(265,195.6){\circle7} \put(328,165.4){\circle7}  
\put(254.8,55.2){\circle7} 
\end{picture}
\end{center}
\caption{The non-empty stable or cyclic patterns on the $5\times 5$ torus absent in Figure \ref{fg:lifet5}. 
The three larger rectangles (along the top or right edge) enclose strongly connected 
orbits of four ({\sf c, d, f, h}) or two patterns. Smaller rectangles enclose phases of a periodic pattern, when the phases have different 
perturbation transitions. Only the transitions between these patterns are shown, including the inert +\,transitions. 
Transitions within the smaller strongly connected orbits (along the right edge) are indicated by {\sf t}; 
other perturbation products are labeled {\sf g, i, k, m, q, r, w, x, z}. 
The oscillation period {\sf p} of the patterns, and the dominant eigenvalue {\sf $\lambda$} of the strongly connected orbits are given.} 
\label{fg:5x5e}
\end{figure}
\begin{figure} 
\begin{center}
\begin{picture}(380,252)(12,14)
\put(-28,-29){\includegraphics[width=464pt]{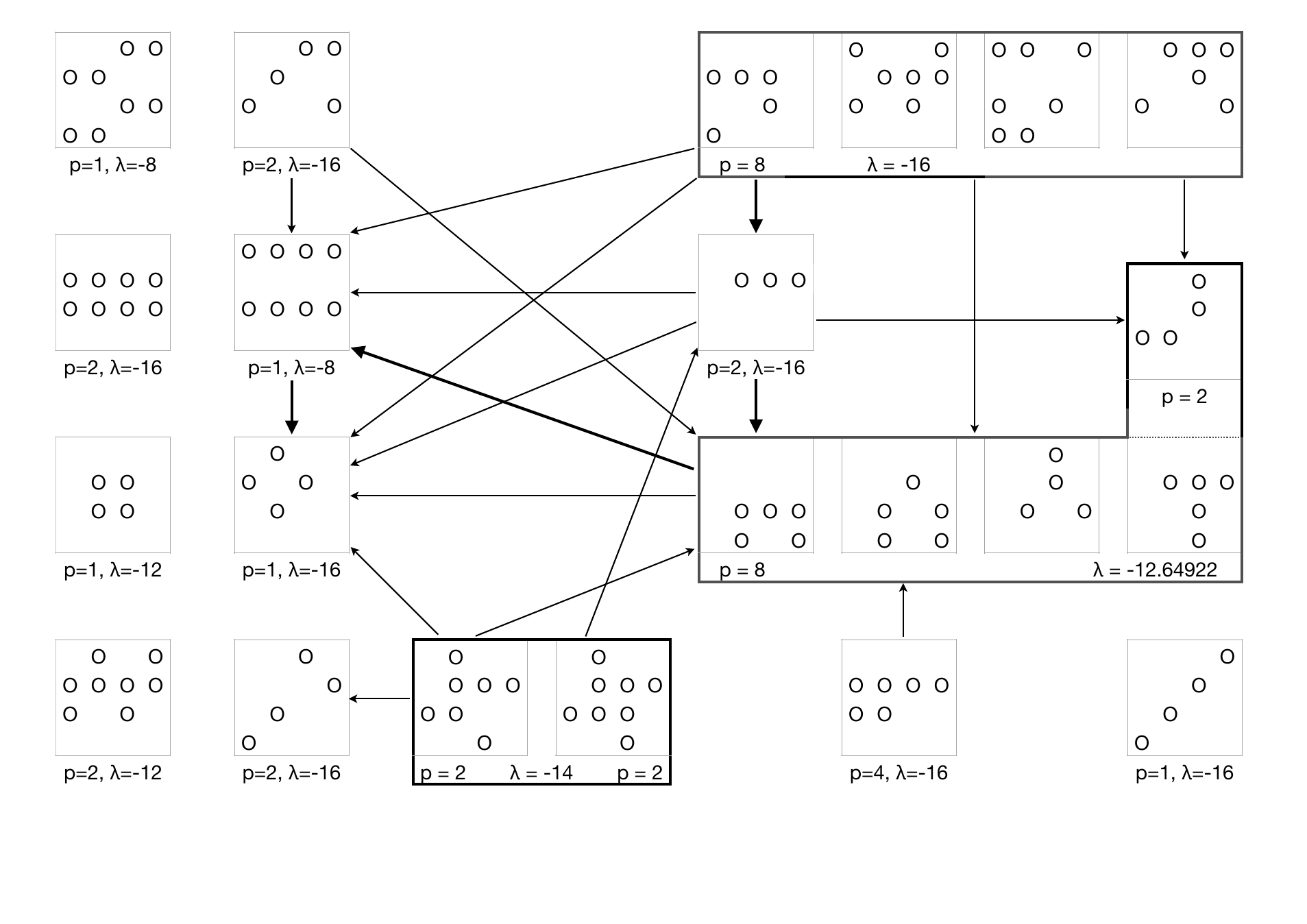}}
\end{picture}
\end{center}
\caption{The non-empty stable or cyclic patterns on the $4\times 4$ torus, and irreversible transitions between them.
Two oscillators of period 8 are presented by enclosed rows of their four different phases (on the right);
one of them forms a strongly connected orbit with a period 2 oscillator. 
The only other strongly connected orbit is formed by two oscillators of period 2 (in the last row).
The longest transition chain (between 5 patterns or strongly connected orbits) is distinguished with bold arrows.} 
\label{fg:4x4}
\end{figure}

\begin{figure}
\begin{center}
\begin{picture}(380,206)(-18,4)
\put(1,-36){\includegraphics[width=358pt]{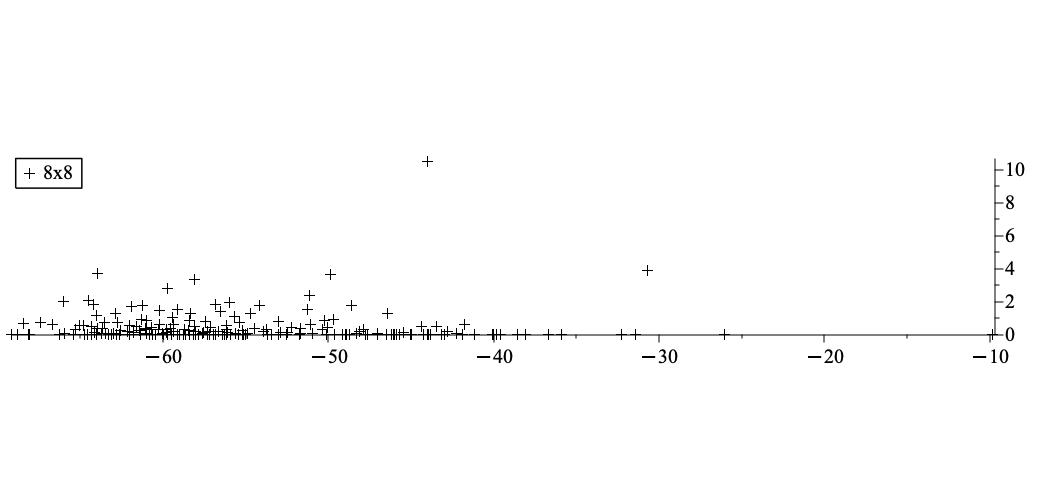}}
\put(1,79){\includegraphics[width=356pt]{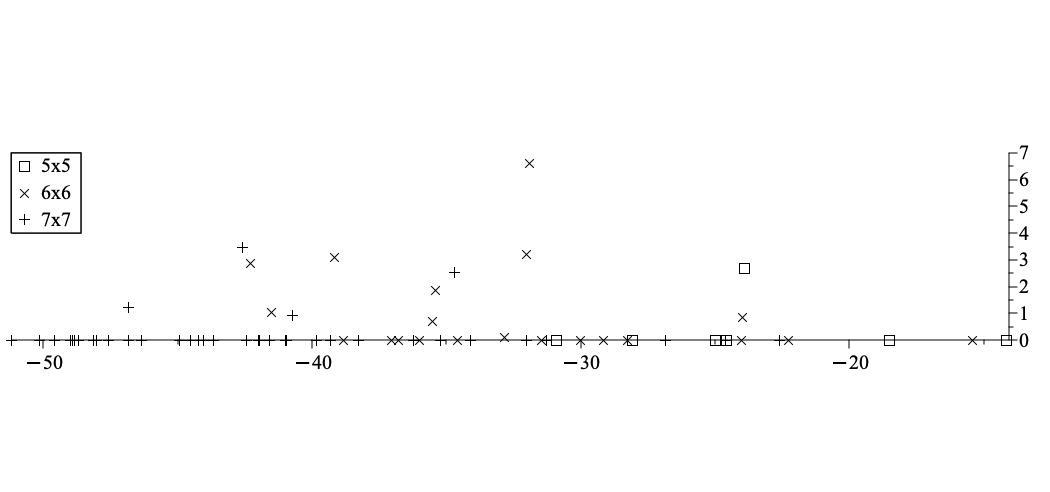}}
\put(-20,116){\refpart{a}}  \put(-20,4){\refpart{b}}
\end{picture}
\end{center}
\caption{\refpart{a} Distribution of the eigenvalues $\lambda$ with Im$\;\lambda \ge 0$ for the $n\times n$ torus
with $n\in\{5,6,7\}$. \hfill \\ 
\refpart{b} Distribution of the eigenvalues $\lambda$ with Im$\;\lambda \ge 0$ for the $8\times 8$ torus.} 
\label{fg:egvs}
\end{figure}

\begin{table} \small\hspace{-20pt}
\begin{tabular}{@{}rcrlccccccl@{}}
\hline
\# & \multicolumn{1}{l}{\;Pattern} & \hspace{-11pt}Period & The leading  & \multicolumn{7}{l@{}}{\!Transitions to:\,\dotfill} \\
 &  \multicolumn{1}{l}{\;name}  & & eigenvector & \!itself & \!\!\#0 & \!\!\#1 & \!\!\#2 & \!\!\#3 & \!\!\#4 & the rest (\#:\,rate)
\\ \hline
1 & Beehive & 1 & 103033038173. & 11 & 34 & --- & 0 & 0 & 0 & 6:\,4 \\ 
2 & Boat & 1 & 65135438716.6 & 16 & 10 & 8 & --- & 0 & 2 & 5:\,1;\, 6:\,2;\, 7:\,6;\, 9:\,4 \\ 
3 & Blinker & 2 & 42376644887.2 & 24 & 15 & 10 & 0 & --- & 0 \\ 
4 & Block & 1 & 35554325891.2 & 21 & 16 & 8 & 4 & 0 & ---  \\ 
5 & Tub & 1 & 27392934210.6 & 24 & 5 & 4 & 16 & 0 & 0 \\ 
6 & Pond & 1 & 25445873585.1 & 5 & 36 & 0 & 0 & 0 & 0 & 8:\,8 \\ 
7 & Loaf & 1 & 22202296185.9 & 6 & 35 & 0 & 0 & 0 & 0 & 8:\,4;\, 9:\,1;\, 10:\,3 \\ 
8 & Bi-block & 1 & 19061143999.9 & 11 & 28 & 6 & 0 & 4 & 0  \\ 
9 & Ship & 1 & 15375808791.6 & 8 & 23 & 8 & 6 & 0 & 0 & 7:\,4 \\ 
10 & two blocks & 1 & 4418537379.20 & 11 & 16 & 0 & 0 & 4 & 12 & 11:\,4;\, 12:\,2 \\ 
11 & beehive + block & 1 & 875770592.800 & 5 & 20 & 10 & 2 & 4 & 6  & 6:\,2 \\ 
12 & boat + block & 1 & 445833936.844 & 6 & 13 & 0 & 4 & 8 & 5  & 8:\,2; 10:\,3; 11,13,14:\,2; 15,16:\,1 \\ 
13 & loaf + block & 1 & 42149029.0867 & 4 & 20 & 4 & 4 & 0 & 6  & 5:\,1;\, 6:\,2;\, 7:\,4;\, 15,17:\,2 \\ 
14 & beehive + boat & 1 & 34471493.1712 & 0  & 25 & 6 & 6 & 1 & 5 & 7,8:\,1;\, 11:\,2;\, 18,20:\,1 \\ 
15 & ship + block & 1 & 26001020.7037 & 6 & 16 & 4 & 0 & 0 & 5 & 7:\,4;\, 10:\,2;\, 12:\,6;\, 13:\,2;\, 19:\,4 \\ 
16 & tub + block & 1 & 24244742.1141 & 8 & 1 & 0 & 0 & 8 & 8 & 11,12:\,4;\, 18:\,8 \\ 
17 & Glider & 28 & 9057500.87575 & 17 & 24 & 2 & 1.5 & 0 & 2 & 6:\,1;\, 7:\,1.5 \\ 
18 & beehive + tub & 1 & 8997227.91763 & 1 & 25 & 5 & 0 & 0 & 0 & 5:6;\;7:4;\;8,14:2;\;20:1;\;21:2;\;23:1 \\ 
19 & Toad & 2 & 7768179.60210 & 13 & 23 & 1 & 0 & 2 & 6 & 5,6,8,24:\,1 \\ 
20 & two beehives & 1 & 1647981.15313 & 0 & 33 & 12 & 0 & 0 & 0 & 22:\,4 \\ 
21 & pond + tub & 1 & 681895.096053 & 0 & 29 & 0 & 0 & 0 & 0 & 5:\,8;\, 6:\,4;\, 7:\,8 \\ 
22 & \!pyramide\,7/4/2+1 & 14 & 558374.756386 & 14.5 & 12.5 & 3.5 & 1 & 1 & 10.5  & 6:\,0.5;\, 7:\,1;\, 8:\,1.5;\, \\ 
&&&&&&&&&& \qquad 26,27:\,1;\, 28,29:\,0.5 \\ 
23 & beehive+3+1+3 & 1 & 354376.552812 & 1 & 32 & 4 & 4 & 0 & 3 & 7:\,2;\, 8:\,1;\, 25:\,2  \\ 
24 & Mango & 1 & 348663.178392 & 4 & 32 & 8 & 0 & 0 & 0 & 7:\,2;\, 8:\,1;\, 17:\,2 \\ 
25 & Block on beehive & 1 & 36554.5961507 & 7 & 22 & 4 & 0 & 0 & 0 & 6:\,8;\, 7,8:\,4 \\ 
26 & Aircraft carrier & 1 & 30364.7857483 & 8 & 26 & 7 & 0 & 0 & 4 & 30:\,4 \\ 
27 & pyramide\,7/5/3 & 28 & 27137.2444947 & 4.5 & 23.5 & \!4.25\! & 0.5 & 1.5 & \!2.75\! & 6:\,0.75;\, 7:\,1.5;\, 8:\,3.25;\,  \\ 
&&&&&&&&&& \qquad 17,22:\,1;\, 31:\,4.5 \\ 
28 & block + blinker & 2 & 15336.9525854 & 8 & 18.5 & 2 & 1 & 3 & 9 & 8,9:\,1;\, 11:\,1.5;\, 29,34:\,2 \\ 
29 & beehive + blinker & 2 & 12652.8240788 & 1.5 & 21 & 7.5 & 0 & 7 & 4 & 7:\,3; 8:\,0.5; 17:\,1; 20:\,1.5; 35:\,2 \\ 
30 & Long boat & 1 & 7519.04139119 & 7 & 21 & 4 & 0 & 0 & 4 & 9:\,2, 24:\,4;\, 32:\,3;\, 33:\,4 \\ 
31 & \!pyramide\,7/5/3+1 & 28 & 6465.04846421 & 7.5 & 20 & 6 & 0.5 & 2 & 1 & 7:\,2; 8:\,1; 19:\,0.5; 22:\,3; 27:\,5.5 \\ 
32 & Barge & 1 & 1981.59261354 & 15 & 0 & 0 & 4 & 0 & 0 & 6:\,2;\, 7:\,16;\, 24:\,4;\, 30:\,8 \\ 
33 & Long ship & 1 & 1412.33254373 & 5 & 10 & 8 & 2 & 8 & 0 & 6:\,8;\, 30:\,6;\, 36:\,2 \\ 
34 & boat + blinker & 2 & 1380.65235558 & 3.5 & 16 & 3 & 5 & 5 & 2.5 & 6:1; 9,10:0.5; 14:1.5; 17,20:0.5;  \\ 
&&&&&&&&&& \qquad 28:2; 29:3.5; 37:2; 38,39:1 \\ 
35 & pond + blinker & 2 & 958.950775673 & 0 & 26 & 2 & 0 & 4 & 0 & 6:\,3;\, 7:\,8;\, 8:\,2;\, 17:\,4 \\ 
36 & Beacon & 2 & 132.062246520 & 5 & 16 & 4 & 3 & 0 & 15  & 7:\,4;\, 13,33:\,1 \\ 
37 & ship + blinker & 2 & 106.827755014 & 0 & 27 & 4 & 1 & 0 & 3 & 7:\,1;\, 9:\,4;\, 17:\,1;\, 34:\,6;\, 39,41:\,1 \\ 
38 & tub + blinker & 2 & 71.6826326889 & 7 & 9 & 1 & 0 & 6 & 2 & 5:\,8;\, 7,14:\,1;\, 18:\,3;\, 27,28:\,1;\,  \\ 
&&&&&&&&&& \qquad 29:\,2;\, 34:\,4;\, 40:\,3 \\ 
39 & loaf + blinker & 2 & 57.7613456355 & 0 & 25 & 0 & 0 & 10 & 3  & 7:\,7;\, 8,10,17,37:\,1  \\ 
40 & boat + blinker & 2 & 9.19444584616 & 3 & 13 & 3 & 3 & 4 & 3 & 7:\,2;\, 10:\,1;\, 14:\,3;\, 17:\,2;\, 28:\,1;  \\ 
&&&&&&&&&& \qquad 29:\,4;\, 38:1;\, 39:\,4;\, 42:\,2 \\ 
41 & \!orthogonal blinkers\!\! & 2 & 5.50974014962 & 7 & 14 & 2 & 2 & 12 & 0 & 9:\,1;\, 17:\,2;\, 20:\,1;\, 29:\,6;\, 32:\,2 \\ 
42 & block + blinker & 2 & 1.0 & 8 & 16 & 0 & 0 & 8 & 3 &  11:\,6;\, 29:\,8
\\ \hline
\end{tabular}
\caption{The ground patterns, the dominant eigenvalue, and the transition data for the $7\times 7$ torus.} 
\label{tb:7x7a}
\end{table}

\begin{table} \small \vspace{-60pt} \def\arraystretch{0.96}
\centering
\begin{tabular}{@{}lrrrrrr@{}}
\hline
Torus size & $10\times10$ & $9\times9$ & $8\times8$ & $7\times 7$ & $6\times6$ & $5\times 5$ 
\\ \hline
Tub & 30.58: 1 & 33.07: 1 & 35.20: 1 & 7.58: 5 & 3.32: 5 & 1.60: 8 \\
Boat & 21.85: 2 & 22.61: 2 & 26.13: 2 & 18.02: 2 & 14.41: 3 & 17.34: 4 \\
Beehive & 15.58: 3 & 17.96: 3 & 15.09: 3 & 28.50: 1 & 25.37: 2  & 19.18: 3 \\ 
Block & 5.01: 5 & 10.17: 4 & 4.20: 5 & 9.84: 4 & 42.44: 1 & 20.65: 1  \\ 
Loaf & 4.47: 6 & 4.12: 6 & 5.38: 4 & 6.14: 7 & 5.12: 4 &  \\
Ship & 2.92: 7 & 2.80: 7 & 3.59: 6 & 1.22: 9 & 3.21: 6 & 3.19: 7 \\
Pond & 2.49: 9 & 2.61: 8 & 2.72: 8 & 7.04: 6 & 1.97: 7 & 19.45: 2 \\ 
Blinker ($p\!=\!2$) & 8.75: 4 & 5.75: 5 & 2.73: 7 & 11.72: 3 & 0.14:\,12 & 0.36: 9 \\
Glider & 28 & 27 & 14 & 17 & 0.62:\,10 & \\
Light weight spaceship & 1868 
& 1396 & 259 & & & \\
Midweight spaceship & 3452 & 1973 & 285 & & & \\
Heavy weight spaceship & 6600  & 2652 & & & & \\
Toad ($p\!=\!2$) & 160 & 107 & 55 & 19 & & \\
Beacon ($p\!=\!2$) & 35 & 272 & 48 & 36 & & \\
Clock ($p\!=\!2$) & 13450 & 3899 & & & & \\ 
Bipole ($p\!=\!2$) & 11712 & 3782 & &  & & \\
Octagon-2 ($p\!=\!5$) & 43990 & 498 & & & & \\ 
Bi-block & 2.69: 8 & 53 & 18 & 4.25: 8 & 15 & \\
Block on beehive & 15 & 153 & 100 & 25 & & \\
Barge & 29 & 115 & 36 & 32 & 28 & 14.32: 5 \\
Long barge & 157 & 78 & 60 & & 16 & \\
Long boat & 49 & 152 & 67 & 30 & 22 & \\
Long ship & 120 & 257 & 121 & 33 & 24 & \\
Very long boat & 316 & 165 & 103 & & 21 & \\
Very long ship & 800 & 305 & 155 & & 25 & \\
Aircraft carrier & 1187 & 402 & & 26 & & \\
Mango & 76 & 227 & 73 & 24 & 26 & \\
Eater & 966 & 1028 & 256 & & & \\ 
Half-bakery & 925 & 863 & 77 & & & \\ 
Bi-loaf 2 & 3925 & 3151 & 199 & & & \\ 
Boat-tie & 4429 & 411 & 175 & & & \\ 
Boat-tie-ship & 2086 & 330 & 198 & & & \\ 
Ship-tie & 1139 & 224 & 224 & & & \\ 
Paperclip & 1733 & 959 & 125 & & & \\ 
Dead spark coil & 6280 & 2892 & 190 & & & \\ 
Krake & 7005 & --- & 124 & & & \\ 
Hat & 3333 & 593 & & & & \\ 
Loop & 417 & 1131 & & & & \\ 
Snake & 6504 & 2959 & & & & \\ 
Shillelagh & 867 & 3929 & & & & \\ 
Integral sign & 9059 & 3675 & & & & \\ 
Block on table & 4583 & 2318 & & & & \\ 
Mirrored table & 6280 & 3694 & 220 & & & \\ 
Cis-mirrored bun & 97 & 1580 & 304 & & & \\ 
Cis-mirrored bookend & 349 & 2664 & 93 & & & \\ 
Trans-mirrored bookend & 367 & 3240 & 96 & & & \\ 
mirrored 2 snakes & 8336 & 4475 & 300 & & & \\ 
Cis-snake on bun & 16541 & 4661 & 303 & & & \\ 
Tub with long tail & 26747 & 5671 & & & & \\ 
Boat with long tail & 14795 & 5405 & & & & \\ 
Boat with hooked tail & 913 & 1528 & & & & \\ 
Hook with tail & 13011 & 3190 & & & & \\ 
Long hook with tail & 198135 & 2118 & & & &  \\ 
Long snake & 40272 & & & & & \\ 
Long integral & 58773 & & & & & \\ 
Mirrored cap & 23778 & & & & & \\ 
\hline
\end{tabular}
\caption{Rankings of some standard patterns on various toruses, 
with percentages for top $10$ patterns.} 
\label{tb:patterns}
\end{table}

\begin{figure} \centering
\begin{picture}(440,130)
\put(-2,-140){\includegraphics[width=440pt]{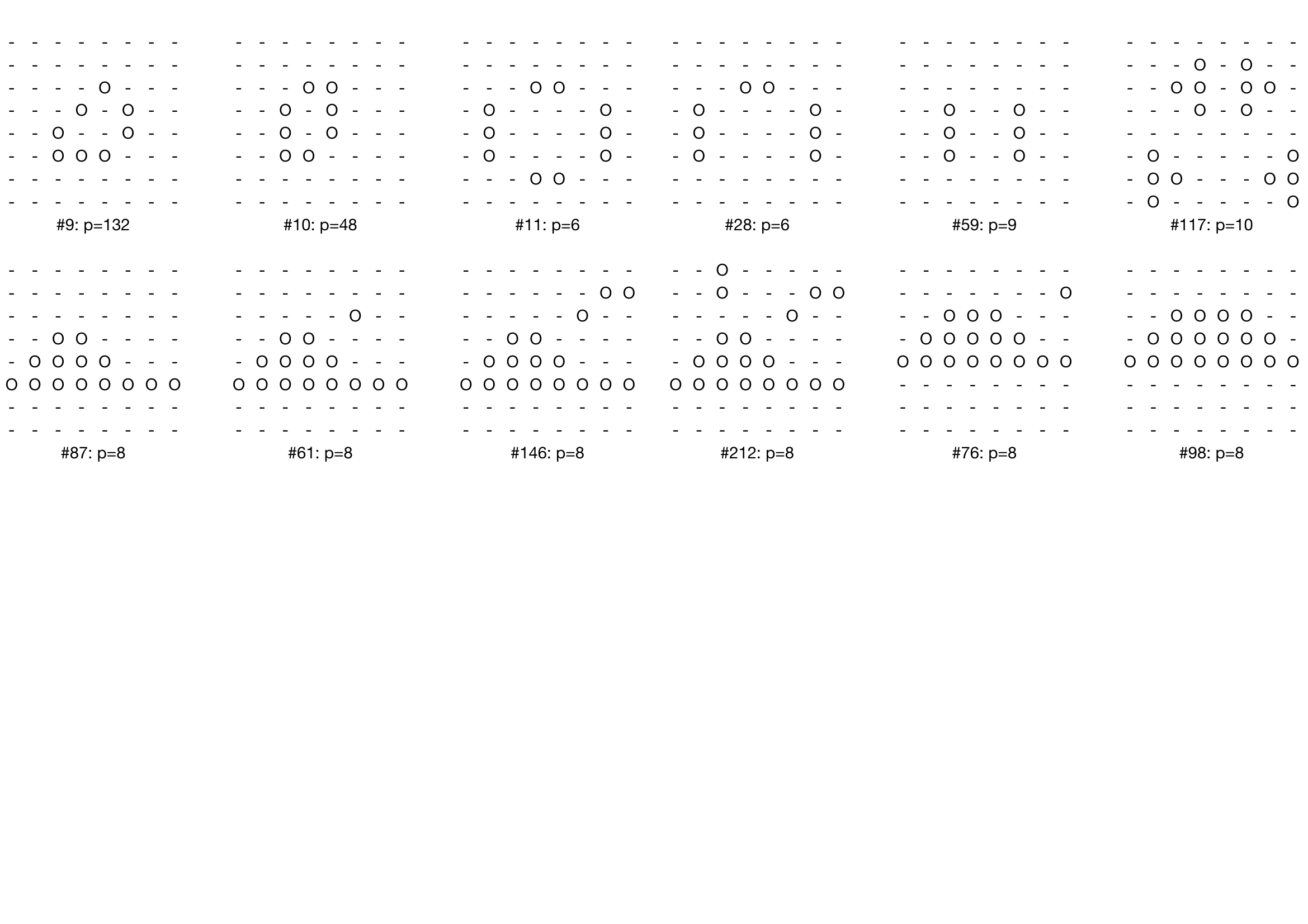}}
\end{picture}
\caption{Examples of oscillating patterns on the $8\times 8$ torus.} 
\label{fg:8x8}
\begin{picture}(440,225)
\put(-2,-114){\includegraphics[width=440pt]{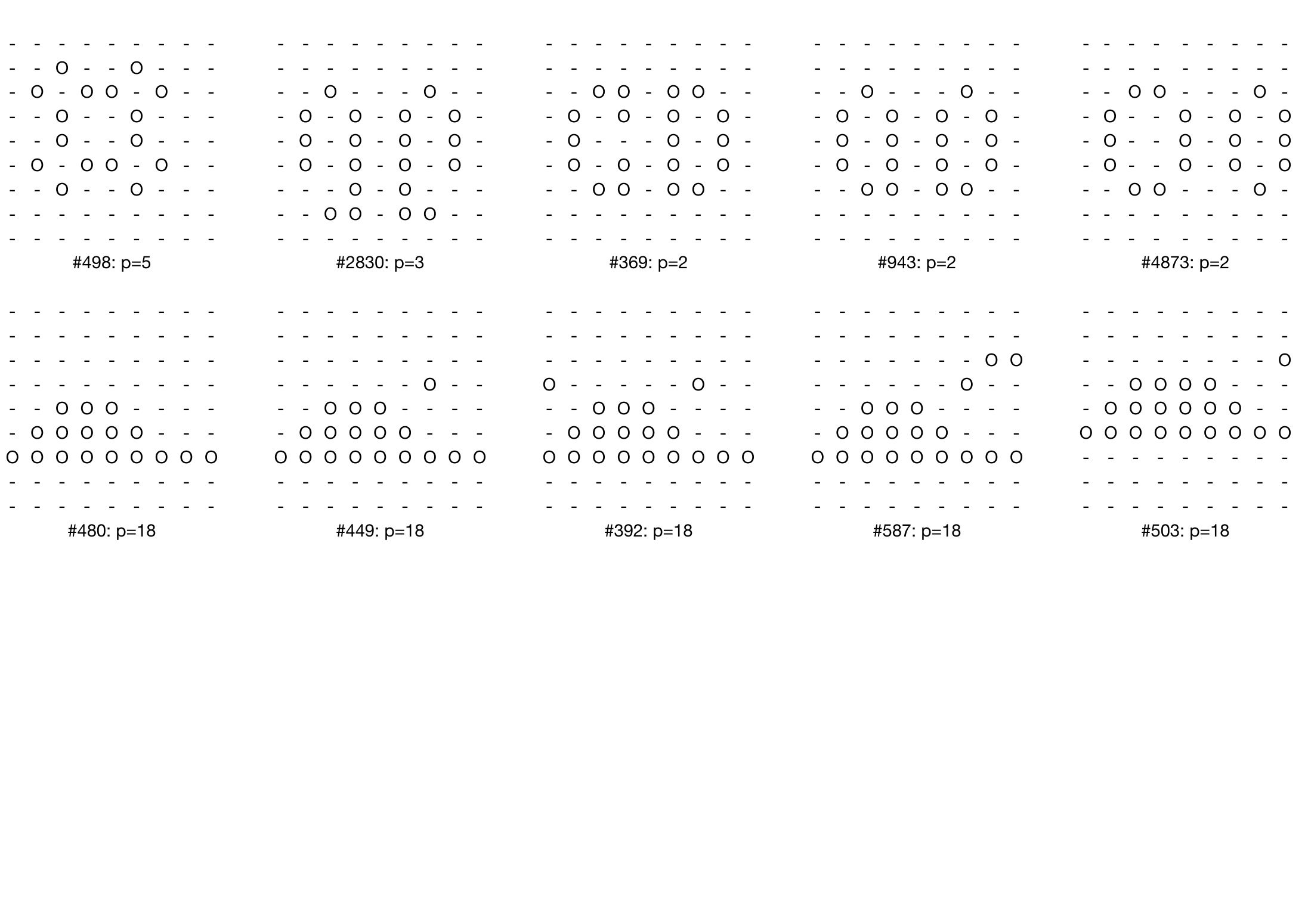}}
\end{picture}
\caption{Examples of oscillating patterns on the $9\times 9$ torus.} 
\label{fg:9x9}
\begin{picture}(440,160)
\put(-3,-180){\includegraphics[width=440pt]{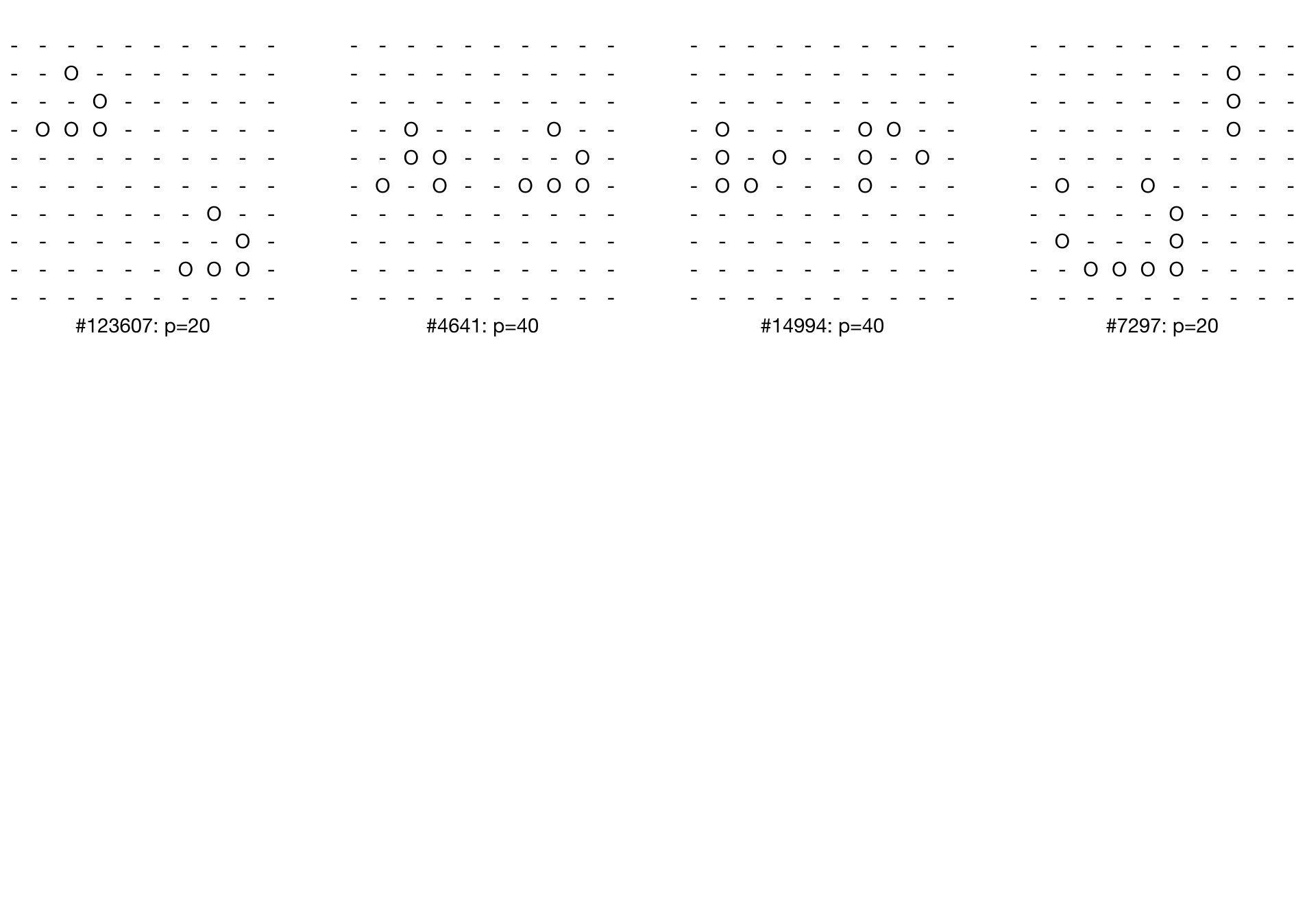}}
\end{picture}
\caption{\small{}Some pairs of Gliders, and a Light weight spaceship with Blinker, as ground patterns on the $10\times10$ torus.} 
 \label{fig:gliderpairs10}
\end{figure}

\begin{figure}
\begin{center}
\begin{picture}(440,585)
\put(5,-18){\includegraphics[width=450pt]{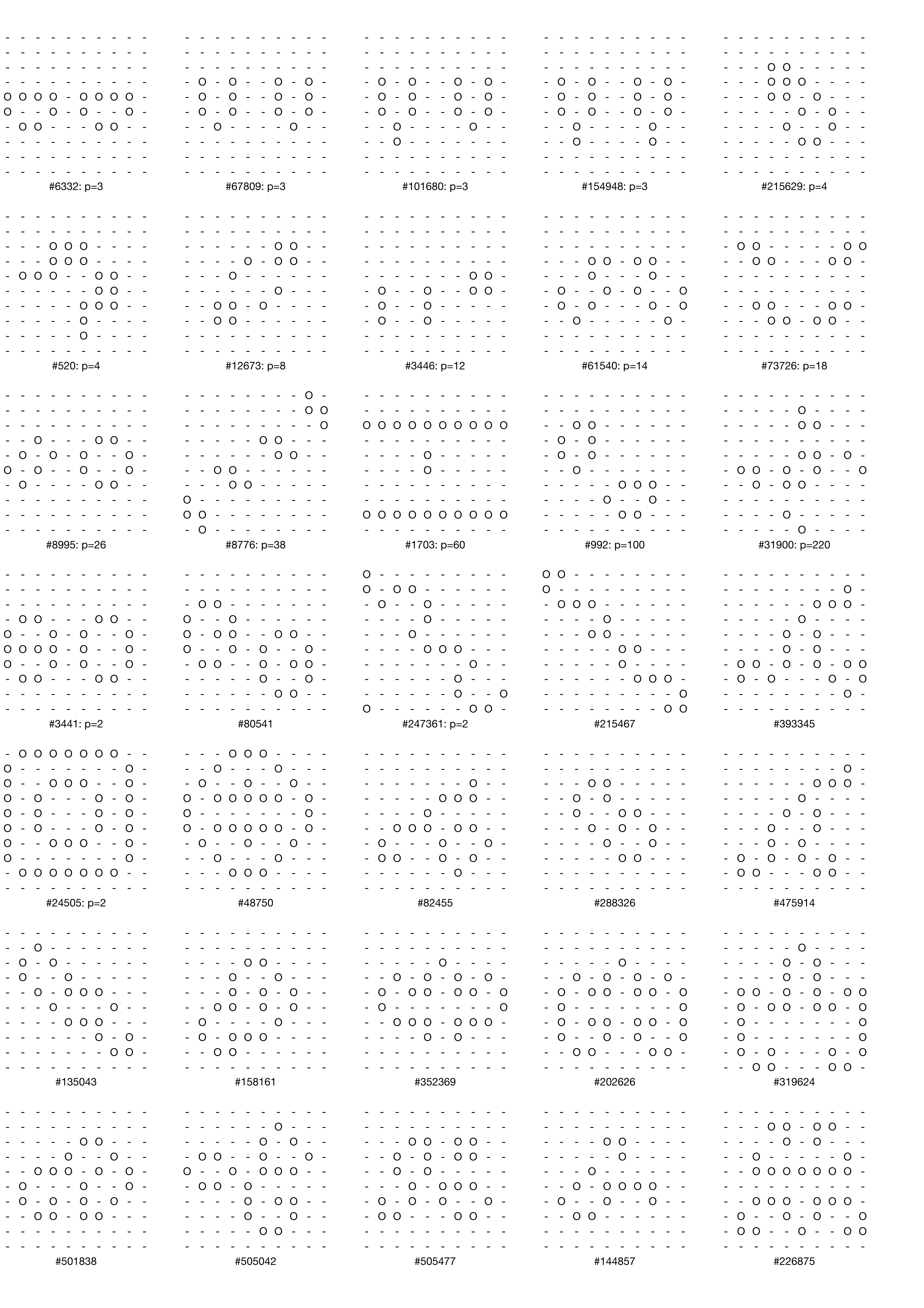}}
\end{picture}
\end{center}
\caption{Select examples of ground patterns on the $10\times 10$ torus.} 
\label{fg:10x10}
\end{figure}

All stable or cyclic patterns on the $5\times 5$ tours can be computed by a brute search.
There are 55 stable or cyclic patterns in total. 
Of them, 33 are still-lives (including the empty space). 
The patterns not present in Figure \ref{fg:lifet5} are listed in Appendix Figure \ref{fg:5x5e}. 
The two nearly square rectangles along the right edge enclose pairs of period 2 oscillators 
that may transition to each other; the corresponding cell perturbations are labeled by {\sf t}. 
Other patterns that are products of non-inert transitions are labeled alphabetically,
and the corresponding cell perturbations are labeled accordingly. 
The transitions to ground patterns or the empty space are not distinguished.
They constitute, respectively, 29.19\% and 60.19\% of all transitions from the 45 non-empty and non-ground patterns.
Inert transitions constitute 5.53\%, and the transitions between the non-ground patterns constitute just 5.10\%.
Possible circular paths of transitions are rare. 
There are only three sets of at least 2 non-ground patterns that might decay to each other indefinitely.
These three {\em strongly connected blocks} are:
\begin{itemize}
\item A set of 4 patterns labelled {\sf c, d, f, h}, enclosed in the largest rectangle at the top. 
The pattern {\sf c} is a period 3 oscillator. 
The transition matrix within this quartet is
\begin{equation}
\left( \begin{array}{cccc}
-70/3 & 2 & 3 & 0 \\
4/3 & -23 & 0 & 0 \\
1\; & 0 & -23 & 2 \\
0\; & 0 & 1 & -23 
\end{array} \right).
\end{equation}
The leading eigenvalue is approximately $-20.5007008138$. 
The only irreversible transitions to or from this quartet are {\sf r}$\,\to\,${\sf h} and {\sf f}$\,\to\,${\sf z}.
\item The two mentioned pairs of period 2 oscillators, enclosed in the rectangles along the right edge.
Their transition matrices are, respectively,
\begin{equation}
\left( \begin{array}{cc}
-25 & 1  \\ 1\; & -25
\end{array} \right), \qquad
\left( \begin{array}{cc}
\!-49/2\! & 1  \\ 1\; & -25 
\end{array} \right).
\end{equation}
Their leading eigenvalues are $-24$ and $(\sqrt{17}-99)/4 \approx-23.7192235936$, respectively. 
There are no external transitions to these pairs, therefore their patterns are not labeled.
\end{itemize}
Among the remaining $45-4-2-2=37$ non-ground patterns:
\begin{itemize}
\item The patterns {\sf x} and {\sf z} are by far the most common decay products before irreversible transition to
ground patterns or the empty space. Morevoer, they often decay inertly; hence 
they have relatively large 
$\lambda\in\{-17,-19\}$. Only the pattern \#10 in Figure \ref{fg:lifet5} has the slower self-decay rate $\lambda=-10$.
\item The patterns {\sf k}, {\sf q}, {\sf r} and the block {\sf \{c, d, f, h\}} can be intermediate products towards 
the transition to ground patterns or the empty space.
\item The patterns {\sf g},  {\sf w}, {\sf i} (Glider), {\sf m} (Toad) can be decay products as well. 
The decay to Toad is 
isolated in the schematic graph of all decays.
\item The patterns {\sf i, r, w, x, z} are decay products of, respectively, 2, 3, 2, 5, 6 non-ground patterns.
The patterns {\sf g, h, k, q, m} are decay products of one non-ground pattern outside their block.
\item The two blocks of paired period 2 oscillators, and 9 other patterns can decay to the mentioned products, 
but are not decay products themselves.
\item 19 patterns  (including \#10 from Figure \ref{fg:lifet5}) are isolated: 
they are not decay products, and decay only to ground patterns or the empty space.
\end{itemize}
The irreversible transitions between these patterns, the mentioned three blocks and the big block of ground patterns 
give a directed tree graph of the eventual decay to the empty space \#0. 
The 55 patterns could be ordered in such a way that irreversible transitions
would lower the ranking \#\,-number. 
Then the transition matrix 
of the Markov process on the 55 patterns would have an block-upper-triangular shape, 
with the non-trivial blocks of size $9,4,2,2$. 
The eigenvalues are associated to the blocks (those of size 1 as well), 
and they depend only on the transitions within the block. 
The eigenvectors would typically have non-zero entries at the patterns of the corresponding block
and of the downstream blocks, including the \#0 entry eventually.

respectively. A corresponding eigenvector may have non-zero entries not only those corresponding to the patterns in the block,
but also those corresponding to descendant blocks and patterns, downstream towards \#0.
For example, the pattern \#10 gives the largest eigenvalue $-10$ beside that of \#0. 
Its eigenvector is given in the last column of Table \ref{tb:5x5}.


Most non-ground patterns are dense in live cells.  32 have more than 10 live cells averaged according to ,
Out of 45 non-empty and non-ground patterns,
while 11 have less than 9 cells.  

\begin{remark} \rm
It is straightforward to enumerate stabilised patterns on the $4\times 4$ torus.
There are 17 stabilised patterns, including the empty space: six still lives, eight oscillators of period 2, 
an oscillator of period 4, and two oscillators of period 8. The non-empty patterns are displayed in Figure \ref{fg:4x4}.
The transitions between them stratify the patterns into a partial order. 
There are two strongly connected orbits formed by pairs of oscillators.

The stable patterns on the $3\times 3$ torus are those with 4 live cells,
while all other patterns vanish in one or two steps of ``Life".
This follows from the fact that all cells on this torus are neighbours to each other.
\end{remark}
 
\subsection{Islands}
\label{sec:islands}

\begin{figure}
\centering 
\begin{picture}(408,120)
\put(0,-80){\includegraphics[width=400pt]{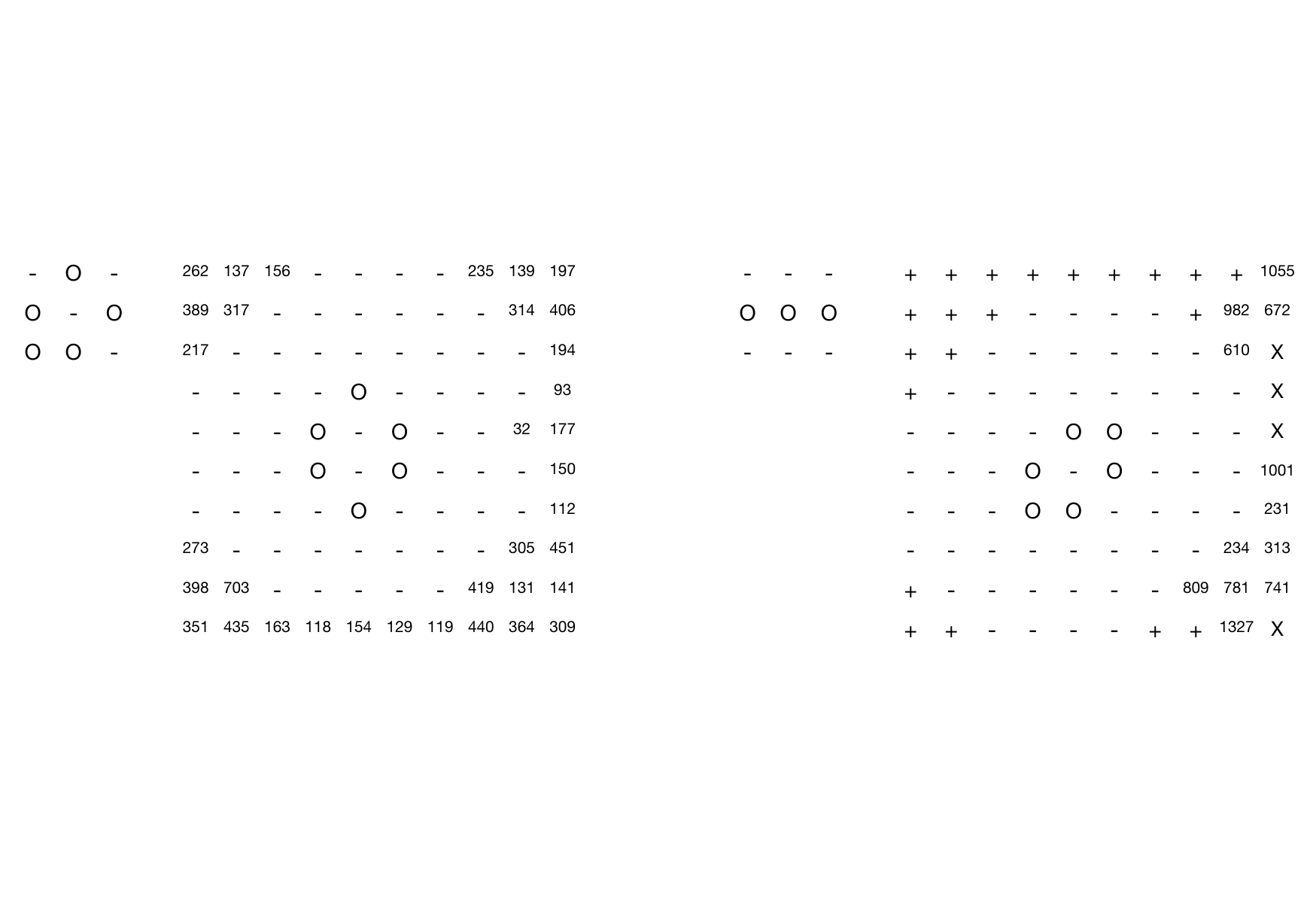}}
\put(6,10){\small{}\refpart{a}}  \put(224,10){\small{}\refpart{b}} 
\end{picture}
\caption{\small{}\refpart{a} Stable configurations of Beehive and Boat on the $10\times 10$ torus. 
The numbered cells indicate the \#-rank of a possible stable pair,
with Boat centered at that cell in the orientation displayed on the left. Boat's orientation can be changed by the symmetries of Beehive. 
\refpart{b} Possible pairs of Ship and Blinker on $T_{10}$. 
The numbered cells indicate the \#-ranks of the occurring pairs as ground patterns (with the Blinker centered at that cell), 
while the X-symbols indicate other possible pairs. The plus-signs indicate the possible pairs that are equivalent to
the marked pairs near the right edge due to the symmetries of Ship.} 
\label{fig:pairsEASN}
\vspace{25pt}
\begin{tabular}{@{}lrrrrr@{}}
\hline
Torus size ($n\times n$) & $6\times6$ & $7\times 7$ & $8\times8$ & $9\times9$ & $10\times10$
\\ \hline 
Ground patterns ($N$) & 30 & 42 & 305 & 7362 & 513875  \\
Islands (minimal set) & 25 & 19 & 50 & 111 & 633 \\
Patterns of unique single islands & 18 & 6 & 15 & 35 & 211 \\ 
All single island patterns & 21 & 14 & 28 & 79 & 444 \\
Patterns with 2 islands & 6 & 23 & 240 & 1529 & 17800 \\ 
Patterns with 3 islands & --- & --- & 6 & 2647 & 144778 \\ 
Patterns with 4 islands & --- & --- & 9 & 2949 & 311254 \\
Patterns with more islands & 1 & --- & --- & --- & 2296 \\
With varying number of islands& 2 & 5 & 22 & 158 & 37303 \\ \hline
Mean number of islands  & 1.433333 & 1.678571 & 2.003553 & 3.159954 & 3.662798  \\ 
Expected number of islands  & 1.021122 & 1.069007 & 1.047241 & 1.009124 & 1.113534 \\
Negentropy of unique single islands & 0.2839457 & 4.568118 & 2.899617 & 7.963361 & 7.395276 \\
Negentropy of single islands\;$(\times 10^{-3})$ & 9.242535 & 31.05835 & 21.35729 & 3.980562 & 37.36440 \\
Negentropy of 2 islands & 1.686294 & 1.161188 & 1.738143 & 2.039972 & 1.197233 \\
Negentropy of 3 islands & --- & --- & 4.790057 & 6.005577 & 2.303192  \\
Negentropy of 4 islands & --- & --- & 4.831765 & 11.09405 & 1.905974  \\
Negentropy of more islands & 4.788696 & --- & --- & --- & 3.878853 \\
Negentropy:\! varying number of islands & 3.348499 & 4.635856 & 1.527515 & 5.694741 & 2.849567 \\
\hline
Contain Boat & 1 & 5 & 94 & 4379 & 334642 \\ 
Contain Block & 3 & 11 & 58 & 3500 & 209323 \\ 
Contain Blinker & 2 & 11 & 38 & 2227 & 169836 \\ 
Contain Tub & 2 & 5 & 41 & 2590 & 164941 \\ 
Contain Ship & 1 & 3 & 42 & 1329 & 121366 \\ 
Contain Beehive & 1 & 7 & 35 & 1081 & 103181 \\ 
Contain Loaf & 1 & 3 & 43 & 469 & 74988 \\
Contain these 7 top islands only & 11 & 27 & 164 & 5284 & 168519 \\ 
Contain an induction coil & 2 & 1  & 18 & 154 & 27265 \\
Only with induction coils & 2 & 1 & 17 & 22 & 292 \\  
Negentropy of induction coils only & 3.157965  & 7.075707 & 4.504097 & 6.570032  & 3.156292 \\ 
Extra negentropy with induction coils\! & --- & --- & 15.64904 & 12.36673  & 6.394749 \\
\hline
\end{tabular}
\captionof{table}{Island composition of ground patterns.} 
\label{tb:islandcomp}
\end{figure} 

\begin{table} \small \vspace{-25pt} \def\arraystretch{0.9}
\begin{center}
\begin{tabular}{@{}lrrrrrrr}
\hline
Island or & & Pattern & \multicolumn{2}{l}{Negenetropy:\dotfill} & \multicolumn{3}{l}{Occurrences in sets:\dotfill} \\ 
standard pattern & \#-rank & count & singleton & \!additional & of 2 & of 3 & larger  \\ \hline
Block & 5 & 209323 & 1.29973 & 1.41033 & 2334 & 54490 & 202619 \\ 
Tub & 1 & 164941 & 0.51456 & 2.63582 & 1863 & 39522 & 153105 \\ 
Boat & 2 & 334642 & 0.66045 & 2.28742 & 5590 & 109597 & 363451 \\ 
Ship & 7 & 121366 & 1.53497 & 3.26172 & 2243 & 38647 & 97394 \\ 
Beehive & 3 & 103181 & 0.80745 & 1.68321 & 1875 & 28739 & 81131 \\ 
Loaf & 6 & 74988 & 1.34935 & 2.81118 & 2323 & 25396 & 50606 \\ 
Pond & 9 & 6634 & 1.60379 & 2.79099 & 405 & 2580 & 3767 \\ 
Blinker $(p=2)$ & 4 & 169836 & 1.05811 & 1.59349 & 1050 & 35355 & 178912 \\ 
Beacon $(p=2)$ & 35 & 10899 & 3.64575 & 2.93646 & 203 & 4304 & 6496 \\ 
Toad $(p=2)$ & 160 & 26444 & 4.88640 & 4.78730 & 473 & 7211 & 18805 \\ 
Clock $(p=2)$ & 13450 & 31122 & 11.0395 & 12.4993 & 215 & 4616 & 26291 \\ 
Bipole $(p=2)$ & 11712 & 1559 & 10.7938 & 14.5318 & 94 & 1464 & --- \\ 
Barge & 29 & 44558 & 3.43814 & 2.74734 & 502 & 10637 & 34226 \\ 
Long boat & 49 & 51452 & 3.89609 & 3.54401 & 772 & 16117 & 35746 \\ 
Long ship & 120 & 14590 & 4.61391 & 4.28111 & 350 & 6325 & 8092 \\ 
Long barge & 157 & 3531 & 4.87900 & 5.92371 & 198 & 2350 & 1012 \\ 
Very long boat & 316 & 5362 & 5.59015 & 7.07159 & 286 & 3693 & 1451 \\ 
Very long ship & 800 & 1924 & 6.68638 & 8.48708 & 141 & 1520 & 283 \\ 
Very long barge & 15397 & 684 & 11.2957 & 10.5797 & 68 & 615 & --- \\ 
Long-3 boat & 20909 & 965 & 11.9306 & 11.6019 & 77 & 887 & --- \\ 
Boat tie ship & 2086 & 1396 & 8.19730 & 14.0515 & 124 & 1272 & --- \\ 
Boat tie & 4429 & 983 & 9.33576 & 13.1495 & 101 & 882 & --- \\ 
Ship tie & 1139 & 552 & 7.27047 & 14.7830 & 64 & 488 & --- \\ 
Mango & 76 & 14493 & 4.26097 & 4.16513 & 542 & 5785 & 8283 \\ 
Eater & 966 & 44383 & 6.98555 & 6.55382 & 648 & 15459 & 28361 \\ 
Aircraft carrier & 1187 & 7865 & 7.33260 & 6.60526 & 256 & 7476 & 179 \\ 
Broken snake & 51705 & 3457 & 14.9997 & 14.4584 & 252 & 3204 & --- \\ 
Snake & 6504 & 8668 & 9.88657 & 9.03262 & 245 & 8491 & --- \\ 
Long snake & 40272 & 4179 & 13.9854 & 13.8799 & 189 & 3989 & --- \\ 
Very long snake & 31600 & 2703 & 13.0793 & 18.5325 & 137 & 2565 & --- \\ 
Snake with feather & 36741 & 1996 & 13.6124 & 12.6607 & 192 & 1803 & --- \\ 
Shillelagh & 867 & 5177 & 6.80790 & 10.0024 & 334 & 4844 & --- \\ 
Long shillelagh & 107472 & 741 & 18.0173 & 13.8703 & 248 & 493 & --- \\ 
Canoe & 1726 & 2702 & 7.90409 & 11.6063 & 171 & 2520 & 34 \\ 
Long canoe & 20753 & 959 & 11.9137 & 18.8053 & 71 & 887 & --- \\ 
Hat & 3333 & 1855 & 8.94437 & 11.0795 & 153 & 1672 & 30 \\ 
Integral sign & 9059 & 2663 & 10.3938 & 12.7382 & 139 & 2523 & --- \\ 
Hooked integral & 48603 & 1706 & 14.7452 & 14.0129 & 156 & 1549 & --- \\ 
Tub with tail & 11246 & 6588 & 10.7308 & 14.5810 & 331 & 6234 & 22 \\ 
Cis-boat with tail & 4538 & 3549 & 9.36610 & 12.8857 & 275 & 3267 & 6 \\ 
Trans-boat with tail & 12752 & 5042 & 10.9488 & 15.6350 & 274 & 4761 & 6 \\ 
Hook with tail & 13011 & 4533 & 10.9818 & 14.6119 & 302 & 4224 & 6 \\ 
Beehive with tail & 9236 & 2311 & 10.4216 & 17.0602 & 205 & 2105 & --- \\ 
Trans-barge with tail & 103528 & 1880 & 17.8628 & 26.1527 & 137 & 1742 & --- \\ 
Trans-long boat with tail & 78137 & 1361 & 16.7006 & 27.3155 & 97 & 1263 & --- \\ 
Trans-loaf with tail & 23361 & 1025 & 12.1955 & 15.1747 & 118 & 906 & --- \\ 
Cis-loaf with tail & 11702 & 773 & 10.7916 & 18.7719 & 165 & 607 & --- \\ 
Tub with nine & 29788 & 2072 & 12.8858 & 13.7348 & 149 & 1922 & --- \\ 
Trans-boat with nine & 17137 & 1424 & 11.5033 & 15.1623 & 100 & 1323 & --- \\ 
Cis-boat with nine & 32784 & 1198 & 13.2008 & 12.1767 & 125 & 1072 & --- \\ 
Beehive with nine & 27661 & 583 & 12.6625 & 13.2236 & 71 & 511 & --- \\ 
Cis-barge with nine & 24605 & 542 & 12.3375 & 11.0153 & 86 & 455 & --- \\ 
Cis-long boat with nine & 31686 & 490 & 13.0879 & 11.9562 & 83 & 406 & --- \\ 
Table & --- & 4075 & --- & 9.21664 & 45 & 1229 & 2843 \\ 
Cap & --- & 3409 & --- & 12.0811 & 43 & 1029 & 2348 \\ 
Bun & --- & 1032 & --- & 4.44720 & 49 & 775 & 702 \\ 
Bookend & --- & 579 & --- & 4.15207 & 48 & 537 & 149 \\ 
Long bookend & --- & 1673 & --- & 7.13112 & 53 & 1372 & 251 \\ 
Tub with long leg & --- & 475 & --- & 19.4614 & 20 & 455 & --- \\ 
Hook with long leg & --- & 453 & --- & 22.2503 & 18 & 435 & --- \\ 
\hline
\end{tabular}
\caption{Islands (or their common combinations such as Bipole, Aircraft carrier, Broken snake) 
on the $10\times10$ torus that occur in more than 450 ground patterns. Multiple occurrences in whole patterns
can be counted by summing up the last three columns and subtracting the third column (plus 1 if there is a singleton appearance).} 
\label{tb:islands10}
\end{center}
\end{table}

\begin{table} \small \vspace{-25pt} \def\arraystretch{0.92}
\begin{center}
\begin{tabular}{@{}lrrrrr@{}ll@{}}
\hline
Island configuration & \!\!\!Pattern\!\! & \multicolumn{2}{l}{\ First occurence\dotfill} 
& \multicolumn{1}{l}{\!\!Additional} & \multicolumn{3}{l@{}}{Occurrences in bunches:\ldots}    \\ 
& \multicolumn{1}{l}{\!\!count} & \ \#-rank &  \!\!negentropy\! & \!\!negentropy\!
& \multicolumn{2}{l}{Pairs, etc.} & \,Examples \\ \hline
2 Tubs, 2 Boats & 1727 & 3351 & 8.95035 & 8.59370 \\ 
Tub, 3 Boats & 2567 & 987 & 7.02178 & 8.48608 \\ 
Tub, 2 Boats, Beehive & 1994 & 5700 & 9.69002 & 9.11323 \\ 
Tub, 2 Boats, Blinker & 3268 & 41524 & 14.1128 & 13.1845 \\ 
Tub, 2 Boats, Block & 3097 & 2247 & 8.33617 & 7.96346 \\ 
Tub, 2 Boats, Ship & 1989 & 6226 & 9.81823 & 9.87389 \\ 
Tub, 2 Boats, Barge & 2096 & 62111 & 15.7824 & 14.8862 \\ 
Tub, 2 Boats, Long boat & 1760 & 48775 & 14.7623 & 14.1141 \\ 
Tub, Boat, Beehive, Blinker & 1720 & 7469 & 10.0994 & 9.63470 \\ 
Tub, Boat, Beehive, Block & 1731 & 5697 & 9.68962 & 8.82995 \\ 
Tub, Boat, 2 Blinkers & 2985 & 3098 & 8.85006 & 9.15503 \\ 
Tub, Boat, Blinker, Block & 2214 & 25984 & 12.4840 & 11.8367 \\ 
Tub, Boat, Blinker, Barge & 2080 & 216551 & 23.5042 & 23.2441 \\ 
Tub, Boat, Blinker, Toad & 1715 & 224361 & 24.3089 & 23.2339 & 37 & +0+4 & (\ref{eq:pshifts}) \\ 
Tub, Boat, Blinker, Clock & 2691 & 409901 & 43.1668 & 41.9599 & 28 \\ 
Tub, Boat, Block, Ship & 1953 & 12676 & 10.9375 & 9.97665 \\ 
Tub, Boat, Block, Long boat & 1731 & 66845 & 16.0784 & 15.2568 \\ 
Tub, Boat, Block, Eater & 1801 & 284762 & 29.8528 & 29.3583 \\ 
4 Boats & 1843 & 360 & 5.67516 & 6.97086 \\ 
3 Boats, Beehive & 1667 & 1893 & 8.05234 & 8.19898 \\ 
3 Boats, Blinker & 2575 & 33818 & 13.3131 & 12.5125 \\ 
3 Boats, Block & 3236 & 808 & 6.69954 & 7.59954 \\ 
3 Boats, Ship & 2114 & 2273 & 8.35627 & 9.63545 \\ 
3 Boats, Long boat & 2346 & 82555 & 16.9360 & 15.9594 \\ 
2 Boats, Beehive, Blinker & 1996 & 3665 & 9.07328 & 8.75138 \\ 
2 Boats, Beehive, Block & 2500 & 1923 & 8.08297 & 7.63867 \\ 
2 Boats, 2 Blinkers & 3441 & 1638 & 7.82931 & 8.00754 \\ 
2 Boats, Blinker, Block & 3100 & 18547 & 11.6648 & 10.9648 \\ 
2 Boats, Blinker, Ship & 1990 & 34918 & 13.4267 & 12.9836 \\ 
2 Boats, Blinker, Barge & 1994 & 214819 & 23.3354 & 22.7614 \\ 
2 Boats, Blinker, Long boat & 1656 & 216684 & 23.5185 & 22.8980 \\ 
2 Boats, Blinker, Toad & 1607 & 206312 & 22.5465 & 21.9209 & 21 & +0+2 & (\ref{eq:pshifts}) \\ 
2 Boats, Blinker, Clock & 2380 & 400074 & 41.4335 & 40.5877 & 16 \\ 
2 Boats, 2 Blocks & 2554 & 849 & 6.77109 & 6.87244 \\ 
2 Boats, Block, Loaf & 2098 & 6733 & 9.94339 & 9.29435 \\ 
2 Boats, Block, Ship & 3633 & 5600 & 9.66280 & 8.66969 \\ 
2 Boats, Block, Barge & 1777 & 99760 & 17.7106 & 16.5403 \\ 
2 Boats, Block, Eater & 2441 & 261560 & 27.9233 & 27.4371 \\ 
Boat, Beehive, 2 Blinkers & 1716 & 2210 & 8.31639 & 7.31309 \\ 
Boat, Beehive, Blinker, Block & 1731 & 3204 & 8.89530 & 8.32696 \\ 
Boat, Beehive, Block, Ship & 1648 & 7211 & 10.0424 & 9.06151 \\ 
Boat, 3 Blinkers & 1863 & 1097 & 7.22448 & 6.23718 & 9 & & \S\ref{sec:equal}\refpart{v} \\
Boat, 2 Blinkers, Block & 2223 & 1393 & 7.56922 & 7.37915 & 1 & & @263146 \\  
Boat, Blinker, Block, Ship & 1952 & 19870 & 11.8119 & 11.4593 \\ 
Boat, Blinker, Block, Long boat\! & 1647 & 197348 & 21.8498 & 21.3942 \\ 
Boat, Blinker, Block, Clock & 1664 & 389022 & 39.7843 & 38.9380 & 4 \\ 
Boat, 2 Blocks, Ship & 2161 & 2007 & 8.14380 & 7.91412 \\ 
Boat, 2 Blocks, Eater & 1824 & 240755 & 25.9833 & 27.7006 &  \\ 
Boat, Block, 2 Ships & 1622 & 2708 & 8.63553 & 9.87229 \\  \hline
Tub, 2 Blinkers, Barge & 587 & 1349 & 7.52394 & 8.55865 & 138 \\ %
Tub, 2 Blinkers, Long boat & 422 & 2542 & 8.56463 & 9.46137 & 107 \\ %
Tub, 2 Blinkers, Toad & 353 & 39556 & 13.9077 & 13.2107 & 98 & +0+9
& \S\ref{sec:equal}\refpart{v} \\ 
Tub, 2 Blinkers, Clock & 462 & 307226 & 31.6558 & 31.0602 & 118 & +0+8 & @462760+4 \\ 
Boat, 2 Blinkers, Barge & 1039 & 1407 & 7.58496 & 7.07893 & 237 & & \S\ref{sec:equal}\refpart{iii} \\
Boat, 2 Blinkers, Long boat & 801 & 2990 & 8.78079 & 8.04659 & 193 \\ %
Boat, 2 Blinkers, Toad & 597 & 21417 & 11.9843 & 11.2784 & 174 & +0+10\! \\ %
Boat, 2 Blinkers, Clock & 760 & 283166 & 29.7225 & 29.1321 & 200 & +0+8 & @458372+4 \\ 
\hline  \\ \vspace{-17pt}
\end{tabular}
\caption{Most common island configurations that feature in at least 1600 patterns, or in at least 100 bunches (see Section \ref{sec:equal}).
Examples refer to the text of Table \ref{tb:simplify}.} 
\label{tb:patterns10}
\end{center}
\end{table}

\begin{table} \small \vspace{-30pt} \def\arraystretch{0.92}
\begin{center}
\begin{tabular}{@{}lrrrrr@{}l@{\,}l@{}}
\hline
Island configuration & \!\!\!Pattern\!\! & \multicolumn{2}{l}{\ First occurrence\dotfill} 
& \multicolumn{1}{l}{\!\!Additional} & \multicolumn{3}{l@{}}{Occurrences in bunches\dotfill}   \\ 
& \multicolumn{1}{l}{\!\!count} & \ \#-rank &  \!\!negentropy & \!\!negentropy\!
& \multicolumn{2}{l}{Pairs, etc.} & \,Examples  \\ \hline
Boat, Shillelagh & 92 & 21340 & 11.97674 & 11.46404 \\
Boat, Eater & 108 & 2335 & 8.40384 & 8.65761 \\
Boat, Tub with tail & 84 & 72087 & 16.36864 & 16.36044 \\
Boat, Long shillelagh & 80 & 139038 & 19.19244 & 18.84212 \\
Blinker, Integral sign & 9 & 29092 & 12.81198 & 15.24140 & & & \#58087,\#140892 \\ 
Blinker, Bipole & 12 & 50896 & 14.93759 & 14.93360 & 2 & & \#50896,\,\#96995 \\ 
Loaf, Eater & 63 & 7897 & 10.18382 & 10.22271 \\  
\hline
Tub, Boat, Eater & 823 & 153897 & 19.73185 & 18.49674 \\ %
2 Boats, Eater & 1270 & 107938 & 18.03477 & 17.10890 \\ %
2 Boats, Aircraft carrier & 810 & 115784 & 18.34295 & 17.77375 \\ %
2 Boats, Snake & 869 & 85815 & 17.09148 & 16.60743 \\ %
Boat, Blinker, Beacon & 272 & 2184 & 8.29482 & 8.30047 & 5 & & \#240215+1 \\ 
Boat, Blinker, Toad & 1014 & 12098 & 10.85573 & 11.62049 \\ %
Boat, Blinker, Bipole & 102 & 105246 & 17.93090 & 17.45553 & 51 &\ (all) \\ 
Boat, Block, Eater & 1261 & 20313 & 11.85873 & 11.38047 \\ %
Boat, Block, Aircraft carrier & 819 & 43650 & 14.31681 & 16.14116 \\ %
Boat, Block, Snake & 886 & 57991 & 15.49070 & 15.01223 \\ %
Boat, Ship, Eater & 967 & 138808 & 19.18476 & 18.27865 \\ %
Beehive, Blinker, Bipole & 30 & 185136 & 21.10656 & 21.01280 & 15 &\ (all) & \S\ref{sec:acriterion}\refpart{v} \\ 
2 Blinkers, Barge & 193 & 18 & 2.95702 & 5.74817 & 8 & & @413133 \\ 
2 Blinkers, Long boat & 213 & 47 & 3.88251 & 6.54907 & 5 & & @418249 \\ 
2 Blinkers, Toad & 333 & 6690 & 9.93243 & 9.91230 & 20 & +1+0 & \#253539+11 \\  
2 Blinkers, Eater & 162 & 69263 & 16.21177 & 16.81380 & 40 & & @402916 \\ 
2 Blinkers, Integral sign & 9 & 215403 & 23.39006 & 25.04183 & 1 & &  \S\ref{sec:equal}\refpart{iv} \\ 
2 Blinkers, Bipole & 10 & 229500 & 24.83904 & 24.83837 & 4 & & \S\ref{sec:equal}\refpart{iv}, \S\ref{sec:acriterion}\refpart{v} \\  
2 Blinkers, Clock & 213 & 31423 & 13.05754 & 27.40143 & 12 & & \#405498+1 \\ 
Blinker, Block, Beacon & 130 & 673 & 6.39886 & 11.80852 & 7 & & @508221 \\  
Blinker, Barge, Clock & 12 & 275286 & 29.08028 & 28.48580 & 4 & & \S\ref{sec:acriterion}\refpart{ii},\;\refpart{iii}  \\ 
Blinker, Long boat, Clock & 24 & 289181 & 30.20837 & 29.26858 & 8 & & \S\ref{sec:acriterion}\refpart{ii},\;\refpart{iii} \\ 
Ship, Toad, Bipole & 6 & 509618 & 95.24631 & 94.97854 & 3 & \ (all) & \#509741 \\ 
\hline
2 Tubs, Blinker, Toad & 537 & 240481 & 25.95362 & 25.21220 & 15 & +0+2 & (\ref{eq:pshifts}) \\ 
Tub, 3 Blinkers & 992 & 975 & 7.00567 & 7.89942 & 6 & & \S\ref{sec:equal}\refpart{v} \\ 
Tub, Blinker, Block, Toad & 621 & 209374 & 22.82167 & 22.18411 & 17 & +0+1 & (\ref{eq:pshifts}) \\ 
2 Boats, Blinker, Beacon & 118 & 243013 & 26.21018 & 25.67702 & 39 & & Fig.~\ref{fg:obstructed} \\ 
Boat, Blinker, Block, Beacon & 206 & 226809 & 24.56535 & 23.98731 & 60 & & Fig.~\ref{fg:obstructed} \\ 
Boat, Blinker, Block, Toad & 1480 & 181155 & 20.8946 & 20.2548 & 20 & +0+1 & (\ref{eq:pshifts}) \\  
Boat, Blinker, Loaf, Beacon & 32 & 262148 & 27.97304 & 27.44801 & 14 & & \S\ref{sec:acriterion}\refpart{i} \\ 
Boat, Blinker, Ship, Beacon & 93 & 255563 & 27.39444 & 26.88786 & 31 & & \S\ref{sec:acriterion}\refpart{i} \\ 
Boat, Beehive, Blinker, Beacon\!\! & 68 & 242717 & 26.18596 & 25.65922 & 24 & & Fig.~\ref{fg:obstructed} \\ 
Beehive, 2 Blinkers, Barge & 267 & 965 & 6.98427 & 6.43965 & 52 & & \S\ref{sec:acriterion}\refpart{vi} \\ 
Beehive, 2 Blinkers, Toad & 162 & 16667 & 11.44896 & 13.15852 & 43 & &  \S\ref{sec:acriterion}\refpart{iv} \\ 
Beehive,\;Blinker,\,Block,\,Beacon\!\! & 62 & 226645 & 24.54863 & 24.37514 & 18 & & Fig.~\ref{fg:obstructed} \\ 
4 Blinkers & 222 & 10 & 1.91978 & 4.69409 & 34 & +1+1 & \S\ref{sec:acriterion}\refpart{iv} \\ 
3 Blinkers, Block & 546 & 852 & 6.77864 & 6.76386 & 2 & & \S\ref{sec:equal}\refpart{v} \\ 
3 Blinkers, Loaf & 510 & 838 & 6.74883 & 6.30608 & 2 & & \S\ref{sec:equal}\refpart{v} \\ 
3 Blinkers, Ship & 350 & 1645 & 7.83533 & 7.39349 & 1 & & \S\ref{sec:equal}\refpart{v} \\ 
3 Blinkers, Barge & 82 & 344 & 5.65892 & 5.31395 & 20 & & \S\ref{sec:acriterion}\refpart{vi} \\ 
3 Blinkers, Toad & 41 & 6217 & 9.81629 & 9.34152 & 6 & +0+1 & \S\ref{sec:acriterion}\refpart{iv} \\ 
3 Blinkers, Eater & 22 & 407302 & 42.67265 & 42.67265 & 10 & & \S\ref{sec:acriterion}\refpart{iv} \\ 
2 Blinkers, 2 Blocks & 505 & 486 & 5.97716 & 6.92568 & 1 & & @243498 \\ 
2 Blinkers, Block, Clock & 226 & 299712 & 31.05109 & 30.87538 & 65 & & @478623+2 \\
2 Blinkers, Long boat, Clock & 12 & 279528 & 29.42668 & 28.84335 & 6 & \ (all) & \S\ref{sec:acriterion}\refpart{ii},\refpart{iii} \\ 
\hline
Tub, Boat, Beehive, 2 Blocks & 82 & 185115 & 21.10534 & 21.40861 \\ 
Tub, Boat, Blinker, 2 Blocks & 76 & 154984 & 19.77278 & 19.77272 \\ 
2 Boats, Beehive, 2 Blocks & 124 & 192015 & 21.50258 & 27.52184 \\ 
Boat, Beehive, 3 Blocks & 124 & 150257 & 19.59906 & 26.22646 & 3 \\ 
Boat, 2 Blinkers, 2 Blocks & 44 & 256466 & 27.47375 & 26.70397 & 22 & \ (all)\\ 
Beehive, 4 Blocks & 50 & 1348 & 7.52294 & 18.46184 \\ 
\hline
\end{tabular}
\caption{Most common (overall or in bunches) configurations of 2, 3, 4 or 5 islands. } 
\label{tb:patterns10a}
\end{center}
\end{table}

Most patterns on larger toruses are configurations of several familiar islands delimited by the neighbouring relation of live cells.
For example, all 35 ways to place a Boat and a Beehive on $T_{10}$ 
resulting in a still live configuration 
occur as ground patterns on $T_{10}$; see Figure \ref{fig:pairsEASN}\refpart{a}.
In contrast, there are 16 different ways to place Ship and Blinker on $T_{10}$,
but only 12 of them occur as ground patterns; see Figure \ref{fig:pairsEASN}\refpart{b}.

Most of the islands that do not occur as individual patterns are {\em induction coils} 
\cite[{\sf Induction\_coil}]{Wiki} that typically need to be paired for stability.
Frequent induction coils are named House, Bookend, Bun, Cap, Table, etc.
Some large islands stabilize themselves by wrapping around the torus homotopically. 

The statistics about island composition of ground patterns is presented in the upper part of Table \ref{tb:islandcomp}.
The middle part of this table recapitulates mostly the same statistics in terms of mean or $v^{(1)}$-weighted expected averages 
and Boltzmanian $\log_{10}$-negentropies of probabilities. 
Negentropy of single islands is dominated by the top patterns of Table \ref{tb:patterns}.
The lower part counts the patterns that include the familiar top patterns as islands, 
or patterns with induction coils. Pond (though always a top 10 pattern) is
omitted from the counting statistics because it becomes less frequent than 
Barge, Mango, Eater, Long boat or Long ship as an island on larger toruses. 
In counting patterns with induction coils in the last rows:
\begin{itemize}
\item We ignore oscillators of period $p>2$, and we count only coil sparks and Bipole, Quadpole 
as inducing in period 2 oscillators. 
\item We consider any island (on the doubly periodic spread of the torus) 
that would not be stable if isolated as an induction coil, including Preblock
and large islands that stabilise themselves across the torus homotopy.
\end{itemize}

Categorising and counting islands is problematic in oscillating patterns, because the number and shape of islands may differ in different phases.
The second row of Table \ref{tb:islandcomp} counts a minimal set of islands sufficient to represent a phase of any oscillator.
Figures \ref{fg:8x8}, \ref{fg:9x9}, \ref{fg:10x10} exhibit oscillators in those phases for a minimal set, 
which is not hard to determine because the number of novel oscillators is not large.
Here are representative examples and issues:
\begin{itemize}
\item Spark coils \cite[{\sf Spark\_coil}]{Wiki} are period 2 oscillators that consist of two induction coils in one frame, 
but those two islands become connected in the other frame. The instances on $T_{10}$ are \#167227, \#179349.
We have the spark coils \#214, \#369, \#530, \#943 on $T_9$.
Besides, \#2206, \#2706, \#3044 on $T_9$ are combinations of spark coils with Block.
\item The oscillators \#117 on $T_8$ and \#6332, \#61540, \#73726 on $T_{10}$ are symmetric, and their symmetric halves are always disconnected.
Besides, \#209179 and \#183918 on $T_{10}$ are combinations of \#6332 with one or two Blocks, respectively.
\item The oscillators \#114 on $T_8$, \#267 on $T_9$, and \#1024, \#14342 on $T_{10}$ have phases consisting of two isolated full rows 
of live cells. A connected phase may or may not appear.
\item Moving pyramids --- such as on the second rows of Figures \ref{fg:8x8}, \ref{fg:9x9} ---
may have a common core (typically oscillating with period 2 while moving) but different sparks. 
The pyramids \#148424, \#194114 on $T_{10}$ share a common smaller core every second phase, but different larger cores in the other phases.
\item Beacon and Toad are period 2 oscillators with one or two connected components in the alternating phases.
Nearly all (except 26) 
oscillators on $T_{10}$ with varying number of islands contain Beacon or Toad, including \#150553, \#181938, \#229178 that contain three or four Beacons.
On the other hand, there are 150 patterns on $T_{10}$ containing exactly two Beacons and/or Toads.
If 64 of those cases, the Beacons and/or Toads are out of phase, and those patterns have a constant number of islands through the phases.
On $T_9$ we have two Beacons at \#890 (in phase) and \#1906 (our of phase).
\item Bipole and Quadpole are period 2 oscillators that always have 2 or 4 islands, respectively.  
Moreover, \#89870 consists of 10 isolated live cells in each frame and forms a boundless Barberpole of period 2, like \#27 on $T_{6}$. 
(Apart from these two Barberpole examples, the maximal number of islands is 6 in the patterns \#23491 and \#159860, 
which are $2\times3$ grids of Blocks.)
\end{itemize}
Table \ref{tb:islands10} lists the most frequent islands on $T_{10}$, with these provisions:
\begin{itemize}
\item Bipole and Aircraft carrier 
(and tied or siamese forms of Carrier, including Broken snake) are recognised as predominant ways of pairing Hook and Preblock islands
into stabilised configurations. Preblock appears in 12096 patterns (ignoring Beacons and oscillators of period $p>3$), 
but only in 21 of them it do not fit the Carrier motif.
Preblock is then stabilised mostly in Alef or with long forms of Table or Worm; see the last two examples in Figure \ref{fg:10x10}.
\item Incidental occurrences of most frequent patterns in oscillators (like in the first three rows of Figure \ref{fg:10x10}) 
are not counted to their statistics. On the other hand, occurrences of House (and other such induction coils) in spark coils, 
or occurrences of rarer islands such as sparks or homotopic rows of live cells, could be counted in specifically defined phases.
\end{itemize}
As occurrence negentropy of an island is usually dominated by its singleton pattern, 
Table \ref{tb:islands10} shows the negentropy of occurrences in larger sets separately in the fifth column.
Tables \ref{tb:patterns10}, \ref{tb:patterns10a} give most frequent island compositions of patterns. 
Examples in the last column are taken from Sections \ref{sec:equal}, \ref{sec:acriterion}, \ref{sec:aequal} or Table \ref{tb:simplify}.

\subsection{Negentropic transitions}
\label{sec:regimes}

Figure \ref{fig:matrices89}\refpart{b} suggests that the lower-triangular part of the transition matrix 
(representing negentropic transitions) is narrowly close to the main diagonal. 
This negentropic part could be enhanced visually by projecting the deviation from the main diagonal to a separate axis.
Figure \ref{fig:entropy2}\refpart{a} basically projects the diagonal of $M_9$ to a horizontal axis (and inverts the deviation to the positive vertical direction),
and skips some negentropic entries as it shows only the most negentropic transitions {\em from} each pattern.
The complementary Figure \ref{fig:entropy2}\refpart{b} projects the diagonal of $M_8$ to a vertical axis (and then mirrors the axes)
and shows only the most negentropic transitions {\em to} each pattern. The generating bounty from the high order oscillators \#9 and \#10 is 
then clearly visible. Most schematically, Figure \ref{fig:entropy2}\refpart{c} projects the diagonal of $M_{10}$ first to a vertical axis as well,
but the deviation counts the number of patterns from behind each rank $\#k$ (including $\#k$ possibly) with negentropic transitions to beyond $\#k$.
In terms of the sequence of linear eigenvector equations, this counts the undeterminance of equations with gradually expanding set of variables.

Figures \ref{fig:entropy2}\refpart{a},\refpart{c} allow to estimate and analyse the regimes and bottlenecks of negentropic pattern generation.
Their valleys between peaks correspond well to larger eigenvalue jumps in Figure \ref{fig:entropy}\refpart{c}.
The middle valley in Figure \ref{fig:entropy2}\refpart{a} is noticeably reflected in the separation (by the median horizontal value $\approx 899.5$)
of two denser clouds in Figure \ref{fig:entropy}\refpart{d} as well, and the horizontal values (\ref{eq:dbentropy}) correlate with the \#-ranking strongly: $r=0.9568$.
Tables \ref{tb:deficit9} and \ref{tb:deficit10} identifies the regimes from Figures \ref{fig:entropy2}\refpart{a},\refpart{c} counts patterns in terms of the number of islands.
It appears that the valleys separate predominantly the patterns with different number of islands. For example, the mentioned middle valley for $T_9$  
separates sharply the patterns with 2 or 3 islands. 
The density of oscillators with variable number of islands may vary sharply in the distinguished intervals as well.
For example, there is just one such oscillator in the interval \#3068+1642 
on $T_9$. 

\subsection{Almost equal components of eigenvectors}
\label{sec:aequal}

Here we extend Section \ref{sec:equal} by noting 
that pairs of very similar eigenvector equations may lead to near equalities of some eigenvector components. 
For example, the patterns \#3169 and \#3401 on $T_{9}$ 
nearly augment the pairs $@3170$ and $@3402$ into triples.
Here are the similar eigenvector equations:
\begin{align*}
(\lambda+66)v_{3169}-4v_{3401} & =v_{2912}+5v_{7050},\\
(\lambda+66)v_{j+3170}-4v_{j+3402} & =v_{2912}+5v_{j+7199} \qquad (j\in\{0,1\}), \\
-v_{3169}+(\lambda+65)v_{3401} & =5v_{7143},\\
-v_{j+3170}+(\lambda+65)v_{j+3402} & =5v_{j+7242} \qquad (j\in\{0,1\}).
\end{align*}
The relative difference $\Delta v_k/v_k$ for \mbox{$k\!\in\!\{3169,3401\}$} is, 
respectively, 
\mbox{$\approx\!1.78\!\cdot\!10^{-31}$} or \mbox{$4.59\!\cdot\!10^{-31\!}$.} 

The minimal eigenvalue jump on $T_{10}$ 
is between \#96995 and \#96996; 
the relative difference is $\approx 5.26\cdot10^{-83}$. 
Both eigenvector equations for these components have 60 terms:
\begin{itemize}
\item A term with \#96995 or \#96996 with the same decay rate $-\frac{143}2$.
\item 3 pairwise identical terms with \#58087, \#140892, 
or \#511226. 
\item 55 pairwise equivalent terms. Except for the isolated pair @387974, 
the involved patterns are from the cluster of size 168.
\item A term with \#509009 or \#509010 
with the decay rate $\frac12$.  
The relative difference between these two eigenvector components is $\approx 6.16\cdot10^{-5}$.
Their eigenvector equations have the same self-decay $-81$, 
share 21 equivalent terms from the same big cluster 
and 2 identical terms (with \#508338 or \#509638), 
but  \#509009 has an extra term with \#509741. 
\end{itemize}
Interestingly, there are a few long sequences 
of consecutive pairs of patterns with the same small relative jump 
in the eigenvector components. For example, consider the 81 patterns 
whose \#-ranking is given by the partial or complete sums in these expressions
similar to (\ref{eq:deltan}): { 
\begin{align} \label{eq:pshifts}
& \#408471+8009+1453+3+2+2+5786+1360+2+4+1058+22+41+28, \qquad \nonumber\\
& \#430753 
+661+35+547+27+2+2+51+38+1470+69+38+2380+226,  \nonumber\\
& \#436357 
+885+62+270+63+28+485+362+195+484+658+98+49,  \nonumber\\
& \#440110 
+600+362+46+984+2+448+70+32+466+281+59+15,  \nonumber\\
& \#444017 
+388+985+98+2+448+2+692+676+90+2+583+1095+2,  \nonumber\\
& \#449748 
+21+452+2+264+2+509+1398+645+2+523+27+1340.
\end{align}
}All 
relative differences to the (correspondingly) next eigenvector components are the same: \mbox{$\approx2.03\cdot10^{-11}$.} 
The relative difference is preserved from an initial discrepancy by parallel generating decays, apparently.
Other such sequences have 30, 24, 15 or fewer patterns. They may involve pattern bunches 
from the described clusters. For example, all pairs in the two clusters of size 30 starting from 
@478623 or @478625 follow each other immediately
in the \#-ranking with the same relative difference $\approx 1.58\cdot10^{-12}$. 
Those pairs nearly merge into quartets. 
Similarly, some quartets nearly merge into octets:
\begin{itemize}
\item Two clusters of size 15 are fully parallel with the constant relative difference $\approx 1.81\cdot10^{-8}$.
They both have two quartets that align closely: @458372+4 and @462760+4, following the notational convention in (\ref{eq:deltan}).
\item The cluster of size 49 is fully parallel to a part of the cluster of size 75, 
with the constant relative difference $\approx 2.87\cdot10^{-7}$.
This gives 
the close quartets @464367+4 and @469475+4.
\item The remaining part of the cluster of size 75 splits into two closely parallel tracks 
that deviate from each other by the constant relative difference $\approx 5.78\cdot10^{-11}$, 
and include the close quartets @501130+4 and @503001+4. 
\end{itemize}

\begin{figure}
\begin{center}
\begin{picture}(380,570)
\put(8,420){\includegraphics[width=360pt]{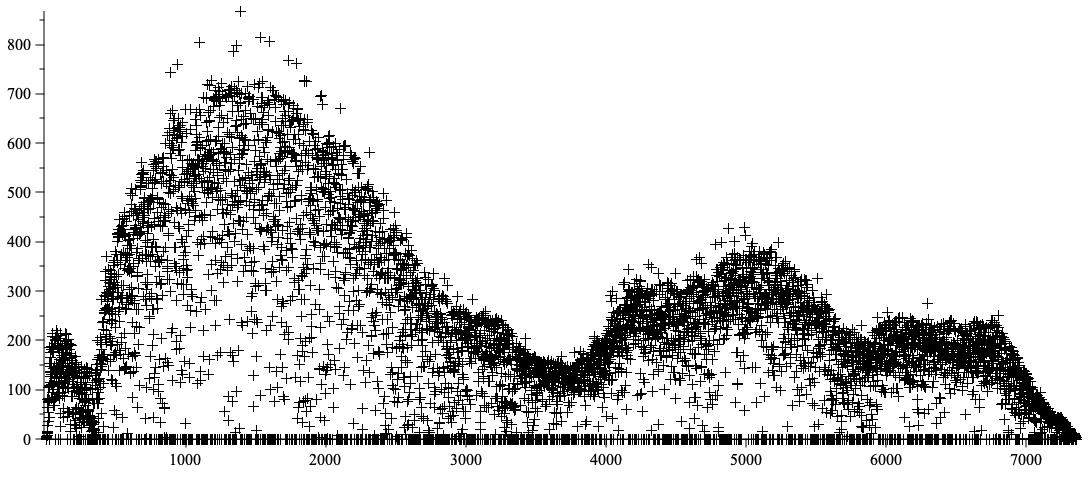}}
\put(8,212){\includegraphics[width=360pt]{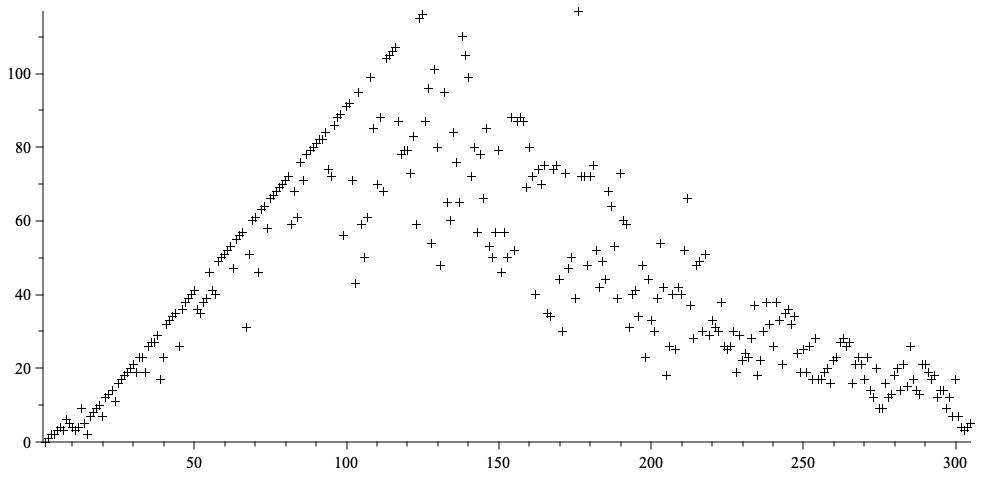}}
\put(8,4){\includegraphics[width=360pt]{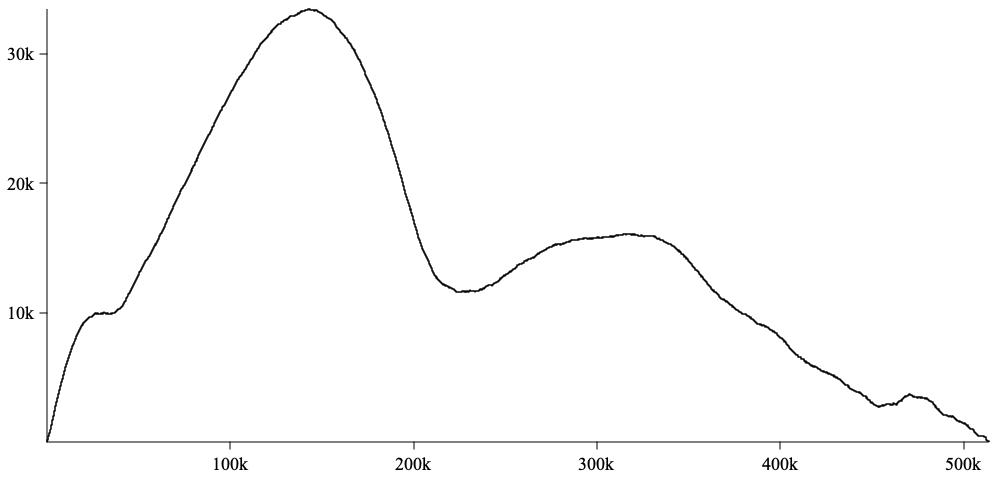}}
\put(0,418){\small{}\refpart{a}}  \put(0,211){\small{}\refpart{b}}  \put(0,6){\small{}\refpart{c}}
\end{picture}
\caption{
\refpart{a} Maximal decay drop in the \#-ranking among the transitions from each state $\#k$, for the $9\times 9$ torus.
\refpart{b} Maximal predecessor drop in the \#-ranking among the transitions to each state $\#k$, for the $8\times 8$ torus.
\refpart{c} Running deficit of the number of eigenvector equations with the maximal involved \#-rank of appearing pattern variables, 
for the $10\times 10$ torus.} 
\label{fig:entropy2}
\end{center}
\end{figure}

\begin{table} \small 
\begin{center}
\begin{tabular}{@{}rrr@{\qquad}rrrrr@{\;}c@{\;}l@{}}
\hline
Starting & Interval & 
\multicolumn{5}{l}{The number of islands\dotfill} & \multicolumn{3}{r@{}}{Equation deficit} \\
\#-rank & length & 1 & 2 & 3 & 4 & \!\!varies & \multicolumn{3}{r@{}}{change} \\  \hline
1 & 382 & \quad 19 & 342 & 16 & 0 & 5 & $0\nearrow$ & 129 & $\searrow 42$  \\ 
383 & 1478 & 10 & 377 & 1013 & 7 & 71 & $42\nearrow$ & 439 \\ 
1861 & 1198 & 19 & 134 & 939 & 33 & 73 & & 439 & $\searrow 134$ \\ 
3059 & 668 & 10 & 139 & 416 & 102 & 1 & $134\nearrow$ & 160 & $\searrow 82$ \\ 
3727 & 1016 & 8 & 122 & 86 & 798 & 2 & $82\nearrow$ & 215 & $\searrow 185$ \\ 
4743 & 1140 & 2 & 58 & 82 & 995 & 3 & $185\nearrow$ & 268 & $\searrow 121$ \\  
5882 & 1480 & 11 & 357 & 95 & 1014 & 3 & $121\nearrow$ & 163 & $\searrow 0$  
\\ \hline
\end{tabular}
\caption{Variation of the deficit of linear equations (vs the maximal involved \#-rank of appearing pattern variables) 
for the different regime intervals, for the $9\times9$ torus.} 
\label{tb:deficit9}
\end{center}
\end{table}

\begin{table} \small 
\begin{center}
\begin{tabular}{@{}rrr@{\qquad}rrrrrr@{\;}c@{\;}l@{}}
\hline
\multicolumn{1}{@{}l}{Starting} & Interval & 
\multicolumn{6}{l}{The number of islands\dotfill} & \multicolumn{3}{l@{}}{Equation deficit change} \\
\#-rank & length & 1 & 2 & 3 & 4 & more & \!\!varies & \\  \hline
1 & 25800 & \quad 62 & 1745 & 18650 & 4816 & 31 & 496 & $4\nearrow$ & 9856 &  \\ 
25801& 11400 & 18 & 300 & 5417 & 5302 & 3 & 360 & $9858\,\;-$ & 10027 \\ 
37201 & 105600 & 65 & 2000 & 13826 & 84665 & 10 & 5034 & $10019\nearrow$ & 33475 \\ 
142801 & 82158 & 119 & 2941 & 14763 & 60525 & 196 & 3614 & & 33475 & $\searrow 11570$ \\ 
224959 & 92420 & 68 & 4184 & 34956 & 46763 & 712 & 5737 & $11571\nearrow$ & 16095 \\ 
317379 & 136286 & 74 & 4635 & 48002 & 64549 & 851 & 18175 & & 16096 & $\searrow 2718$ \\  
453665 & 60211 & 38 & 1995 & 9164 & 44634 & 494 & 3886 & $2718\nearrow$ & 3714 & $\searrow 0$  
\\ \hline
\end{tabular}
\caption{Variation of the deficit of linear equations (vs the maximal involved \#-rank of appearing pattern variables) 
for the different regime intervals, for the $10\times10$ torus.} 
\label{tb:deficit10}
\end{center}
\end{table}

\begin{table} \small
\begin{center}
\begin{tabular}{@{}lrl@{}}
\hline
The cluster content & \!\!Size & Participating islands
\\ \hline
@3101 & 1& three Blinkers \\
@5303 & 1 & Bipole + Blinker \\
@6370 & 1 & three Blocks + Beehive \\
@6443 & 1 & two Boats + two Blinkers \\
@5132, @5457 & 2 & Block/Boat + Tub + Beehive + Blinker \\
@5228, @5531 & 2 & Block/Boat + Tub + Beehive + Blinker \\
@5778, @5955 & 2 & Block/Boat + Beehive + 2 Blinkers \\
@6464, @6655 & 2 & Block/Boat + Beehive + 2 Blinkers \\
@6801, @7028 & 2 & Tub/Boat + 3 Blinkers \\ 
@3170, @3402, @7199, @7242 & 4 & Tub/Boat + 2/3 Blinkers \\ 
@4753, @4968, @5065, @5280 & 4 & Tub/Boat + Tub/Boat + 2 Blinkers \\
@5776, @5910, @5991, @6086 & 4 & Tub/Boat + Tub/Boat + 2 Blinkers \\
@6462, @6619, @6702, @6815 & 4 & Tub/Boat + Tub/Boat + 2 Blinkers \\
@7104, @7158, @7185 & 3 & Tub/Boat + Tub/Boat + 2 Blinkers \\
@6799, @7026, @7050, @7141, @7143, @7195 & 6 &  Tub/Boat + Beehive/Blinker + 2 Blinkers \\
@6689, @6832, @6884, @7007, @7023, & 10& \{Block/Boat\;+\;Block/Boat\}/\{Boat\;+\;Tub\} \\
\quad  @7033, @7112, @7119, @7127, @7181 && /\{Ship\;+\;Block/Beehive\}\;+\;Boat\;+\;Blinker 
\\ \hline
\end{tabular}
\end{center}
\caption{Parallel clusters of patterns with equal eigenvalue components, on the $9\times9$ torus. 
The @-numbers in the the first column indicate the pairs $Eq(k):v^{(1)}_{k}=v^{(1)}_{k+1}$ of the cluster. }
\label{tb:parallel}
\end{table}

\begin{table} \small\vspace{-50pt} 
\begin{tabular}{@{}rrrrl@{}}
\hline
Size 
& Triples & Quadr. & Cases & Examples; the @-summation notation is explained with (\ref{eq:deltan})
\\ \hline 
1 
& --- & --- & 119 & 
@3440;\hspace{0.5pt}@26618;\hspace{0.5pt}@48477;\hspace{0.5pt}@63908;\hspace{0.5pt}@90087;\hspace{0.5pt}@103059;\hspace{0.5pt}@126473;\hspace{0.5pt}@141332;\ldots\\
1 
& 1 & --- & 1 & @80541 \\
2 
& --- & --- & 17 & @192792;\,@200160;\,@262904;\,@278938;\,@295524;\,@354531;\,@362607;\,\ldots \\
3 
& --- & --- & 22 & @192293;\,@224679;\,@232614;\,@256965;\,@319505;\,@343814;\,@379769;\,\ldots \\
4 
& --- & --- & 3 & @307344+2509+16960; @390407+5475+5338; @417153+93327+269 \\
5 
& --- & --- & 2 & @190889+20455+18557+101845; 
@494675+2461+2243+1580+1598 \\
6 
& --- & --- & 6 & @210789; @218380; @402328; @449864+1517; @489425; @513032+118 \\
7 
& --- & --- & 3 & @396919+10934+97+33; @413896+8771+471; @449200+1526+1495 \\
8 
& --- & --- & 1 & @212591+3132+10250+8774+4881+16950+22836+23846 \\
9 
& 1 & --- & 1 & @513286, @290742+19617+23389+179370+127+155+26+119 \\
10 
& --- & --- & 1 & @454843+3121+1179+21+4297+4055+4522+5810+4884+3061 \\
11 
 & --- & --- & 1 & @480823+4897+2271+632+1024+2+2888+5+2071+7+1708 \\
12 
& --- & --- & 2 & @225423+2979+18954+13+14299; @394748+4750+9417+1634+2+3 \\
13 
& --- & --- & 2 & @366534+2751+15215+65+9501; @403514+7934+4+2342+70+2850 \\
13 
& --- & 2 & 2 & @407707+9550, @339000+17927; @457817+4395, @453519+4228+74 \\
14 
& --- & --- & 3 & @475967+4318+3843; @482644+3411+2836; @509825+244+228+102 \\
15 
& --- & --- & 4 & @226645+164+31; @392050+10586; @462603+3916; @493897+2247 \\
15 
& --- & 2 & 3  & @440986+5523, @434758; @458372+4388, @436735; @458376+4388 \\ 
16 
& --- & --- & 1 & @442064+3988+63+1254+2+2+1599+1147+91+2+1610+2+10+1956 \\
18 
& --- & 2 & 1 & @340201+20570, @307727+8351+11333+5111+5+2+3+7653+69 \\
21 
& --- & --- & 2 & @432385+2588+2747+1189+25;\;@432387+5511+1014+28+4068+1565 \\
22 
& --- & --- & 1 & @423839+89559+77+43+22+28+28+41+36+11+39+28+3+21+19 \\
22 
& --- & 2 & 1 & @404780+10024, @332670+18758+3346+2+2+2+17519+42+117+35 \\
23 
& --- & --- & 2 & @238844+17038+3611+2+47+15926; @259293+19366+3824+3+49 \\
24 
& --- & --- & 1 & @400836+9150+31+1590+2+4+6356+2+2606+79+42+4793+61+766 \\
30 
& --- & --- & 4 & 
@360879+14073; @453685+3569+738; @478623+4147; @478625+4147 \\
31 
& --- & --- & 1 & 
@300575+20652+3873+3+60+16410+6330+121+139+171+7894 \\ 
36 
& --- & --- & 1 & 
@259618+22899+307+2+2+59+16814+1485+5476+86+47+134+49 \\
37 
& --- & --- & 3 & 
@245153+17810+3528; @388312+11048+1887+72; @447171+3856+60 \\
40 
& --- & --- & 1 & 
@236681+13614+3186+3501+6+43+8112+3911+6+51+3706+6821 \\ 
45 
& --- & --- & 1 & 
@381432+2937+8733+2143+2+2+39+59+2150+53+2+6156+3242 \\ 
48 
& --- & --- & 1 & 
@507937+284+65+215+77+5+46+157+71+79+197+61+34+4+13 \\ 
49 
& --- & 2 & 1 & 
@464371+5108, @444537+4301+742+4+4+4+4244+18+38+15+2893 \\ 
51 
& --- & --- & 1 & 
@290257+6427+12631+111+4271+74+59+3534+2+2+3826+2316+20 \\ %
55 
& --- & --- & 1 & 
@484802+3052+498+2+7+2163+774+18+11+10+1219+803+7+462 \\ %
71 
& --- & 10 & 1 & 
@416849+5727+1542+1102+2951+1953+970+26+3771+1918 \\ 
74 
& --- & --- & 1 & 
@369227+10605+2364+2281+167+2+2+34+4907+2052+248+2+2 \\ 
75 
& --- & --- & 1 & 
@259229+5888+10169+1870+2372+1728+133+1777+1812+3+1817 \\ %
75 
& --- & 6 & 1 & 
@464367+5108+31655+4+1867+4, @444535+4301+742+4+4+4 \\ 
81 
& --- & 1 & 1 & 
@513826, @134149+59277+13650+622+71691+12959+7588+99851  \\
87 
& --- & 4 & 1 & @495629+2341+87+2064, @475115+139+1859+119+2389+3255+155 \\ 
100 
& --- & --- & 1 & @395769+8963+962+48+1797+4+4+4395+1407+149+3+2+653+154 \\ 
101 
& 1 & 14 & 1 & @506098, @499550+2064+5+1481+277+9+1227+45+4+1120+186+5 \\
102 
& --- & --- & 1 & @305505+20711+4+31+3470+536+12+2+2+6+1470+485+10041 \\ 
107 
& --- & --- & 1 & @242891+607+12968+25+15+6429+211+754+126+2322+4053+3033 \\ 
113 
& --- & --- & 1 & @121526+56251+20114+58922+6929+12393+119347+7434+2087 \\ 
143 
& --- & --- & 1 & @376225+12071+69+2050+42+6827+1917+234+25+1796+1496+221 \\
168 
& --- & --- & 1 & @74633+22323+8290+119+28284+6670+6975+979+6825+12295 \\
258 
& --- & 6 & 1 & @312653+24127+261+8+18387+2907, @219963+12825+642+2526 \\
634 
& --- & --- & 1 & @351345+12959+2574+2111+216+300+174+8+36+2683+3201+2236 \\
\hline
3525:\!\! & 3 & 57 & 232 & :\;the total is summed up with multiplication by the 4th column
 \\ \hline
\end{tabular} 
\caption{Clusters of parallel patterns on the $10\times 10$ torus. 
In the last column, commas separate triples, quartets and pairs (in this order, generally) in the same cluster, 
while semicolons separate clusters. 
The number of parallel pairs is obtained by subtracting the 2nd and the 3rd columns from the first one.
} 
\label{tb:parallel10}
\end{table}

\begin{table} \small
\begin{tabular}{@{}lllclccl@{}}
\hline
The equations & \multicolumn{2}{l}{Number of terms\!} & \multicolumn{2}{l}{Simplifying terms\dotfill}  
& \!\!Obtained & \multicolumn{1}{l}{Gone} & Merging terms, \\
to subtract & \multicolumn{2}{l}{in the equations} & Identical\!\! & Equivalent\! 
& \multicolumn{1}{l}{\!\!terms} & terms & or Remarks
\\ \hline
\#50896+1 & $73\!-\!1$ & $73\!-\!1\;$ & 25 & 44; @210303 & 6 & 67 & @229500,\;@261895 \\ 
\#96995+1 & 60 & 60 & \s 3 & 55 & 4 & 56 & See \S\ref{sec:aequal}  \\ 
@413133 & $53\!-\!1$ & $53\!-\!1$ & 16 & 35; @420038 & $\quad\ 4\!-\!4$ & 53 & See \S\ref{sec:equal}\refpart{iii} \\  
\#324466+278 & 56 & 54 & 11 & 41 & 6 & 50 \\ 
\#222803+1 & $56\!-\!1$ & $56\!-\!1$ & \s 6 & 46; @312653 & 8 & 48 \\ 
@508221 & 48 & 48 & \s 1 & 46 & $\quad\ 2\!-\!2$ & 48 \\ 
\#342065+32477 &  57 & 54 & 10 & 40 & 11 \  & 46 \\ 
\#352611+55377 & 51 & 50 & \s 0 & 48 & 5 & 46 \\ 
\#127303+1 & 51 & 51 & 13 & 34 & 6 & 45 & \#440240,\;@441186 \\ 
\#339364+1 & 49 & 49 & \s 3 & 44 & 4 & 45 \\  
@243498 & 41 & 41 & \s 3 & 37 & $\quad\ 2\!-\!2$ & 41 \\ 
\#224637+1 & 44 & 41 & \s 5 & 34 & 7 & 37 \\ 
\#253539+11 & 44 & 41 & \s 2 & 37 & 7 & 37 \\ 
\#230485+45 & 43 & 41 & \s 2 & 37 &  6 & 37 \\ 
\#293983+1 & 41 & 41 & \s 7 & 32 & 4 & 37 \\ 
\#393996+5 & 40 & 39 & \s 1 & 37 & 3 & 37 \\ 
\#243844+44 & 43 & 44 & \s 2 & 37 & 9 & 35 \\ 
\#268255+518 & 43 & 42 & \s 2 & 36 & 8 & 35 & \#360394 \\ 
\#449521+1 & 39 & 37 & \s 1 & 35 & 4 & 35 \\ 
@402916 & 34 & 34 & \s 8 & 25 & $\quad\ 2\!-\!2$ & 34 \\ 
\#405498+1 & 39 & 40 & \s 6 & 29 & 8 & 32 & @486102 \\ 
\#364679+37711 & 36 & 36 & \s 0 & 34 & 4 & 32 \\ 
\#487622+3130 & $35\!-\!2$ & $34\!-\!2$ & \s 1 & 32; remark: & 3 & 32 & : @495629+2428 \\ 
\#339626+373 & 40 & 37 & \s 9 & 25 & 9 & 31 \\ 
\#211285+1 & 37 & 35 & \s 1 & 32 & 6 & 31 \\ 
\#277050+1 & 37 & 35 & \s 1 & 32 & 6 & 31 \\ 
\#240215+1 & $37\!-\!1$ & $37\!-\!1$ & \s 0 & 34; @337041 & 6 & 31 \\ 
\#213405+1 & $35\!-\!1$ & $35\!-\!1$ & \s 6 & 27; @504613 & 4 & 31 \\ 
\#489303+367 & $35\!-\!2$ & $35\!-\!2$ & \s 1 & 32; remark: & 4 & 31 & : @497970+2151 \\ 
\#355371+28531 & 37 & 36 & \s 8 & 25 & 7 & 30 \\ 
\#138740+172531 & 36 & 33 & \s 1 & 30 & 6 & 30 & @300575 \\ 
\#510125+1 & 33 & 33 & \s 2 & 29 & 3 & 30 & @511356 \\ 
$\#775+4370$ & 43 & 48 & 10 & 26 & 19 \ & 29 \\ 
\#353186+1 & 34 & 34 & \s 2 & 29 & 5 & 29 & @387809 \\ 
\#231183+152362 & 33 & 33 & \s 6 & 25 & 4 & 29 \\ 
\#240806+1 & $34\!-\!1$ & $34\!-\!1$ & \s 0 & 31; @337049 & 6 & 28 \\ 
\#278921+192 & $35\!-\!2$ & $34\!-\!2$ & \s 1 & 30; remark: & 7 & 28 & : @312653+42783 \\ 
@263146 & 28 & 28 & \s 0 & 27 & $\quad\ 2\!-\!2$ & 28 \\ 
\#403976+30634 & 42 & 35 & \s 1 & 30 & 15 \ & 27 \\ 
\#403975+30634 & 41 & 35 & \s 1 & 30 & 14 \ & 27 \\ 
\#243552+98038 & 35 & 34 & \s 2 & 28 & 8 & 27 & @360879 \\ 
\#406247+18455 & 40 & 34 & \s 1 & 29 & 14 \ & 26 \\ 
\#406248+18453 & 39 & 34 & \s 1 & 29 & 13 \ & 26 \\ 
\#436805+756 & 31 & 30 & \s 1 & 26 & 6 & 25 & @434340 \\ 
\#423451+23505 & 31 & 30 & \s 1 & 26 & 6 & 25 & @448869 \\ 
@418249 & 25 & 25 & \s 7 & 16 & $\quad\ 4\!-\!4$ & 25 & See \S\ref{sec:equal}\refpart{iii} \\  
\hline
\end{tabular}
\caption{Simplifications of equations (with at most 340 terms) by subtracting them from each other.
The first column 
identifies two equations to subtract following (\ref{eq:pshifts}) or by a bunch @-id.
Arithmetics the 2nd and 3rd columns 
indicates merging of terms under easy $Eq$-equalities (identified by the bunch @-ids in the 5th or the last columns).
The 4th and 5th columns count simplification of identical or equivalent (under bunch equalities) terms separately.
The 6th column counts the terms after simplifying the equations difference; arithmetics here 
indicates the final bunch equality of the respective eigenvalue components. 
Presence of common but non-simplifying terms causes discrepancy between the maximum of the 2nd and 3rd columns 
and the sum of the 6th and 7th columns; those terms are identified in the last column (if without :\;).
 } 
\label{tb:simplify}
\end{table}

\subsection{Row simplifications on the $10\times10$ torus}
\label{sec:simplified10}

There are many pairs of rows of the transition matrix that are equal in nearly all entries. 
For example, here are two implied eigenvector equations:
\begin{align} \label{eq:simpl2a}
(\lambda+98)\,v_{227062} & = 3v_{214437}+2v_{215722}+3v_{237370}+4v_{237374}, \\
 \label{eq:simpl2b}
(\lambda+100)\,v_{227159} & = 3v_{214437}+2v_{215722}+3v_{237370}+4v_{237374}.
\end{align}
Their difference gives a two-term relation between $v_{227062}$ and  $v_{227159}$.
Similarly, both $(\lambda+81)\,v_{250802}$ and $ (\lambda+\frac{163}2)\,v_{250835}\,$ are equal to
\begin{align}
\textstyle v_{232294}+\frac12v_{242413}+\frac12v_{246789}+\frac12v_{498714}.
\end{align}
The cases of (almost) equal eigenvalue components in Sections \ref{sec:equal} and \ref{sec:aequal} 
provide a bounty of these examples, and the number of possible substantial simplifications increases
when the transition matrix is contracted by collapsing the bunches (with equal eigenvalue components)
into single representatives. The most drastic simplifications are listed in Appendix Table \ref{tb:simplify}. 

Let us refer to the eigenvector equations by the \#-ranks of the patterns represented by the diagonal terms (with $\lambda$). 
Considering differences of equations of length at most 340, there are over 27000 possible reductions to fewer terms.
Comparably, there are over 25000 possible reductions by taking more general linear combinations of two equations.
For example, the equation \#31600 
has the terms
\begin{equation}
12v_{376473}+3v_{364258}+3v_{369850}+3v_{371319}+3v_{387481}+3v_{389673},
\end{equation}
and these terms simplify or merge after combining with the equation
\begin{equation}
(\lambda+87)v_{376473}=2v_{364258}+2v_{369850}+2v_{371319}+2v_{387481}+2v_{389673}.
\end{equation}
The length of the equation \#31600 
then reduces from 335 to 329,  thanks additionally to merging terms of the pair @450298. 
On the other hand, these $>52000$ simplifications cannot be applied all together, 
as they may interfere in targeting the same equations for simplification.
A straightforward selection lead to about 15300+9800 non-interfering simplifications by taking,
respectively, differences and more general linear combinations of two equations.
Most simplifications (about 10000, 2900, 1100, respectively) reduce the number of terms just by 1, 2 or 3.

One may aggressively use (original or derived) two-term equations to eliminate many eigenvector components.
Initially, there are 3625 equations with two terms; 
635 of them correspond to patterns in bunches.
There are 3306 two-term equations after collapsing the bunches.
Additionally, there are 180 simplifications to two-terms like in (\ref{eq:simpl2a})--(\ref{eq:simpl2b}).
Iterative elimination using these and emerging two-term equations leads to 2436 further eliminations.
It could be reasonable to subsequently eliminate using three-term equations (counting 12525 at this stage), etc.

If the transition matrix is contracted by collapsing bunches (with equal eigenvalue components)
into single representatives, its size is reduced by $3525+3+2\cdot57=3642$ in accordance
with the last row counts in Table \ref{tb:parallel10}. A distinctive medium size equation is \#11712.
It has 342 terms. Bunch contraction merges 29 pairs of terms, and additional two terms can be simplified 
by subtracting either \#126663 or \#131244. 
Other interesting equation is \#28040.
Subtracting 28 other equations from it can reduce its length 237 by 77.
The initial number 16734519 of non-zero entries in the full transition matrix
is decreased by the described simplifications as follows:
\begin{itemize}
\item The bunch collapse diminishes by 76377.
\item Ignoring the downstream pattern \#14342 diminishes by 101.
\item The chosen simplifications by 
linear combinations  of two equations (of length at most $340$) diminishes by 39748.
\item Subsequent iterative elimination by using two-term equations diminishes by 90574.
\end{itemize}
The number of non-zero terms is thereby decreased just by 1.24\%. 

The longest equations in (\ref{eq:toplength}) loose many terms thanks to merely collapsing the bunches.
The five densest rows loose then over 3500 non-zero entries each.
Several reductions of over 30000 terms are possible by taking simple linear combinations of the top rows.
For example, the simplified difference of the equations \#3 and \#4 has 466711 terms; compare with (\ref{eq:toplength}).
Further laborious simplification with shorter equations might be even more significant. 
The following observation looks more practical: the seven variables $v_1,\ldots,v_6$ and $v_9$ appear only in the top 11 equations.
As one dense equation can be ignored due to the overdetermination of the eigenvector system,
elimination of these variables leaves just three very dense equations (for iterative computation of Section \ref{sec:laiterate}, say).

The considered simplifications offer some potential for computing the leading eigenvector directly once the eigenvalue is known. 
But this simplification scheme has these substantial downsides:
\begin{itemize}
\item Except for the diagonal entries, the initial matrix for the eigenvector system has rational numbers as entries, mostly integers.
The described simplifications proliferate either the symbolic or high precision real $\lambda_1$ throughout the modified matrix.
\item The 
density contrast between the upper-triangular and lower-triangular parts in the matrix 
may be weakened due to these operations. This would probably diminish the convergence rate of concluding iterations.
\end{itemize}
It could be interesting to investigate how much the observed equalities of eigenvector components or equation simplifications
carry over to the next larger torus, as straightforward pairwise comparison of equations takes increasingly more time.


\small 
\bibliographystyle{alpha}

\end{document}